\newcommand{\be}{\begin{equation}}
\newcommand{\ee}{\end{equation}}
\newcommand{\bea}{\begin{eqnarray}}
\newcommand{\eea}{\end{eqnarray}}
\newcommand*{\equal}{=}
\let\OLDthebibliography\thebibliography
\renewcommand\thebibliography[1]{
  \OLDthebibliography{#1}
  \setlength{\parskip}{0pt}
  \setlength{\itemsep}{0pt plus 0.3ex}
}
\newcommand{\sign}{\,\text{sgn}\,}
\newcommand{\argth}{\,\text{argth}\,}
\newtheorem{theorem}{Theorem}
\newtheorem{prop}{Proposition}
\author[1]{Etienne Granet\textsuperscript{$\star$,}}
\author[2,3,4]{ Jesper Lykke Jacobsen\textsuperscript{$\dag$,}}
\author[2,5]{Hubert Saleur\textsuperscript{$\ddag$,}}
\affil[1]{The Rudolf Peierls Centre for Theoretical Physics, Oxford University, Oxford OX1 3PU, UK  }
\affil[2]{Institut de Physique Th\'eorique, Paris Saclay, CEA, CNRS, 91191 Gif-sur-Yvette, France}
\affil[3]{Laboratoire de Physique de l'\'Ecole Normale Sup\'erieure, ENS, Universit\'e PSL, CNRS, Sorbonne Universit\'e, Universit\'e de Paris, F-75005 Paris, France}
\affil[4]{Sorbonne Universit\'e, \'Ecole Normale Sup\'erieure, CNRS, Laboratoire de Physique (LPENS), 75005 Paris, France}
\affil[5]{USC Physics Department, Los Angeles CA 90089, USA}
\title{\textsc{\large Analytic continuation of Bethe energies \\and application to the thermodynamic limit \\of the $SL(2,\mathbb{C})$ non-compact spin chains}}
\date{${}^\star$ {\small \sf etienne.granet@physics.ox.ac.uk} \quad
${}^\dag$ {\small \sf jesper.jacobsen@ens.fr} \quad
${}^\ddag$ {\small \sf hubert.saleur@cea.fr}}
\begin{document}

\maketitle
\begin{abstract}

We consider the problem of analytically continuing energies computed with the Bethe ansatz, as posed by the study of non-compact integrable spin chains. By introducing an imaginary extensive twist in the Bethe equations, we show that one can expand the analytic continuation of energies in the scaling limit around another 'pseudo-vacuum' sitting at a negative number of Bethe roots, in the same way as around the usual pseudo-vacuum. We show that this method can be used to compute the energy levels of some states of the $SL(2,\mathbb{C})$ integrable spin chain in the infinite-volume limit, and as a proof of principle recover the  ground-state value previously obtained  in \cite{derkachovkorchemsky2} (for the case of spins $s=0,
\bar{s}=-1$) by extrapolating  results in small sizes. These results  represent, as far as we know,  the first (partial) description of the spectrum of $SL(2,\mathbb{C})$ non-compact spin chains in the thermodynamic limit.
\end{abstract}
\tableofcontents
\section{Introduction}

The $SL(2,\mathbb{C})$ non-compact Heisenberg spin chains arose originally in high-energy physics as  model Hamiltonians for   interacting quantum particles in a two-dimensional plane \cite{lipatov} (QCD in the Regge limit). It was quickly realized \cite{lipatov,fadeevkorchemsky} that these spin chains are   integrable analogs of the well-known $su(2)$ and spin-$1/2$ Heisenberg spin chains, where an  \textit{infinite-dimensional} (or \textit{non-compact}) irreducible representation of $SL(2,\mathbb{C})$ sits at each site instead of a finite-dimensional one.

While the arsenal of the Quantum Inverse Scattering Method \cite{bethe,faddeev1,faddeev2,sklyanin1,baxter} is in principle applicable to study these chains, tremendous difficulties  are encountered in practice \cite{derkachovkorchemsky1,korchmeskybepom,derkachovkorchemsky2,derkachovkorchemsky3}. 
Explicit expressions for the eigenstates and energy levels exist only for two sites, and even the determination of the ground-state energies for larger but still small sizes is notably complicated \cite{fadeevkorchemsky,derkachovkorchemsky2}. The infinite dimension of the Hilbert space is clearly a serious obstacle, since one cannot diagonalize numerically the Hamiltonian in small sizes. This is in contrast with  finite-dimensional (compact)  integrable spin chains where the  Bethe ansatz not only allows one to efficiently track and compute energy levels from small to very large system sizes, but also provides a description of the energy levels in the infinite-size limit with Bethe-root densities and Thermodynamic Bethe Ansatz (TBA). Consequently, although the continuum limit of compact integrable $su(2)$ spin chains is well-understood in terms of Wess-Zumino-Witten (WZW) models \cite{affleckwzw}, close to nothing is known on the continuum limit of their non-compact $SL(2,\mathbb{C})$ cousins.

The objective of this paper is to provide a (partial) description of the energy levels of the non-compact $SL(2,\mathbb{C})$ spin chain in the thermodynamic limit. Our approach relies on the analytic continuation of energies computed with the Bethe ansatz in the thermodynamic limit, to a {\sl negative number of Bethe roots}, that is performed by the introduction of an imaginary extensive twist. In the same way that energies can be expanded in convergent series around the pseudo-vacuum defined by the absence of Bethe roots \cite{granetjacobsensaleurxxz}, the analytic continuation of energies is found to be expandable as well around another 'pseudo-vacuum' sitting at an extensive negative number of Bethe roots. We explain that it permits to obtain convergent series for the energy levels of a certain (but large) class of states in the $SL(2,\mathbb{C})$ spin chain. As a proof of principle, we recover the value of the ground state previously deduced from finite-size extrapolation \cite{derkachovkorchemsky2} in the case (see below) $s=0,\bar{s}=-1$, this state being identified here as being of minimal energy with respect to particle-hole excitations.

We note that non-compact spin chains, although relevant in and originating from the context of high-energy physics, Yang-Mills theories and AdS-CFT correspondence \cite{reviewadscft,kotikov,integrabilityqcd,alfimov,gromovfishnet}, also play a role in quantum and statistical physics. Some finite-dimensional statistical mechanics models---such as the alternating six-vertex model, the antiferromagnetic Potts model, or certain loop models---are described by non-compact field theories \cite{esslerfrahmsaleur,esslerfrahmsaleur2,ikhlefnoncompact,ikhlef2,alternating,vernier1,vernier2,vernier3,robertson}, while other models are genuine infinite-dimensional spin chains or lattice models, such as the quantum Toda chain \cite{toda,gaudin}, the Chalker-Coddington model \cite{chalker} or stochastic particle processes \cite{frassek}.
Some of the models in the latter class can in turn be investigated approximately through a series of finite-dimensional truncations \cite{ikhleffendleycardy,couvreurvernierjacobsensaleur}.

The paper is organized as follows. In Section \ref{S1}, we recall some properties of compact integrable spin chains with $su(2)$ symmetry, and present the $SL(2,\mathbb{R})$ and $SL(2,\mathbb{C})$ spin chains as well as a review of their known properties relevant to our discussion. In Section \ref{analy} we study the Bethe equations for the $s=-1$ Heisenberg spin chain when an imaginary extensive twist  $\varphi$ is included, and show that one can write a large class of energy levels as convergent series in $e^{-2\varphi}$ for $\varphi\to\infty$. In Section \ref{specialroot} we exhibit a special state in the spectrum whose energy (as well as its derivatives) can be exceptionally continued analytically. In Section \ref{explore} we explain that this special  state plays the role of another `pseudo-vacuum', i.e., that we can obtain from it series expansions for other eigenenergies in the spectrum.

\section{A reminder on spin chains with $su(2)$ symmetry \label{S1}}
\subsection{Compact $su(2)$ spin chains}
We start with some reminders on spin chains with $su(2)$ symmetry that are `compact', i.e., whose on-site Hilbert space is finite-dimensional.

We consider a Hamiltonian $\mathcal{H}_L$ for $L$ particles that acts on a tensor product $V^{\otimes L}$ of $L$ copies of a vector space $V$. We recall that $\mathcal{H}_L$ is integrable if it is built from an $R$-matrix that satisfies the Yang-Baxter equation \cite{faddeev1,faddeev2,sklyanin1,baxter}
\begin{equation}
\label{yb}
R_{12}(\lambda-\mu)R_{13}(\lambda)R_{23}(\mu)=R_{23}(\mu)R_{13}(\lambda)R_{12}(\lambda-\mu) \,,
\end{equation}
and we refer to \cite{slavnov} for the details of this construction. $su(2)$-invariant solutions to \eqref{yb} are known for the situation where each $V$ is an irreducible representation of $su(2)$ \cite{kulishyu,tarasovtakhtajanfaddeev}. These representations are necessarily of spin $s$, with $s$ being integer or half-integer, and thus  of finite dimension $2s+1$. The solutions read explicitly \cite{kulishyu,tarasovtakhtajanfaddeev}
\begin{equation}
\label{rmat}
R(\lambda)=\sum_{l=0}^{2s}\prod_{k=l+1}^{2s}\frac{\lambda-ik}{\lambda+ik}\prod_{j=0,\neq l}^{2s}\frac{\pmb{\sigma}-x_j}{x_l-x_j} \,,
\end{equation}
with $x_j=j(j+1)/2-s(s+1)$ and $\pmb{\sigma}=\sum_{\alpha=x,y,z}S^\alpha\otimes S^\alpha$, where $S^x,S^y,S^z$ act in the spin-$s$ irreducible representation (irrep). The Hamiltonian reads then
\begin{equation}
\mathcal{H}_L=\sum_{i=1}^L f(\pmb{\sigma}_{i,i+1}) \,,
\end{equation}
with $\pmb{\sigma}_{i,i+1}=\sum_{\alpha=x,y,z}S^\alpha_i\otimes S^\alpha_{i+1}$, $S^\alpha_i$ being a copy of $S^\alpha$ at site $i$ with periodic boundary conditions (we identify the sites $L+1\equiv 1$), and \footnote{We put a minus sign compared to \cite{tarasovtakhtajanfaddeev} in order for the ground state to be anti-ferromagnetic.}
\begin{equation}
f(x)=-2\sum_{l=0}^{2s}\sum_{k=l+1}^{2s}\frac{1}{k}\prod_{j=0,\neq l}^{2s}\frac{x-x_j}{x_l-x_j} \,.
\end{equation}
For example, the case $s=1/2$ of these formulae gives the well-known spin-$1/2$ Heisenberg XXX spin chain \cite{heisenberg} with Hamiltonian
\begin{equation}
\mathcal{H}_L=2\sum_{i=1}^L \left(S_i^xS_{i+1}^x+S_i^yS_{i+1}^y+S_i^zS_{i+1}^z-\tfrac{1}{4}\right) \,.
\end{equation}
These spin chains are all solvable by the algebraic Bethe ansatz (ABA) \cite{bethe,tarasovtakhtajanfaddeev}. Their energy levels read
\begin{equation}
\label{efin}
E=-\sum_{k=1}^N \frac{2s}{\lambda_k^2+s^2} \,,
\end{equation}
where $\lambda_1,\ldots,\lambda_N$ is an (admissible \cite{bethe,granetjacobsen}) solution to the Bethe equations
\begin{equation}
\label{befin}
\left(\frac{\lambda_k+is}{\lambda_k-is}\right)^L=\prod_{l=1,\neq k}^N\frac{\lambda_k-\lambda_l+i}{\lambda_k-\lambda_l-i} \,.
\end{equation}
Since the Hamiltonian commutes with the generators of $su(2)$, the eigenspaces can be decomposed into spin-$u$ irreps of $su(2)$, with $u$ a positive integer or half-integer. More precisely, $u$ is the value of $\sum_{i=1}^L S^z_i$ on the highest-weight state of this irrep and is related to $N$ through $u=sL-N$, where $N$ is the number of Bethe roots.

\subsection{Non-compact $SL(2,\mathbb{R})$ spin chains}
Since all the irreps of $su(2)$ are finite-dimensional, one has to consider representations of more general groups in order to obtain `non-compact' spin chains. The Lie group $SL(2,\mathbb{R})$ whose Lie algebra is $su(2)$ provides the simplest examples of infinite-dimensional irreps.

Among these are the continuous series representations, labelled by a real spin $s \in \mathbb{R}$, and the discrete series representations, labelled by a spin $s=0,-1/2,-1,-3/2,\ldots$ taking non-positive integer or half-integer values%
\footnote{The most common -- but not systematic -- convention in the literature has been to take the opposite sign for the spin $s=-s'$, with $s'$ the spin in their convention, yielding a Casimir $\propto s'(1-s')$, and as a result the left-hand side of the Bethe equations \eqref{befin} would be inverted. We prefer to stick to the convention where the Casimir is $\propto s(1+s)$ and \eqref{befin} unchanged.} (the case $s=0$ being obtained as a `limit') \cite{gelfand,bargmann,gelfandbook}. In both cases the generators  can be realized with differential operators
\begin{equation}
\label{gensl2r}
S^+\equiv S^x+iS^y=z^2\partial_z-2sz\,,\qquad S^-\equiv S^x-iS^y=-\partial_z\,,\qquad S^z=z\partial_z-s
\end{equation}
that verify the usual relations
\begin{equation}
[S^+,S^-]=2S^z\,,\qquad [S^z,S^\pm]=\pm S^\pm \,.
\end{equation}

The space on which these generators  act has been sometimes considered to be the space of polynomials, although it lacks a Hilbert space structure \cite{derkachovbaxter}. A proper choice of a Hilbert space is the set of analytic functions on the upper half-plane---or, up to a conformal transformation, on the unit disk \cite{riemann}---with a precise scalar product \cite{derkachovseparation,gelfandbook}.

\medskip

The construction of an $R$-matrix for the foregoing  values of $s$ requires the continuation of \eqref{rmat} to any real $s$. It can be rewritten as \cite{kulishyu,tarasovtakhtajanfaddeev}
\begin{equation}
R(\lambda)=\frac{\Gamma(i\lambda-2s)\Gamma(i\lambda+2s+1)}{\Gamma(i\lambda-\pmb{J})\Gamma(i\lambda+\pmb{J}+1)} \,,
\end{equation}
where $\pmb{J}$ satisfies
\begin{equation}
\pmb{J}(\pmb{J}+1)=2\pmb{\sigma}+2s(s+1) \,.
\end{equation}
As for the function $f(x)$, it can be rewritten as
\begin{equation}
f(\pmb{\sigma})=\psi(\pmb{J}+1)+\psi(-\pmb{J})-\psi(2s+1)-\psi(-2s)\,,
\end{equation}
with $\psi(x)=\Gamma'(x)/\Gamma(x)$.

The Hamiltonians thus defined
\begin{equation}
\mathcal{H}_L=\sum_{i=1}^L (\psi(\pmb{J}_{i,i+1}+1)+\psi(-\pmb{J}_{i,i+1})-\psi(2s+1)-\psi(-2s)) \,,
\end{equation}
where $\pmb{J}$ satisfies
\begin{equation}
\pmb{J}_{i,i+1}(\pmb{J}_{i,i+1}+1)=2\pmb{\sigma}_{i,i+1}+2s(s+1) \,,
\end{equation}
are called the non-compact $SL(2,\mathbb{R})$ Heisenberg spin chain of spin $s$ \footnote{There is minus sign difference with the definition of the Hamiltonian sometimes encountered in the literature \cite{fadeevkorchemsky,korchmeskybepom}, but in these papers the state with \textit{maximal} energy was studied.}. Since $s$ appears only as a parameter (albeit a crucial one) in this case---in contrast with the $su(2)$ case where the spin determines the dimension of the space---these chains are sometimes referred to generically as `the'  $SL(2,\mathbb{R})$ spin chain. The same remark applies to the $SL(2,\mathbb{C})$ spin chains below.
We note that this spin chain emerges in a QCD context in high-energy physics \cite{braunderkachovmanashov,braunderkachovmanashov2,gorsky,belitsky1,belitsky2,belitsky3}. 

\medskip

The function $\Omega(z_1,\ldots,z_L)=z_1^{2s} \cdots z_L^{2s}$ is a heighest-weight state, i.e., it satisfies $S^+_i \Omega=0$ and $\sum_{i=1}^L S^z_i \Omega=sL \Omega$, and is an eigenstate of the Hamiltonian\footnote{As it stands, it is actually not normalizable with respect to the scalar product considered. However, one can take a kind of Fourier transform of $\Omega(z_1-w,...,z_L-w)$, seen as a function of $w$, and obtain a normalizable vector with the same properties. We refer the reader to section 2.1 and appendix B of \cite{derkachovseparation} for the details of this construction.}. The ABA can then be applied to obtain eigenstates with $\Omega$ acting as pseudo-vacuum \cite{fadeevkorchemsky,derkachovbaxter,derkachovseparation}. The  expression of energy levels and the Bethe equations are exactly the same as in the finite-dimensional case, viz.\ \eqref{efin} and \eqref{befin} for $s\neq 0$, and with
\begin{equation}
 u=Ls-N \label{uLN}
\end{equation}
being the spin of the representation to which the state belongs, where $N$ denotes the number of Bethe roots. However, since $s$ is negative the structure of the Bethe roots changes dramatically \cite{korchmeskyqqcd,korchemskyduality,beisert,korchmeskylog,haokharzaeev}. Moreover, since $N$ has to be obviously a non-negative integer, the ABA  construction can only provide eigenstates for which $Ls-u$ is a non-negative integer, and continuous series representations for a real arbitrary $u$ cannot be obtained directly this way \cite{manashovkirch}.

\medskip

Let us now comment on the special case $s=0$. In this case the function $\Omega(z_1,\ldots,z_L)=z_1^{2s} \cdots z_L^{2s}$ is both heighest-weight and lowest-weight state and we cannot use it as a pseduo-vacuum. However, as shown in \cite{fadeevkorchemsky}, there is actually a one-to-one correspondence between the transfer matrices of the spin $s=0$ and $s=-1$ models. For each eigenstate $\hat{\varphi}(z_1,\ldots,z_L)$ of the $s=-1$ model the function $\varphi(z_1,\ldots,z_L)=(z_1-z_2)(z_2-z_3)\cdots(z_L-z_1)\hat{\varphi}(z_1,\ldots,z_L)$ is an eigenstate of the spin $s=0$ model. Consequently, the energies of the $s=0$ Hamiltonian read \cite{fadeevkorchemsky}
\begin{equation}
\label{efins0}
E=2L+\sum_{k=1}^N \frac{2}{\lambda_k^2+1} \,,
\end{equation}
where $\lambda_1,\ldots,\lambda_N$ satisfy the $s=-1$ Bethe equations
\begin{equation}
\label{befins0}
\left(\frac{\lambda_k-i}{\lambda_k+i}\right)^L=\prod_{l=1,\neq k}^N\frac{\lambda_k-\lambda_l+i}{\lambda_k-\lambda_l-i}\,,
\end{equation}
and $u$, the spin of the eigenstate, is related to $N$ through $u=-L-N$ \cite{fadeevkorchemsky}. We note that (\ref{befins0})  can be exactly interpreted as a set of  spin $s=0$ equations for which $L$ roots are imposed to be degenerate and equal to $0$.

\subsection{Non-compact $SL(2,\mathbb{C})$ spin chains}
Another Lie group whose Lie algebra is $su(2)$ and which has infinite-dimensional irreps is $SL(2,\mathbb{C})$, the universal cover of the Lorentz group. This is the case that we study in this article.

The only unitary irreps of $SL(2,\mathbb{C})$ are infinite-dimensional \cite{dirac} and are labelled by two complex numbers $s,\bar{s}$ that satisfy \cite{gelfandbook}
\begin{subequations}
\begin{eqnarray}
 s+\bar{s}^*+1 &=& 0 \,, \label{s-sbar} \\
 2(s-\bar{s}) &\in& \mathbb{Z} \,.
\end{eqnarray}
\end{subequations}
The six generators of $SL(2,\mathbb{C})$ can be represented by \eqref{gensl2r} and
\begin{equation}
\bar{S}^+\equiv \bar{S}^x+i\bar{S}^y=\bar{z}^2\partial_{\bar{z}}-2\bar{s}\bar{z}\,,\qquad \bar{S}^-\equiv \bar{S}^x-i\bar{S}^y=-\partial_{\bar{z}}\,,\qquad \bar{S}^z=\bar{z}\partial_{\bar{z}}-\bar{s} \,.
\end{equation}

The Hamiltonian of the non-compact $SL(2,\mathbb{C})$ spin chain is then given by two copies of that of the $SL(2,\mathbb{R})$ spin chain
\begin{equation}
\begin{aligned}
\mathcal{H}_L=&\sum_{i=1}^L (\psi(\pmb{J}_{i,i+1}+1)+\psi(-\pmb{J}_{i,i+1})-\psi(2s+1)-\psi(-2s))\\
&+\sum_{i=1}^L (\psi(\bar{\pmb{J}}_{i,i+1}+1)+\psi(-\bar{\pmb{J}}_{i,i+1})-\psi(2\bar{s}+1)-\psi(-2\bar{s}))\,,
\end{aligned}
\end{equation}
where $\bar{\pmb{J}}$ satisfies
\begin{equation}
\bar{\pmb{J}}_{i,i+1}(\bar{\pmb{J}}_{i,i+1}+1)=2\bar{\pmb{\sigma}}_{i,i+1}+2\bar{s}(\bar{s}+1) \,.
\end{equation}
The Hamiltonian is Hermitian \cite{derkachovkorchemsky1} and its two holomorphic and anti-holomorphic components (the two $SL(2,\mathbb{R})$ spin-chain copies) commute.
The case $(s,\bar{s})=(0,-1)$ has been particularly studied because of its relation with QCD, from which the model actually originates \cite{lipatov,fadeevkorchemsky} \footnote{We note that $\psi(-2s)+\psi(2\bar{s}+1)$ is not divergent when $(s,\bar{s})\to (0,-1)$, although each individual term is.}. This is the case that we will consider as well.

\medskip

Although the Hamiltonian of the non-compact $SL(2,\mathbb{C})$ spin chain is expressed as a sum of two commuting Hamiltonians to which one can apply the ABA separately to find eigenstates, this latter property does not hold for the total Hamiltonian. This can be understood as follows. Since the Hamiltonian is $SL(2,\mathbb{C})$-invariant and Hermitian, its eigenspaces can be decomposed into unitary irreps of $SL(2,\mathbb{C})$, and  labelled by two complex numbers $(u,\bar{u})$ satisfying\footnote{The notation always adopted in the references is to define $h=-u$, $\bar{h}=-\bar{u}$. We decided to change the notation in order to keep the same sign as in the compact case, and also because in the CFT context $h$ is used to denote the conformal weights of the operators, that we plan to study in another piece of work.}
\begin{subequations}
\begin{eqnarray}
 u+\bar{u}^*+1 &=& 0 \,, \label{u-ubar} \\
 2(u-\bar{u}) &\in& \mathbb{Z} \,.
\end{eqnarray}
\end{subequations}
Here, $u$ is the value of $\sum_{i=1}^L S^z_i$ on the highest-weight state of this representation, and $\bar{u}$ is the value of $\sum_{i=1}^L \bar{S}^z_i$. Since the holomorphic and anti-holomorphic generators of $SL(2,\mathbb{C})$ commute, this highest-weight state  also has to be an eigenstate of the separate $SL(2,\mathbb{R})$ Hamiltonians with spin $u$ and $\bar{u}$. Such an eigenstate can be constructed with the ABA only if $Ls-u$ and $L\bar{s}-\bar{u}$ are non-negative integers. Because of the relations \eqref{s-sbar} and \eqref{u-ubar}, these two constraints can never be satisfied simultaneously. Hence no eigenstate of the $SL(2,\mathbb{C})$ spin chain can be built with the ABA.

\medskip

The original attempts to work around this problem was based on the idea of rewriting the Bethe equations in such a way that $u$ can take any real value \cite{fadeevkorchemsky}. It is known \cite{baxter} that the Bethe equations \eqref{befin} can be recast into so-called $TQ$ relations
\begin{equation}
\label{tq}
T(\lambda)Q(\lambda)=(\lambda-is)^LQ(\lambda+i)+(\lambda+is)^LQ(\lambda-i) \,,
\end{equation}
where $T(\lambda)$ is a polynomial of degree $L$ and $Q(\lambda)$ a polynomial of degree $N$. In the case $L=2$ and $s=-1$, by inspecting the coefficients of $\lambda^{N+2}, \lambda^{N+1}, \lambda^N$ in \eqref{tq}, one has to have $T(\lambda)=2\lambda^2-(N+2)(N+1)=2\lambda^2-u(u+1)$, so that for an arbitrary $u$, \eqref{tq} with this value of $T(\lambda)$ can be seen as a functional equation on $Q(\lambda)$ (that needs not be a polynomial anymore). This equation---and thus, the problem---can then be solved in size $L=2$ \cite{fadeevkorchemsky}. It was   shown later  that the corresponding eigenstate can be obtained more directly \cite{derkachovkorchemsky1,derkachovkorchemsky2,derkachovkorchemsky3}. 

The case $L=2$ is however a bit special since in that case  the sole value of $u$ directly fixes the state and $T(\lambda)$, which can be seen from the fact that \eqref{befin} at $s=-1$ and $L=2$ has only one solution for each value of $N$, as follows from \eqref{range} hereafter. For $L\geq 3$ this is not true anymore, and additional  conserved charges (other than the spin  $u$) are needed to label the states. Considerable work has been  focused on obtaining the ground state of the model for higher values of $L$ \cite{braunderkachovmanashov,devegalipatov,janik2,janik}, up to $L=8$ \cite{derkachovkorchemsky2}. From these values it was conjectured that the ground-state energy goes to $0$ for $L\to\infty$ \cite{derkachovkorchemsky2}.

\medskip

We can now state the ideas of this paper. Although one cannot use the ABA to build the eigenstates, the fact that the Hamiltonian is a sum of two commuting $SL(2,\mathbb{R})$ Hamiltonians implies that an $SL(2,\mathbb{C})$ energy level at $(u,\bar{u})$ is necessarily a sum of two $SL(2,\mathbb{R})$ energies at $u$ and $1-u^*$, and obtaining both requires continuing the solutions of Bethe equations to a negative number of Bethe roots. Instead of analytically continuing the Bethe or $TQ$ relations in finite-size to reach arbitrary real values of $u$, we perform an analytic continuation of the Bethe equations {\sl directly in the thermodynamic limit}. This is done by introducing an imaginary extensive twist $\varphi$ in \eqref{befin}, which permits us to expand the energy levels in $e^{-2\varphi}$, yielding an expansion `dual' to that of \cite{granetjacobsensaleurxxz} (where the magnetization $m$ was used as an expansion parameter). We obtain that the energy levels can be expanded around another 'pseudo-vacuum' so as to reach other states in the spectrum in the thermodynamic limit. As a proof of principle, we recover in this paper the thermodynamic ground-state value previously obtained by extrapolating the ground state from small sizes \cite{derkachovkorchemsky2}. In our case, this ground state is identified by being of minimal energy with respect to a certain (but large) class of particle-hole excitations. Our approach provides, as far as we know,  the first description of the $SL(2,\mathbb{C})$ non-compact spin chain in the thermodynamic limit. Our study also reveals new insights on the analytic continuation of the energies in Bethe-ansatz solvable models.

\section{Bethe equations with an imaginary extensive twist \label{analy}}
\subsection{Generalities}
In a nutshell, our goal is to perform the analytic continuation of the energies of the spin $s=-1$ chain
\begin{equation}
\label{eqen}
e\equiv \frac{E}{L}=\frac{1}{L}\sum_{k=1}^N \frac{2}{\lambda_k^2+1} \,,
\end{equation}
where the $\lambda_k$ satisfy the Bethe equations
\begin{equation}
\label{befinm1}
\left(\frac{\lambda_k-i}{\lambda_k+i}\right)^L=\prod_{l=1,\neq k}^N\frac{\lambda_k-\lambda_l+i}{\lambda_k-\lambda_l-i} \,,
\end{equation}
to any real (including negative) values of
\begin{equation} \label{m-def}
m=\frac{N}{L}
\end{equation}
in the thermodynamic limit $L\to\infty$.  Once some energies (per site) of the $SL(2,\mathbb{R})$ spin chain at a given $m$ in the thermodynamic limit, denoted here $e_i(m)$, are identified, one obtains an energy level $\mathcal{E}_{i,j}(m)$ of the $SL(2,\mathbb{C})$ spin chain as
\begin{equation}
\label{mathcalf}
\mathcal{E}_{i,j}(m)=2+e_i(m)+e_j(-2-m) \,,
\end{equation}
with possibly some constraints on $i,j$. Indeed, $e_i(m)$ is the energy corresponding to the sub-$SL(2,\mathbb{R})$ Hamiltonian with $s=-1$ at magnetization $m$, whereas the other $\bar{s}=0$ sub-$SL(2,\mathbb{R})$ Hamiltonian is then at magnetization $m'=-2-m$ in the thermodynamic limit, since $\bar{u}=-1-u$ from \eqref{u-ubar} with $u=-L-mL$ and $\bar{u}=-L-m'L$ from \eqref{uLN}. The state of the latter has thus a intensive energy $2+e_j(-2-m)$ because of \eqref{efins0}.

The writing \eqref{mathcalf} emphasizes that there is not necessarily the same state for the two sub-$SL(2,\mathbb{R})$ spin chains.

Due to the fact that $\mathcal{E}_{i,i}(m)$ in \eqref{mathcalf} has an extremum at $m=-1$, we will look for the ground state at $m=-1$. This is in agreement with the fact that in finite size $L$ (for example $L=2$) the ground state is at $u=-1/2$ \cite{derkachovkorchemsky2}, meaning that $u=\bar{u}$ and hence $m=m'$ in that case as well.

\subsection{Structure of the solutions at zero twist}

One can rewrite the Bethe equations \eqref{befinm1} in the following form by taking their logarithm
\begin{equation}
\label{belog}
\frac{1}{\pi}\arctan \lambda_k=\frac{I_k}{L}-\frac{1}{\pi L}\sum_{l=1}^N \arctan(\lambda_k-\lambda_l) \,,
\end{equation}
where the $I_k$ (for $L$ even: integer if $N$ is odd, and half-integer if $N$ is even) are  called Bethe numbers. These (half)-integers emerge from $\log(zz')=\log z+\log z'+2i\pi n$ with $n=-1,0,1$, valid for $z,z'$ two non-zero complex numbers. These Bethe equations have been extensively studied previously: we give here only the properties that will be of importance to our discussion. We will consider $L$ even only.

\medskip

\begin{prop} \label{thm01}
The solutions to \eqref{belog} with $\lambda_k\neq \lambda_l$ if $k\neq l$, are all real and characterized by the choice of $N$ distinct (half-)integers $I_k$ satisfying 
\begin{equation}
\label{range}
-\frac{L+N-1}{2}< I_k < \frac{L+N-1}{2}
\end{equation}
\end{prop}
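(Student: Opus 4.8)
The plan is to establish three things about the solutions of the logarithmic Bethe equations \eqref{belog}: (i) that all roots are real, (ii) that distinct roots correspond to distinct Bethe numbers $I_k$ and the map from root configurations to $\{I_k\}$ is injective, and (iii) that the admissible Bethe numbers are exactly those satisfying \eqref{range}. I would treat these in this order, since reality is what makes the counting argument clean.

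For reality, I would first observe that the $s=-1$ Bethe equations \eqref{befinm1} differ from the usual $s=+1$ (compact spin-$1$) equations only by the inversion of the left-hand side, i.e. the replacement $\lambda_k \mapsto$ the reciprocal kernel. The standard argument for reality of ground-state-type solutions is to suppose a root $\lambda_k$ has strictly positive imaginary part, pick the root of maximal imaginary part, and compare the modulus of both sides of \eqref{befinm1}. With $s=-1$ the factor $|(\lambda_k-i)/(\lambda_k+i)|$ now \emph{exceeds} $1$ for $\mathrm{Im}\,\lambda_k>0$ (the opposite of the compact case), so the $L$-th power blows up, while the right-hand product over the difference kernels is bounded by $1$ for the maximal-imaginary-part root; taking $L$ large forces a contradiction. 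I expect one must be slightly careful because this inequality is the reverse of the compact situation, so the role of ``maximal'' versus ``minimal'' imaginary part may swap, and one should phrase the estimate symmetrically (pair $\lambda_k$ with its complex conjugate, which is also a root since the equations have real coefficients) to rule out complex pairs cleanly.

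Granting reality, the injectivity and counting follow the classical monotonicity method. Define the counting function
\begin{equation}
\label{counting}
z(\lambda)=\frac{1}{\pi}\arctan\lambda+\frac{1}{\pi L}\sum_{l=1}^N\arctan(\lambda-\lambda_l)\,,
\end{equation}
so that \eqref{belog} reads $z(\lambda_k)=I_k/L$. I would compute $z'(\lambda)=\tfrac{1}{\pi}\big[\tfrac{1}{1+\lambda^2}+\tfrac1L\sum_l\tfrac{1}{1+(\lambda-\lambda_l)^2}\big]>0$, which is manifestly positive for all real $\lambda$. Strict monotonicity immediately gives that distinct real roots yield distinct values $I_k$, proving the characterization by distinct half-integers, and that the root configuration is determined by the choice of $\{I_k\}$. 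For the range, I would evaluate the limiting values $L\,z(\pm\infty)=\pm\tfrac{L}{2}\mp\tfrac{N}{2}\cdot(-1)$: since $\arctan(\pm\infty)=\pm\pi/2$, one gets $L\,z(+\infty)=\tfrac{L+N}{2}$ and $L\,z(-\infty)=-\tfrac{L+N}{2}$. Because $z$ is strictly increasing and continuous with these finite limits, the attainable values $LI_k$ lie strictly inside the open interval $\big(-\tfrac{L+N}{2},\tfrac{L+N}{2}\big)$; the shift to $\pm\tfrac{L+N-1}{2}$ in \eqref{range} then comes from the (half-)integer quantization, namely that the extreme admissible half-integer sits a half-step inside the endpoint. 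I would verify the endpoint bookkeeping by matching the parity convention stated before the proposition ($I_k$ integer for $N$ odd, half-integer for $N$ even).

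The main obstacle, I expect, is the reality proof rather than the counting. The counting is a textbook consequence of monotonicity once reality is in hand, but the maximal-modulus argument for reality is delicate here because of the inverted kernel: one must confirm that no bound states (strings) with nonzero imaginary part can form for admissible configurations, and that the $L\to\infty$ dominance estimate is uniform. A secondary subtlety is whether \emph{every} choice of distinct half-integers in \eqref{range} actually yields a genuine real solution (existence/surjectivity), as opposed to merely constraining those that exist; I would address this by a standard degree-theory or continuity argument, deforming from a reference configuration and invoking strict monotonicity of the Jacobian $\det\partial(z(\lambda_k))/\partial\lambda_l$, which is positive definite by the same computation as $z'>0$.
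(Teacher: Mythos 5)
Your overall architecture (reality, then monotonicity, then counting) matches the paper's, but two of your three steps contain genuine errors. In the reality argument both of your key inequalities are reversed: for $\Im\lambda>0$ one has $\left|\frac{\lambda-i}{\lambda+i}\right|<1$ (the point is closer to $+i$ than to $-i$), not $>1$ as you claim; and for the root $\lambda_+$ of \emph{maximal} imaginary part every factor on the right of \eqref{befinm1} satisfies $\left|\frac{\lambda_+-\lambda_l+i}{\lambda_+-\lambda_l-i}\right|\geq 1$, so the product is $\geq 1$, not ``bounded by $1$''. With the signs corrected the contradiction is immediate at \emph{any} finite $L$: the right-hand side having modulus $\geq 1$ forces $\left|\frac{\lambda_+-i}{\lambda_++i}\right|\geq 1$, i.e.\ $\Im\lambda_+\leq 0$, and the mirror argument with the root $\lambda_-$ of minimal imaginary part gives $\Im\lambda_-\geq 0$, hence all roots real. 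Your appeal to ``taking $L$ large'' is therefore both unnecessary and inappropriate—the proposition is a statement at fixed finite $L$—and the conjugation-pairing detour is not needed (self-conjugacy of a given root set is in any case not automatic). This corrected argument is exactly the paper's, borrowed from the repulsive Lieb--Liniger analysis.

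The second gap is in the range: passing through the limits $z(\pm\infty)$ counts all $N$ arctan terms at $\pm\pi/2$ and so yields only $|I_k|<\tfrac{L+N}{2}$; your quantization patch then gives the \emph{non-strict} bound $|I_k|\leq\tfrac{L+N-1}{2}$, which does not exclude $I_k=\pm\tfrac{L+N-1}{2}$—quantization-allowed values, since for $L$ even $\tfrac{L+N-1}{2}$ is an integer when $N$ is odd and a half-integer when $N$ is even. You thus fail to prove the strict inequality \eqref{range}, and your count of admissible Bethe numbers is off by two, which matters for state counting. The paper obtains the strict bound by applying $|\arctan x|<\pi/2$ in \eqref{belog} \emph{at the root $\lambda_k$ itself}, where the diagonal term $\arctan(\lambda_k-\lambda_k)=0$ drops out: only $N-1$ off-diagonal terms contribute, giving $|I_k|<\tfrac{L}{2}+\tfrac{N-1}{2}$ directly—the $-1$ comes from the vanishing self-interaction, not from the half-integer lattice. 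Finally, on existence for each admissible choice of $\{I_k\}$: the paper does not use degree theory but introduces the Yang--Yang-type action $M(\lambda_1,\ldots,\lambda_N)$ whose stationarity conditions are \eqref{belog} and shows its Hessian is positive definite, so $M$ is strictly convex with a unique minimum; your positive-definite-Jacobian remark is the germ of this but, as stated, does not deliver existence. Only your monotonicity argument for the correspondence between distinct roots and distinct Bethe numbers coincides with the paper's proof as written.
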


\begin{proof}

Let us show first that the equations \eqref{befinm1} only have real solutions. The proof is identical to that of the same property for the repulsive Lieb-Liniger model \cite{korepin,haokharzaeev}, and can be formulated as follows. Let us denote $\lambda_+$ the root with the largest imaginary part. The differences $\lambda_+-\lambda_l$ thus always have a positive or zero imaginary part. Since $|\frac{\lambda+i}{\lambda-i}|\geq 1$ if and only if $\Im \lambda\geq 0$, we deduce from \eqref{befinm1} for $\lambda_k=\lambda_+$ that $|\frac{\lambda_+-i}{\lambda_++i}|\geq 1$. From the same inequality one infers thus that $\Im \lambda_+\leq 0$, which means that the imaginary part of all the roots are negative or zero. Doing the same reasoning with $\lambda_-$ the root with the smallest imaginary part, one infers that the imaginary part of all the roots are positive or zero. Hence all the roots are real.

 Now, using $| \! \arctan x|<\pi/2$ in \eqref{belog}, one directly obtains \eqref{range}.

To show that  with this constraint \eqref{range} a solution to \eqref{belog} does exist and is unique, we follow again \cite{korepin,haokharzaeev} and introduce
\begin{equation}
M(\lambda_1,\ldots,\lambda_N)=\frac{1}{\pi}\sum_{k=1}^N A( \lambda_k)-\frac{1}{L}\sum_{k=1}^N \lambda_k I_k+\frac{1}{2\pi L}\sum_{k,l}A(\lambda_k-\lambda_l)\,,
\end{equation}
where $A(x)$ is the primitive of $\arctan(x)$ that vanishes at $0$. The Bethe equations \eqref{belog} are exactly the stationary conditions $\partial_{\lambda_k}M(\lambda_1,\ldots,\lambda_N)=0$ necessary for $M$ to be minimal at $\lambda_1,\ldots,\lambda_N$. To prove that this minimum exists and is unique, we show that $M$ is strictly convex. To that end, we consider $v_i$ a non-zero vector of size $N$ and compute
\begin{equation}
\sum_{i,j}v_iv_j\partial_{\lambda_i}\partial_{\lambda_j}M =\frac{1}{\pi}\sum_{i=1}^N \frac{v_i^2}{1+\lambda_i^2}+\frac{1}{2\pi L}\sum_{i,j}\frac{(v_i-v_j)^2}{1+(\lambda_i-\lambda_j)^2}>0\,.
\end{equation}
This shows that the matrix $\partial_{\lambda_i}\partial_{\lambda_j}M$ is definite positive and hence $M$ strictly convex.

Finally, to show that $\lambda_k\neq \lambda_l$ requires the Bethe numbers to be distinct, let us subtract \eqref{belog} for $k$ and $l$
\begin{equation}
\frac{1}{\pi}\arctan \lambda_k+\frac{1}{\pi L}\sum_{j=1}^N\arctan(\lambda_k-\lambda_j)-\left( \frac{1}{\pi}\arctan \lambda_l+\frac{1}{\pi L}\sum_{j=1}^N\arctan(\lambda_l-\lambda_j)\right)=\frac{I_k-I_l}{L}\,.
\end{equation}
Since the function $x\mapsto \frac{1}{\pi}\arctan x+\frac{1}{\pi L}\sum_{j=1}^N\arctan(x-\lambda_j)$ is strictly increasing for any $x$, we conclude that $\lambda_k>\lambda_l$ if and only if $I_k>I_l$. Hence all the roots are distinct if and only if all the Bethe numbers are distinct.
\end{proof}

We remark that from inequality \eqref{range} one sees another property of these equations: even in finite size $L$ they admit an infinite quantity of solutions, since $N$ can be taken as large as desired, which reflects the non-compactness of the spin chain.
\medskip

The scaling limit $L\to\infty$ of \eqref{belog} is then taken as follows. We recall from \eqref{m-def} that $m=N/L$. The {\sl filling function} $\chi_m(x)$ is defined such that $L\chi_m(x) \, {\rm d}x$ is the number of Bethe numbers $I$ with $x<\frac{I}{L}<x+{\rm d}x$ for large $L$. The inverse of the counting function $z(x)$ is defined as the value of the roots $\lambda_k$ such that their Bethe number verifies $I_k/L\to x$ for large $L$. Using \eqref{range} we can then rewrite the logarithmic Bethe equations \eqref{belog} as
\begin{equation}
\frac{1}{\pi}\arctan z(x)=x-\frac{1}{\pi}\int_{-\tfrac{1+m}{2}}^{\tfrac{1+m}{2}}\arctan(z(x)-z(y))\chi_m(y) \, {\rm d}y \,.
\end{equation}
The possible filling functions $\chi_m(x)$ are exactly the functions that satisfy
\begin{subequations}
\label{fill}
\begin{eqnarray}
& & \forall x\in [-\tfrac{1+m}{2},\tfrac{1+m}{2}]\,: \quad 0\leq \chi_m(x)\leq 1\,, \\
& & \int_{-\tfrac{1+m}{2}}^{\tfrac{1+m}{2}}\chi_m(x)\, {\rm d}x=m \,.
\end{eqnarray}
\end{subequations}

\subsection{An expansion in terms of the twist \label{exptwist}}
Our strategy is now to add an imaginary extensive twist $\varphi\geq 0$ in the Bethe equations \eqref{befinm1}, that become\footnote{We put a factor $2$ in the exponent so that $\varphi$ is the conjugate variable to $m$, see \cite{granetdubailjacobsen}, and a minus sign so that the expansion at $\varphi\to +\infty$ leads to roots converging to $+i \mathbb{N}^*$, see hereafter.}
\begin{equation}
\label{befinm1tw}
\left(\frac{\lambda_k-i}{\lambda_k+i}\right)^L=e^{-2\varphi L}\prod_{l=1,\neq k}^N\frac{\lambda_k-\lambda_l+i}{\lambda_k-\lambda_l-i} \,,
\end{equation}
and to study the energy as a function of $\varphi$ when expanded around $\varphi\to\infty$. The logarithmic form of the Bethe equations with this twist is
\begin{equation}
\label{belogtwist}
\frac{1}{\pi}\arctan \lambda_k=\frac{I_k}{L}+\frac{i\varphi}{\pi}-\frac{1}{\pi L}\sum_{l=1}^N \arctan(\lambda_k-\lambda_l) \,.
\end{equation}
This kind of imaginary twist has been studied in different contexts in the XXZ spin chain \cite{fukuikawakami,nohkim,granetdubailjacobsen,granetjacobsensaleurxxz}. Our point is to show that it is actually suited for the convergent extrapolation from $\varphi=+\infty$ down to $\varphi=0$. We start our reasoning with the following
\begin{prop} \label{thm0}
When $\varphi\to\infty$ at fixed $L$, the roots $\{\lambda_k\}$ of a solution to \eqref{belogtwist} satisfy $\{\lambda_k\}\subset i \mathbb{N}^*$. There is necessarily a root that converges to $i$, and if there exists a root converging to $ni$ for $n>1$ then there exists another root converging to $(n-1)i$. Moreover, all the roots converge to $i$ if and only if all the Bethe numbers satisfy
\begin{equation}
\label{eqbenum}
-\frac{L}{2}< I_k \leq\frac{L}{2}
\end{equation}

\end{prop}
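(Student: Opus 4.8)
The plan is to work with the exponential Bethe equations \eqref{befinm1tw} rather than their logarithmic form \eqref{belogtwist}, setting $\epsilon=e^{-2\varphi}$ and performing the change of variables $\lambda_k=i\mu_k$, under which \eqref{befinm1tw} becomes $\left(\frac{\mu_k-1}{\mu_k+1}\right)^L=\epsilon^L\prod_{l\neq k}\frac{\mu_k-\mu_l+1}{\mu_k-\mu_l-1}$, and to analyze the $\epsilon\to 0$ limit by dominant balance. First I would record the elementary fact that $\left(\tfrac{\mu-1}{\mu+1}\right)^L$ vanishes only at $\mu=1$ and diverges only at $\mu=-1$. The central step is a \emph{descent lemma}: if a root has a finite limit $\mu_k\to c$ with $c\neq 1$, then the left-hand side tends to a nonzero (possibly infinite) value, so the product $\prod_{l\neq k}\frac{\mu_k-\mu_l+1}{\mu_k-\mu_l-1}$ must diverge; since a product of finitely many factors can diverge only through a denominator $\mu_k-\mu_l-1\to 0$, there must exist another root with $\mu_l\to c-1$. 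This is exactly the third assertion of the Proposition (with $c=n$), once the limits are known to be positive integers.

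Next I would close the descent into the quantized statement $\{\lambda_k\}\subset i\mathbb{N}^*$. Iterating the descent lemma from any root produces a chain of forced limit points $c,\,c-1,\,c-2,\dots$ decreasing by exactly $1$ at each step, and this chain can terminate only when it reaches the unique value at which the left-hand side vanishes, namely $\mu=1$; otherwise it continues indefinitely and requires infinitely many roots. Hence $c-n_0=1$ for some non-negative integer $n_0$, forcing $c\in\mathbb{N}^*$; in particular the limit is real, since a complex $c$ could never hit the real value $1$ by integer steps. This gives the first assertion, while the same chain terminating at $\mu=1$ exhibits a root converging to $i$, giving the second (for $N\geq 1$). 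Two technical points must be dispatched here: that the roots stay bounded (no escape to $\mu=\infty$, which I would rule out since diverging roots would still require a denominator $\mu_k-\mu_l-1\to 0$ to balance the equation, forcing an unending descent among the diverging roots themselves), and that subsequential limits of the finitely many roots exist, handled by compactness together with these bounds.

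The fourth assertion is the delicate one, and I would treat it by matched asymptotics in \eqref{belogtwist}. For a root converging to $i$ I would expand $\arctan\lambda_k$ around its logarithmic singularity, writing $\lambda_k=i(1-2\epsilon\,e^{2\pi i I_k/L}+\dots)$ and checking that $\frac{1}{\pi}\arctan\lambda_k=\frac{I_k}{L}+\frac{i\varphi}{\pi}+o(1)$ is solved at leading order precisely with this coefficient, the interaction terms $\arctan(\lambda_k-\lambda_l)$ being negligible since all differences vanish. This shows that level-$1$ roots are labelled by the angle $2\pi I_k/L$ and are mutually distinct only when the $I_k$ are distinct modulo $L$; the canonical representatives of the $L$ residues are exactly the (half-)integers in $(-\tfrac{L}{2},\tfrac{L}{2}]$, which is \eqref{eqbenum} and gives the ``if'' direction.

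The main obstacle is the converse: showing that any root at level $n\geq 2$ carries a Bethe number \emph{outside} $(-\tfrac{L}{2},\tfrac{L}{2}]$. Here the relevant scale is much finer---a level-$n$ root sits at distance $O(\epsilon^{L})$ rather than $O(\epsilon)$ from the appropriate image of its level-$(n-1)$ partner, because the balancing divergence in \eqref{belogtwist} now comes from the single singular term $\arctan(\lambda_k-\lambda_l)$ with $\lambda_k-\lambda_l\to i$ rather than from $\arctan\lambda_k$ itself. Carrying out this matched expansion and carefully tracking the branch of the logarithm, I expect each additional level to shift the effective quantization band of $I_k$ by $L$, so that level $n$ corresponds to $I_k\in(\tfrac{(2n-3)L}{2},\tfrac{(2n-1)L}{2}]$ (up to the reflection $I_k\mapsto -I_k$); already the case $n=2$ places $I_k$ just above $L/2$. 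Establishing this band-versus-level dictionary rigorously---including that the leading coefficients have unit modulus, so that the $I_k$ come out as genuine (half-)integers---is what pins down the sharp threshold at $\pm L/2$, and it is the part I expect to require the most care.
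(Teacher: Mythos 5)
Your treatment of the first three assertions is essentially the paper's own argument: the authors also work with the product form \eqref{befinm1tw}, first show the roots stay bounded, and then run the same descent --- if a root's limit is not $i$ the left-hand side does not vanish, so some denominator $\lambda_k-\lambda_l-i$ must vanish, producing a root one step lower, and the chain must terminate at $i$. (Your boundedness argument, an unending descent among the diverging roots themselves, is a valid variant of theirs, which instead multiplies the equations over the set $K$ of diverging roots so that their mutual factors cancel and the twist factor $e^{-2\varphi L|K|}$ forces $|K|=0$.)

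The gap is in the fourth assertion, and it is compounded by a mislabeling of which direction your asymptotics proves. The level-1 matched expansion --- writing $\lambda_k=i(1-2\epsilon e^{2\pi i I_k/L}+\dots)$ with the interaction terms negligible --- presupposes that all roots converge to $i$; what it then delivers is that each such root's Bethe number lies in $(-L/2,L/2]$, i.e.\ the \emph{only if} direction, not the \emph{if} direction as you claim. (The paper obtains that direction with far less machinery: take real parts of \eqref{belogtwist} and use $-\pi/2<\Re\arctan z\leq\pi/2$ together with $\lambda_k-\lambda_l\to 0$.) The implication actually needed for \emph{if} --- that a solution whose Bethe numbers all lie in $(-L/2,L/2]$ cannot contain roots at levels $n\geq 2$ --- is exactly the ``converse''/``band-versus-level dictionary'' that you defer as the part requiring the most care; self-consistency of the level-1 ansatz cannot substitute for it, because without a uniqueness statement for solutions with prescribed Bethe numbers nothing prevents a solution with in-band numbers from containing higher strings. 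For what it is worth, the paper does not close this either: the authors explicitly \emph{admit} the ``if'' direction, stating it is only observed numerically. So your proposal ends at the same unproven implication as the published proof, but you should not present the matched asymptotics as having established it.
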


\begin{proof}
First, let us show that no roots go to $\infty$ when $\varphi\to\infty$. Indeed, let us denote $K$ the (possible empty) set of roots such that $\lambda_k\to\infty$ when $\varphi\to\infty$. Taking the product of \eqref{befinm1tw} for these $\lambda_k$, we obtain
\begin{equation}
\prod_{\lambda_k\in K}\left(\frac{\lambda_k-i}{\lambda_k+i}\right)^L=e^{-2\varphi L |K|}(-1)^{|K|}\prod_{\lambda_k\in K, \lambda_l\notin K}^N\frac{\lambda_k-\lambda_l+i}{\lambda_k-\lambda_l-i} \,.
\end{equation}
The left-hand side goes to $1$ when $\varphi\to\infty$, so one needs $|K|=0$ in the right-hand side for it to not vanish when $\varphi\to\infty$. Hence all the roots stay finite.

Now, in \eqref{befinm1tw} if we consider $\lambda_k$ the root with the smallest imaginary part, $\lambda_k-\lambda_l-i$ cannot vanish, so that when $\varphi\to\infty$ we must have $\lambda_k\to i$ for the left-hand side to vanish, since all the roots stay finite when $\varphi\to\infty$. If we now consider an arbitrary $\lambda_k$, in the limit $\varphi\to\infty$ we must have either $\lambda_k \to i$ or there exists another $\lambda_l$ such that $\lambda_k-\lambda_l-i\to 0$. Hence by recurrence we must have $\lambda_k\to ni$ with $n>0$ an integer, and then $\lambda_l\to (n-1)i$.

Now, since $-\pi/2<\Re \arctan z\leq \pi/2$ for all complex $z$, by taking the real part of \eqref{belogtwist} we have
\begin{equation}
-\frac{L+N}{2}\leq I_k \leq\frac{L+N}{2}\,.
\end{equation}
Let us consider then a solution for which all $\lambda_k\to i$ when $\varphi\to\infty$. Then we have for all $l$, $\lambda_k-\lambda_l\to 0$ and from the real part of \eqref{belogtwist}, with again $-\pi/2<\Re \arctan z\leq \pi/2$ for all complex $z$, we obtain \eqref{eqbenum}.

We admit the other direction of the equivalence, i.e., that if \eqref{eqbenum} is verified, then all the roots converge to $i$, which is indeed observed numerically.
\end{proof}

We will call \textit{first-level filling function} a filling function $\chi_m(x)$ such that $\chi_m(x)=0$ for $\tfrac{1}{2}<|x|<\tfrac{1+m}{2}$, i.e., such that all the Bethe numbers satisfy \eqref{eqbenum} in the thermodynamic limit. Then, according to Proposition \ref{thm0}, when $\varphi\to\infty$ all the roots converge to $i$. Then we have the following 

\medskip

\begin{theorem} \label{thm1}

The energy $F_{\chi_m}(\varphi)$ as a function of $\varphi$, for a given first-level filling function $\chi_m(x)$ at a given value of $m>0$, can be expanded as
\begin{equation}
\label{repfree}
F_{\chi_m}(\varphi)=\sum_{b\geq -1}e^{-2b\varphi}f_b(\chi_m) \,,
\end{equation}
where the  $b$'s  are integers. The functions  $f_b({\chi_m})$ depend only on the moments $X_a(\chi_m)$ of $\chi_m$, defined for $a$ integer by
\begin{equation}
\label{momentsdef}
X_a(\chi_m)\equiv \int_{-\tfrac{1+m}{2}}^{\tfrac{1+m}{2}}e^{2i\pi a x}\chi_m(x) \, {\rm d}x=\underset{L\to\infty}{\lim}\, \frac{1}{L}\sum_{k=1}^Ne^{2i\pi a \tfrac{I_k}{L}} \,,
\end{equation}
and can be computed recursively in terms of a finite number of $X_a(\chi_m)$ with only algebraic manipulations.

\end{theorem}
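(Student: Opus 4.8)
The plan is to trade each Bethe root $\lambda_k$ for the variable $t_k=\frac{\lambda_k-i}{\lambda_k+i}$, which sends the accumulation point $\lambda_k\to i$ of Proposition~\ref{thm0} to $t_k\to 0$, and to expand in $w\equiv e^{-2\varphi}$. Since $\lambda_k^2+1=-4t_k/(1-t_k)^2$, the energy per site becomes the rational, symmetric expression
\begin{equation}
F_{\chi_m}(\varphi)=\frac{1}{L}\sum_{k=1}^N\frac{2}{\lambda_k^2+1}=m-\frac{1}{2L}\sum_{k=1}^N\left(t_k+\frac{1}{t_k}\right)\,.
\end{equation}
It therefore suffices to expand the power sums $\frac{1}{L}\sum_k t_k^a$ (here $a=\pm1$, but the argument is uniform in $a\in\mathbb{Z}$) in powers of $w$ and to show that their coefficients are built from the moments $X_a(\chi_m)$.

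Next I would rewrite the logarithmic Bethe equations \eqref{belogtwist} in the $t$-variables. Using $\arctan x=\frac{1}{2i}\log\frac{1+ix}{1-ix}$ together with $\frac{1+i\lambda_k}{1-i\lambda_k}=-t_k$ and $\lambda_k-\lambda_l=\frac{2i(t_k-t_l)}{(1-t_k)(1-t_l)}$, equation \eqref{belogtwist} becomes, after multiplication by $2\pi i$,
\begin{equation}
\log(-t_k)=-2\varphi+i\theta_k+\frac{1}{L}\sum_{l=1}^N\log\frac{1+t_k-3t_l+t_kt_l}{1-3t_k+t_l+t_kt_l}\,,\qquad \theta_k=\frac{2\pi I_k}{L}\,,
\end{equation}
where the $l=k$ term vanishes identically and the summand is analytic near $t_k=t_l=0$ (the branch points $\lambda_k-\lambda_l=\pm i$ stay a finite distance away). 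Writing $t_k=-w\,e^{i\theta_k}u_k$ cancels the explicit $-2\varphi$ against $\log w$ and leaves $\log u_k$ equal to the interaction sum, which is $O(w)$; hence $u_k=1+O(w)$ and the leading root is $t_k^{(0)}=-w\,e^{i\theta_k}$. The decisive point is that, by the very definition \eqref{momentsdef}, $\frac{1}{L}\sum_k(t_k^{(0)})^a=(-w)^aX_a(\chi_m)$, since the Bethe numbers $I_k/L$ are distributed according to $\chi_m$.

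The core of the argument is an induction on the order in $w$. I would show that each $u_k$ is a formal power series in $w$ whose coefficients are Laurent polynomials in $e^{i\theta_k}$ with coefficients that are finite polynomials in the $X_a$. This form is stable under the recursion: expanding the analytic summand as a double series $\sum_{a,b}c_{ab}\,t_k^a t_l^b$ and summing over $l$, each factor $\frac{1}{L}\sum_l e^{ic\theta_l}(\cdots)$ collapses order by order to a moment $X_c$, so that $\frac{1}{L}\sum_l(\cdots)$ is a power series in $w$ with moment coefficients and no residual $\theta_l$- or $\varphi$-dependence; restoring the prefactor $t_k^a$ retains only the $e^{ia\theta_k}$ dependence. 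Matching powers of $w$ then fixes each coefficient of $u_k$ algebraically in terms of lower orders, which proves existence, uniqueness and the claimed algebraic recursion, and shows that only finitely many $X_a$ enter at each order. Substituting into $\frac{1}{L}\sum_k t_k^a=(-w)^a\frac{1}{L}\sum_k e^{ia\theta_k}u_k^a$ and collapsing once more to moments yields a series in integer powers of $w$ with moment coefficients; the single factor $1/t_k$ in the energy lowers the bottom power to $w^{-1}=e^{2\varphi}$, giving $F_{\chi_m}(\varphi)=\sum_{b\geq-1}e^{-2b\varphi}f_b(\chi_m)$ with each $f_b$ a finite polynomial in the moments.

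The main obstacle is analytic rather than algebraic. First, one must check that the induction genuinely closes at the level of the thermodynamic limit, i.e.\ that the finite sums $\frac{1}{L}\sum_l(\cdots)$ may legitimately be replaced by their moment limits order by order; this requires the rescaled roots to remain, uniformly in $k$, inside the common domain of analyticity of the summand, so that the spread of the phases $\theta_l$ does not spoil the termwise expansion. Second, and more seriously, passing from this formal manipulation to the genuine convergence of \eqref{repfree} for $\varphi$ large (rather than a purely formal identity) demands quantitative control of the corrections $u_k-1$ and of the interchange of $L\to\infty$ with the $w$-expansion. Securing these uniform estimates is where the real work lies.
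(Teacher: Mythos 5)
Your proposal is correct and takes essentially the same route as the paper: you expand each root around the condensation point $\lambda_k\to i$ as a double series in $e^{-2\varphi}$ and $e^{2i\pi I_k/L}$, collapse the sums over roots to the moments $X_a(\chi_m)$ via \eqref{momentsdef}, and close an order-by-order algebraic recursion whose degree truncation guarantees that each coefficient is a finite polynomial in finitely many moments --- precisely the mechanism behind the paper's ansatz \eqref{lamdaexp}, the recursion \eqref{recur}, and the lemma $c_{ab}=0$ for $a>b$, with your $1/t_k$ term reproducing the $b=-1$ contribution just as the paper's \eqref{fbchi} does. Your M\"obius substitution $t_k=(\lambda_k-i)/(\lambda_k+i)$ is a pleasant repackaging that rationalizes the kernel and makes the $e^{2\varphi}$ term manifest (replacing the paper's $\arctan(i+x)$ identity \eqref{arctan}), and the analytic caveats you flag at the end --- termwise interchange of the $w$-expansion with $L\to\infty$, and genuine convergence of \eqref{repfree} near $\varphi\to\infty$ --- are left at exactly the same formal level in the paper, where convergence is only observed numerically.
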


\medskip

In order to obtain this result, we show that the following ansatz for each Bethe root $\lambda_k$
\begin{equation}
\label{lamdaexp}
\lambda_k=i+\sum_{a,b\geq 1}e^{-2b\varphi} e^{\tfrac{2i\pi I_k}{L}a}c_{ab} \,,
\end{equation}
with $c_{ab}$ coefficients that satisfy a yet-to-be-determined recurrence relation, solves the Bethe equations.
Note that the fact that $\lambda_k\to i$ when $\varphi\to\infty$ is consistent with the second part of Proposition~\ref{thm0}, because we have assumed the filling function to be first-level.
We will use the convenient notation
\begin{equation}
\label{ctilde}
c_{ab}^{[k]}=\sum_{\substack{a_1+\ldots+a_k=a\\b_1+\ldots+b_k=b}}c_{a_1b_1} c_{a_2 b_2} \cdots c_{a_kb_k} \,,
\end{equation}
with the convention $c_{00}^{[0]}=1$. In \eqref{lamdaexp} we can take $a,b\geq 0$, if we set $c_{ab}=0$ whenever $a=0$ or $b=0$. 

\begin{proof}

The ideas of the derivation are close to those used in \cite{granetjacobsensaleurxxz}. We first notice the identity
\begin{equation}
\label{arctan}
\arctan(i+x)=\frac{\log \tfrac{ix}{2}}{2i}+\frac{1}{2i}\sum_{n\geq 1}\frac{(-x)^n}{n(2i)^n} \,.
\end{equation}
Inserting the expansion \eqref{lamdaexp} into \eqref{arctan} (with $x = \lambda_k - i$), we have
\begin{equation}
\begin{aligned}
\frac{1}{\pi}\arctan(\lambda_i)=&-\frac{\varphi}{i\pi}+\frac{I_i}{L}+\frac{\log \tfrac{ic_{11}}{2}}{2i\pi}+\frac{1}{2i\pi}\log\left(1+\sum_{a,b\geq 0}e^{-2b\varphi} e^{\tfrac{2i\pi I_i}{L}a}\tilde{c}_{ab}\right)\\
&+\frac{1}{2i\pi}\sum_{n\geq 1}\frac{(-1)^n}{n(2i)^n}\sum_{a,b\geq 1}e^{-2b\varphi} e^{\tfrac{2i\pi I_i}{L}a} c^{[n]}_{ab}\\
=&\frac{i\varphi}{\pi}+\frac{I_i}{L}+\frac{\log \tfrac{ic_{11}}{2}}{2i\pi}-\frac{1}{2i\pi}\sum_{n\geq 1}\frac{(-1)^n}{n}\sum_{a,b\geq 0}e^{-2b\varphi} e^{\tfrac{2i\pi I_i}{L}a}\tilde{c}^{[n]}_{ab}\\
&+\frac{1}{2i\pi}\sum_{n\geq 1}\frac{(-1)^n}{n(2i)^n}\sum_{a,b\geq 1}e^{-2b\varphi} e^{\tfrac{2i\pi I_i}{L}a} c^{[n]}_{ab} \,,
\end{aligned}
\end{equation}
where we have set
\begin{equation}
\label{ctildedef}
\tilde{c}_{ab}=\begin{cases}
\frac{c_{a+1,b+1}}{c_{11}}\quad\text{if }(a,b)\neq(0,0)\\
0\quad\text{if }(a,b)=(0,0)
\end{cases}
\end{equation}
and with an identical definition for $\tilde{c}_{ab}^{[k]}$ as in \eqref{ctilde}:
\begin{equation}
\tilde{c}_{ab}^{[k]}=\sum_{\substack{a_1+\ldots+a_k=a\\b_1+\ldots+b_k=b}}\tilde{c}_{a_1b_1} \tilde{c}_{a_2 b_2} \cdots \tilde{c}_{a_kb_k} \,.
\end{equation}
We used that the Bethe numbers all satisfy $-L/2< I_k \leq L/2$ to write $\log e^{2i\pi I_k/L}=2i\pi I_k/L$. The right-hand side of \eqref{belog} can also be written in terms of the $c_{ab}$'s. We expand $\arctan x$ around $0$,
\begin{equation}
\frac{1}{\pi}\arctan(\lambda_i-\lambda_j)=\frac{1}{\pi}\sum_{n\geq 0}\frac{\arctan^{(n)}(0)}{n!} (\lambda_i - \lambda_j)^n \,,
\end{equation}
perform a binomial expansion
\begin{equation}
 (\lambda_i - \lambda_j)^n = \sum_{q=0}^n {n \choose q} (-1)^{n-q} (\lambda_i - i)^q (\lambda_j - i)^{n-q}
\end{equation}
and insert again \eqref{lamdaexp}, yielding
\begin{equation}
\begin{aligned}
\frac{1}{\pi}\arctan(\lambda_i-\lambda_j)=&\frac{1}{\pi}\sum_{n\geq 0}\frac{\arctan^{(n)}(0)}{n!}\sum_{q=0}^n {n\choose q}(-1)^{n-q}\\
&\times \sum_{a_1,b_1\geq 1}e^{-2b_1\varphi} e^{\tfrac{2i\pi I_i}{L}a_1}c^{[q]}_{a_1b_1}\sum_{a_2,b_2\geq 1}e^{-2b_2\varphi}e^{\tfrac{2i\pi I_j}{L}a_2}c^{[n-q]}_{a_2b_2} \,.
\end{aligned}
\end{equation}
In this form, the sum over the roots $\lambda_j$ can be expressed in the thermodynamic limit in terms of the moments $X_a(\chi_m)$, using \eqref{momentsdef}. It yields
\begin{equation}
\begin{aligned}
\frac{1}{L}\sum_{j}\frac{1}{\pi}\arctan(\lambda_i-\lambda_j)=&\frac{1}{\pi}\sum_{n\geq 0}\frac{\arctan^{(n)}(0)}{n!}\sum_{q=0}^n {n\choose q}(-1)^{n-q}\\
&\times \sum_{a_1,b_1\geq 1}e^{-2b_1\varphi} e^{\tfrac{2i\pi I_i}{L}a_1}c^{[q]}_{a_1b_1}\sum_{a_2,b_2\geq 1}e^{-2b_2\varphi}X_{a_2}(\chi_m)c^{[n-q]}_{a_2b_2}+O(L^{-1})\,.
\end{aligned}
\end{equation}
Plugging these expressions into the logarithmic form of the Bethe equations \eqref{befinm1tw}, we obtain
\begin{equation}
\label{eqeq}
\begin{aligned}
&\frac{\log \tfrac{ic_{11}}{2}}{2i\pi}-\frac{1}{2i\pi}\sum_{a,b\geq 0}e^{-2b\varphi} e^{\tfrac{2i\pi I_i}{L}a}\sum_{n\geq 1}\frac{(-1)^n}{n}\tilde{c}^{[n]}_{ab}\\
&=\sum_{a,b\geq 1}e^{-2b\varphi} e^{\tfrac{2i\pi I_i}{L}a}\left[-\sum_{n\geq 0}\sum_{q=0}^n \sum_{\substack{ a_2,b_1,b_2\geq 1\\b_1+b_2=b}}\frac{\arctan^{(n)}(0)}{n!\pi}{n\choose q}(-1)^{n-q}c_{ab_1}^{[q]}c_{a_2b_2}^{[n-q]}X_{a_2}(\chi_m) \right. \\
& \qquad \qquad \qquad \qquad \left. +\sum_{n\geq 1}\frac{(-1)^{n+1}}{n(2i)^{n+1}\pi}c_{ab}^{[n]}\right] \,.
\end{aligned}
\end{equation}
We see now that we can solve this equation if we impose the initial condition
\begin{equation}
c_{11}=-2i
\end{equation}
that cancels out the first term of \eqref{eqeq}, as well as requiring the recurrence relation
\begin{eqnarray}
\label{recur}
\frac{\tilde{c}_{ab}}{2i\pi} &=& -\sum_{n\geq 0}\sum_{q=0}^n \sum_{\substack{ a_2,b_1,b_2\geq 1\\b_1+b_2=b}}\frac{\arctan^{(n)}(0)}{n!\pi}{n\choose q}(-1)^{n-q}c_{ab_1}^{[q]}c_{a_2b_2}^{[n-q]}X_{a_2}(\chi_m) \nonumber \\
&+& \sum_{n\geq 1}\frac{(-1)^{n+1}}{n(2i)^{n+1}\pi}c_{ab}^{[n]}+\frac{1}{2i\pi}\sum_{n\geq 2}\frac{(-1)^n}{n}\tilde{c}_{ab}^{[n]} \,.
\end{eqnarray}

For this to make sense, we first have to make sure that the sums on the right-hand side are finite, namely that the sums over $n$ and $a_2$ truncate. To this end, let us prove by recurrence on $b$ that $c_{ab}=0$ for all $a>b$. For $b=1$ this follows from \eqref{recur} with $b=0$ (recall $\tilde{c}_{a0}=\tfrac{c_{a+1,1}}{c_{1,1}}$) by recurrence on $a$: it is true for $c_{1,0}=0$, and the right-hand side only involves $c_{a',0}$ for $1\leq a'\leq a$. We assume now it is true for all $b'$ until and including $b$, and consider \eqref{recur} for $a>b$. First, we have $c_{ab}^{[n]}=0$ for all $n\geq 1$, since in $c_{ab}^{[n]}$ there must be a term $c_{a'b'}$ with $a'>b'$ for the sum over $a'$ to be strictly larger than the sum over $b'$. Since $b_1\leq b$ in \eqref{recur}, we also conclude that $c_{ab_1}^{[q]}=0$. We also have $\tilde{c}_{ab}^{[n]}=0$ for $n\geq 2$, since it involves only $\tilde{c}_{a'b'}$ for $b'<b$, and at least one $a'$ has to be larger than $b'$ for their sum to be strictly larger than $b$ in $\tilde{c}_{ab}^{[n]}$. Hence $\tilde{c}_{ab}=0$, which concludes our recurrence. From this it follows that the sums over $n,a_2$ are always finite sums, since $c^{[n]}_{ab}$ is zero for $n$ or $a$ large enough, and $\tilde{c}^{[n]}_{ab}$ is zero for $n$ large enough. 

Now, let us check that \eqref{recur} is indeed a recurrence relation for $c_{ab}$. The right-hand side of \eqref{recur} depends on $c_{a'b'}$ for $a'\leq a+1$ and $b'\leq b+1$, with at least $b'<b+1$ or $a'<a+1$. Indeed $c_{a'b'}^{[k]}$ depends only on $c_{a''b''}$ with $a''\leq a'-(k-1)$ and $b''\leq b'-(k-1)$, because  $c_{a''b''}=0$ if $a''=0$ or $b''=0$; and $\tilde{c}_{a'b'}^{[k]}$ depends only on $\tilde{c}_{a''b''}$ with $a''\leq a'$ and $b''\leq b'$, with at least $a''<a'$ or $b''<b'$, because $\tilde{c}_{00}=0$. Hence \eqref{recur} is indeed a recurrence relation for $c_{ab}$.

\medskip

We can now express the energy \eqref{eqen} in terms of these $c_{ab}$. Indeed, differentiating \eqref{arctan} that we evaluate at $\lambda_k-i$ with the representation \eqref{lamdaexp} for $\lambda_k$, we have
\begin{equation}
\begin{aligned}
\frac{2}{\lambda_k^2+1}&=\frac{e^{2\varphi}e^{-\tfrac{2i\pi I_k}{L}}}{ic_{11}}\frac{1}{1+\sum_{a,b\geq 0}e^{-2b\varphi} e^{\tfrac{2i\pi I_k}{L}a}\tilde{c}_{ab}}+\frac{1}{2}\sum_{a,b\geq 1}e^{-2b\varphi} e^{\tfrac{2i\pi I_k}{L}a}\sum_{n\geq 0}\frac{c_{ab}^{[n]}}{(-2i)^n}\\
&=\frac{1}{ic_{11}}\sum_{a,b\geq -1}e^{-2b\varphi}e^{\tfrac{2i\pi I_k}{L}a}\sum_{n\geq 0}(-1)^n\tilde{c}^{[n]}_{a+1,b+1}+\frac{1}{2}\sum_{a,b\geq 1}e^{-2b\varphi} e^{\tfrac{2i\pi I_k}{L}a}\sum_{n\geq 0}\frac{c_{ab}^{[n]}}{(-2i)^n} \,.
\end{aligned}
\end{equation}
After summing over $\lambda_k$, we obtain the representation \eqref{repfree} for the  energy $F_{\chi_m}(\varphi)$ with
\begin{equation}
\label{fbchi}
f_b(\chi_m)=\sum_{n\geq 0}\sum_{a\geq -1}X_a(\chi_m)\left(\frac{(-1)^n}{ic_{11}}\tilde{c}^{[n]}_{a+1,b+1}+\frac{c^{[n]}_{ab}}{2(-2i)^n}\right)\,.
\end{equation}
Because $c_{ab}=0$ for $a>b$ as proven before, the sum over $a$ in \eqref{fbchi} is truncated after $b+1$, and the sum over $n$ is finite as well. Hence \eqref{fbchi} is indeed a finite expression, and this concludes the proof of our claim.
\end{proof}

For example, we have the first terms (where we recall that the $X_a$ are the moments defined in \eqref{momentsdef})
\begin{equation}
\label{expflargephi}
\begin{aligned}
F_{\chi_m}(\varphi)&=e^{2\varphi}\frac{X_{-1}}{2}\\
&+X_0+2X_0^2-2X_1X_{-1}\\
&+e^{-2\varphi}\left[\left(\frac{1}{2}-2X_0-4X_0^2\right)X_1-4X_{-1}X_1^2+2(1+4X_0)X_{-1}X_2\right]\\
&+O(e^{-4\varphi}) \,.
\end{aligned}
\end{equation}
We refer the reader to appendix \ref{mathematica1} for a numerical code that computes the values of the expansion coefficients \eqref{fbchi}.

\subsection{Examples of root configurations and numerical checks \label{examples}}
Let us give some examples of root configurations. The simplest choice of a filling function satisfying \eqref{fill} is
\begin{equation}
\label{root1eq}
\chi_m^{(1)}(x)=\begin{cases}
1 & \text{if }-m/2<x<m/2 \,, \\
0 & \text{otherwise} \,.
\end{cases}
\end{equation}
This  corresponds to the `standard' root configuration where all the Bethe roots are symmetric and closely packed around the origin, and appears to be relatively often the ground-state configuration for various spin chains \cite{yangyang66a}. For this reason we will sometimes denote by 'free energy' the energy of this state as a function of the magnetization $m$. With the  expression \eqref{eqen} for the energy, however,  it is natural to expect (because of the sign) that it will, in the case of interest here, rather maximise the energy at $m>0$ fixed. The corresponding  moments are
\begin{equation}
\label{moment1}
X_a(\chi_m^{(1)})=\begin{cases}
m & \text{if }a=0 \,, \\
\frac{\sin (\pi a m)}{\pi a} & \text{otherwise} \,.
\end{cases}
\end{equation}
In Figure \ref{freeen1} we show a sketch of this root configuration. At the top, we indicated in red where the roots $\lambda_k$ lie on the black line $[-\tfrac{1+m}{2},\tfrac{1+m}{2}]$. At the bottom, we indicated in red where the quantities $e^{2i\pi\lambda_k}$ lie on the unit circle. In the right panel, we compare the numerical solutions of the Bethe equations to the series in $e^{-2\varphi}$ within their radius of convergence as a function of $\varphi$.

\begin{figure}[H]
\begin{tikzpicture}[scale=1]
\draw[black] (-3,2) -- (3,2);
\draw[red, thick, line width=2pt] (-1,2.1) -- (1,2.1);
\draw [black,thick,domain=0:360] plot ({cos(\x)}, {sin(\x)});
\draw [red,thick, line width=2pt,domain=-45:45] plot ({1.1*cos(\x)}, {1.1*sin(\x)});
\filldraw[white] ({cos(-45)},{sin(-45)-2}) circle (2pt);
\end{tikzpicture}
\begin{tikzpicture}[scale=1]
\begin{axis}[
    enlargelimits=false,
    ymax =6,
    xmin=0,
    xmax=0.25,
    xlabel = $e^{-2\varphi}$,
    ylabel = $F_{\chi_m^{(1)}}(\varphi)$,
     y tick label style={
        /pgf/number format/.cd,
            fixed,
            fixed zerofill,
            precision=0,
        /tikz/.cd
    }
]
\addplot [
    domain=0:0.105, 
    samples=100, 
    color=blue,
    ]
    {1.77368 + 0.11254/x - 1.0637* x - 6.61139 *x^2 - 32.0443* x^3 - 
 95.1535* x^4 + 450.139 *x^5 + 11148.5* x^6 + 115374. *x^7 + 
 740609. *x^8 + 1.11878*10^6 *x^9 - 4.75151*10^7 *x^10 - 
 8.05694*10^8 *x^11 - 7.68942*10^9* x^12 - 3.935*10^10 *x^13 + 
 1.60805*10^11* x^14 + 6.58422*10^12 *x^15 + 8.76788*10^13 *x^16 + 
 6.98516*10^14* x^17 + 1.70802*10^15 *x^18 - 4.80401*10^16 *x^19};
\addplot [
    domain=0:0.25, 
    samples=100, 
    color=blue,
    ]
    {0.11254/x+0.273679+0.0414095*x+0.153148*x^2+(-0.0544894)*x^3+(-0.945448)*x^4+3.33247*x^5+(-0.0541935)*x^6+(-41.2164)*x^7+139.177*x^8+122.571*x^9+(-2583.74)*x^10+7281.69*x^11+16958.1*x^12+(-189236.)*x^13+403966*x^14+1.94049*10^6*x^15+(-1.49515*10^7)*x^16+(2.04716*10^7)*x^17+2.1042*10^8*x^18+(-1.22246*10^9)*x^19
    };
        \addplot+[
    only marks,
    mark=+,
    mark size=2.9pt,
    color=red]
table{freeenergysl2c_1_L200.dat};

        \addplot+[
    only marks,
    mark=+,
    mark size=2.9pt,
    color=red]
table{freeenergysl2c_2_L100.dat};

\end{axis}
\end{tikzpicture}
\caption{Left: sketch of the root configuration (red) and the vacancies (black), on the real axis (top) and on the unit circle in the form $e^{2i\pi\lambda_k}$ (bottom). Right: $F_{\chi_m^{(1)}}(\varphi)$ as a function of $e^{-2\varphi}$, with $m=0.25$ (bottom) and $m=0.75$ (top), using twenty terms of \eqref{repfree} within its radius of convergence (blue) and solving numerically the Bethe equations in size $L=200$ and $L=100$ respectively (red). The radius of convergence is estimated numerically from the fact that the partial sums are stable within it.}
\label{freeen1}
\end{figure}
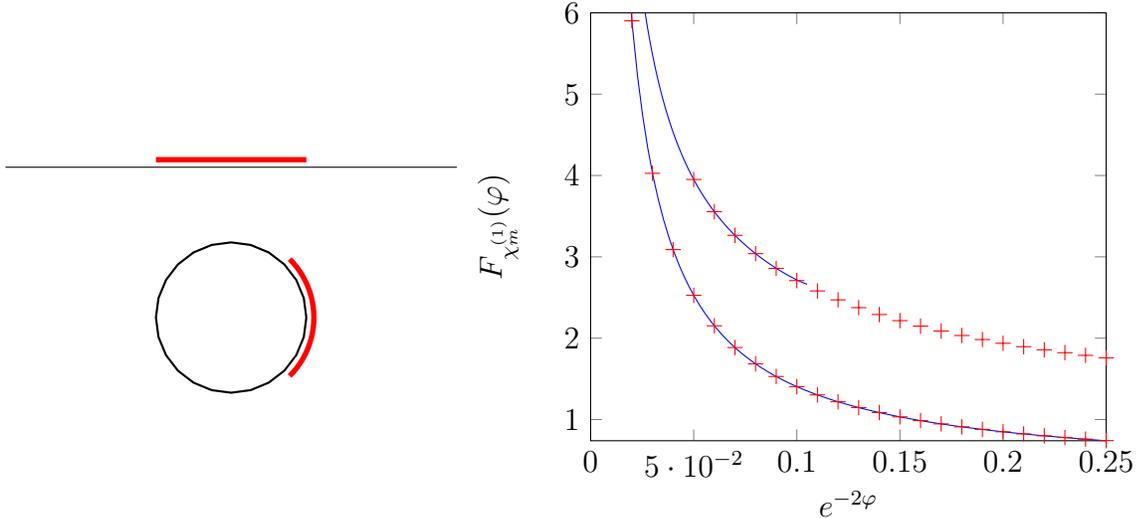

Another example of a root configuration is described by the filling function
\begin{equation} \label{chi-m-2}
\chi_m^{(2)}(x)=\begin{cases}
1 & \text{if }1/2-m/2<x<1/2\,, \\
1 & \text{if }-1/2<x<-1/2+m/2 \,, \\
0 & \text{otherwise} \,.
\end{cases}
\end{equation}
The corresponding  moments read
\begin{equation}
\label{moment2}
X_a(\chi_m^{(2)})=\begin{cases}
m & \text{if }a=0 \,, \\
(-1)^a\frac{\sin (\pi a m)}{\pi a} & \text{otherwise} \,.
\end{cases}
\end{equation}
In Figure \ref{freeen2} we show a sketch of this root configuration with the same conventions as before.

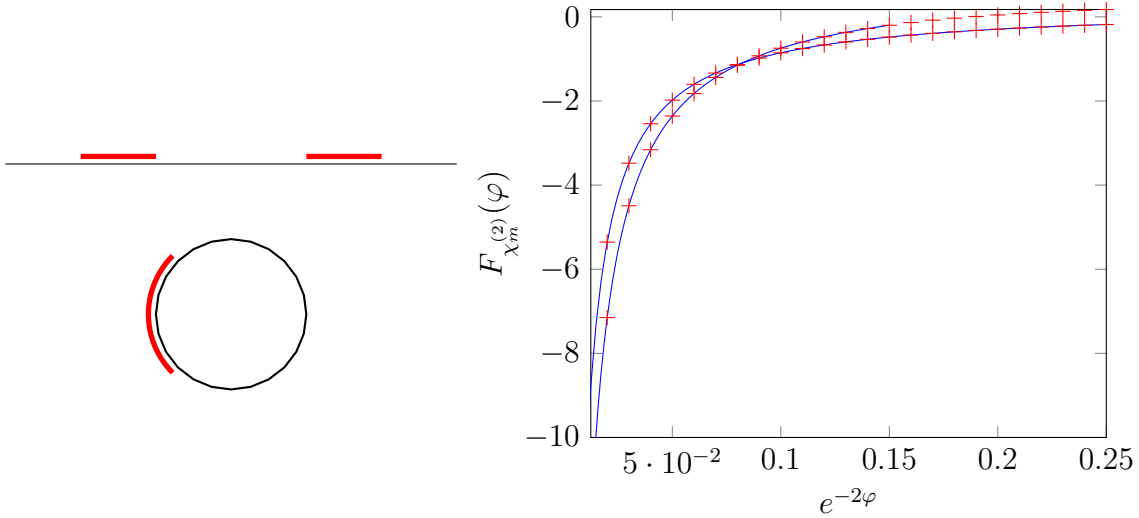
\begin{figure}[H]
\begin{tikzpicture}[scale=1]
\draw[black] (-3,2) -- (3,2);
\draw[red, thick, line width=2pt] (-2,2.1) -- (-1,2.1);
\draw[red, thick, line width=2pt] (1,2.1) -- (2,2.1);
\draw [black,thick,domain=0:360] plot ({cos(\x)}, {sin(\x)});
\draw [red,thick, line width=2pt,domain=135:225] plot ({1.1*cos(\x)}, {1.1*sin(\x)});
\filldraw[white] ({cos(-45)},{sin(-45)-2}) circle (2pt);

\end{tikzpicture}
\begin{tikzpicture}[scale=1]
\begin{axis}[
    enlargelimits=false,
    ymin=-10,
    xlabel = $e^{-2\varphi}$,
    ylabel = $F_{\chi_m^{(2)}}(\varphi)$,
     y tick label style={
        /pgf/number format/.cd,
            fixed,
            fixed zerofill,
            precision=0,
        /tikz/.cd
    }
]
\addplot [
    domain=0.01:0.25, 
    samples=100, 
    color=blue,
    ]
    {0.273679 - 0.11254/x - 0.0414095 *x + 0.153148 *x^2 + 0.0544894* x^3 - 
 0.945448 *x^4 - 3.33247 *x^5 - 0.0541935 *x^6 + 41.2164 *x^7 + 
 139.177 *x^8 - 122.571* x^9 - 2583.74* x^10 - 7281.69* x^11 + 
 16958.1 *x^12 + 189236. *x^13 + 403966. *x^14 - 1.94049*10^6 *x^15 - 
 1.49515*10^7 *x^16 - 2.04716*10^7 *x^17 + 2.1042*10^8* x^18};
 \addplot [
    domain=0.01:0.15, 
    samples=100, 
    color=blue,
    ]
    {0.797358 - 0.159155/x + 0.606471 *x - 0.669324 *x^2 - 7.66761 *x^3 + 
 27.9624 *x^4 + 137.679 *x^5 - 1237.37* x^6 - 1731.32 *x^7 + 
 54740.1 *x^8 - 82946.5 *x^9 - 2.28175*10^6* x^10 + 1.09586*10^7 *x^11 + 
 8.12174*10^7 *x^12 - 8.2764*10^8 *x^13 - 1.7244*10^9 *x^14 + 
 5.11536*10^10 *x^15 - 6.96627*10^10 *x^16 - 2.70593*10^12 *x^17 + 
 1.30358*10^13 *x^18};
        \addplot+[
    only marks,
    mark=+,
    mark size=2.9pt,
    color=red]
table{freeenergysl2c_2nd_m0v25_L240.dat};
        \addplot+[
    only marks,
    mark=+,
    mark size=2.9pt,
    color=red]
table{freeenergysl2c_2nd_m0v5_L144.dat};
\end{axis}
\end{tikzpicture}
\caption{Left: sketch of the root configuration (red) and the vacancies (black), on the real axis (top) and on the unit circle in the form $e^{2i\pi\lambda_k}$ (bottom). Right: $F_{\chi_m^{(2)}}(\varphi)$ as a function of $e^{-2\varphi}$, with $m=0.25$ (lower curve at the top right corner) and $m=0.5$ (upper curve at the top right corner), using twenty terms of \eqref{repfree} within its radius of convergence (blue) and solving numerically the Bethe equations in size $L=240$ and $L=144$ (red).}
\label{freeen2}
\end{figure}

Yet another example of a root configuration is defined by the filling function
\begin{equation}\label{eqgs}
\chi_m^{(3)}(x)=\begin{cases}
1 & \text{if }-m/4<x<m/4 \,, \\
1 & \text{if }1/2-m/2<x<1/2-m/4 \,, \\
1 & \text{if }-1/2+m/4<x<-1/2+m/2 \,, \\
0 & \text{otherwise} \,.
\end{cases}
\end{equation}
The moments read
\begin{equation}
\label{moment3}
X_a(\chi_m^{(3)})=\begin{cases}
m & \text{if }a=0 \,, \\
\frac{\sin ( \pi a m/2)}{\pi a}+\frac{(-1)^a}{\pi a}\big( \sin(\pi a m) -\sin(\pi a m/2)\big) & \text{otherwise} \,.
\end{cases}
\end{equation}
In Figure \ref{freeen3} we show a sketch of this root configuration with the same conventions as before.

\begin{figure}[H]
\begin{tikzpicture}[scale=1]
\draw[black] (-3,2) -- (3,2);
\draw[red, thick, line width=2pt] (-1.9,2.1) -- (-1.4,2.1);
\draw[red, thick, line width=2pt] (1.4,2.1) -- (1.9,2.1);
\draw[red, thick, line width=2pt] (-0.5,2.1) -- (0.5,2.1);
\draw [black,thick,domain=0:360] plot ({cos(\x)}, {sin(\x)});
\draw [red,thick, line width=2pt,domain=-22:22] plot ({1.1*cos(\x)}, {1.1*sin(\x)});
\draw [red,thick, line width=2pt,domain=135:157] plot ({1.1*cos(\x)}, {1.1*sin(\x)});
\draw [red,thick, line width=2pt,domain=203:225] plot ({1.1*cos(\x)}, {1.1*sin(\x)});
\filldraw[white] ({cos(-45)},{sin(-45)-2}) circle (2pt);

\end{tikzpicture}
\begin{tikzpicture}[scale=1]
\begin{axis}[
    enlargelimits=false,
    ymax=3,
    ymin =0,
    xmin=0,
    xlabel = $e^{-2\varphi}$,
    ylabel = $F_{\chi_m^{(2)}}(\varphi)$,
     y tick label style={
        /pgf/number format/.cd,
            fixed,
            fixed zerofill,
            precision=0,
        /tikz/.cd
    },
    xtick={0.03,0.06,0.09,0.12}
]
\addplot [
    domain=0.001:0.13, 
    samples=100, 
    color=blue,
    ]
    {0.374312 + 0.00927238/x + 0.00714426 *x + 0.6046 *x^2 - 1.79528* x^3 - 
 1.90975 *x^4 - 54.3636* x^5 - 151.756 *x^6 - 598.88 *x^7 - 1310.62* x^8 + 
 9719.72* x^9 + 32185.7 *x^10 + 445545. *x^11 + 742714.* x^12 + 
 5.35294*10^6 *x^13 + 4.97069*10^6 *x^14 + 1.61204*10^7 *x^15 + 
 5.87663*10^8 *x^16 + 1.51878*10^9 *x^17 + 2.74689*10^10* x^18};
 \addplot [
    domain=0.01:0.13, 
    samples=100, 
    color=blue,
    ]
    {0.965232 + 0.0659241/x - 0.206941 *x - 1.218 *x^2 - 19.584* x^3 - 
 25.3671* x^4 + 396.317* x^5 + 1040.29 *x^6 + 5115.68 *x^7 + 
 54915.9 *x^8 + 137017. *x^9 - 1.33218*10^6* x^10 - 3.78088*10^7* x^11 - 
 2.33221*10^8 *x^12 + 3.72594*10^8* x^13 + 5.85528*10^9 *x^14 + 
 4.12415*10^10 *x^15 + 5.86851*10^11* x^16 + 4.13647*10^12 *x^17 + 
 6.56714*10^12 *x^18};
        \addplot+[
    only marks,
    mark=+,
    mark size=2.9pt,
    color=red]
table{freeenergysl2c_3rd_m0v25_L240.dat};
        \addplot+[
    only marks,
    mark=+,
    mark size=2.9pt,
    color=red]
table{freeenergysl2c_3rd_m0v5_L144.dat};
\end{axis}
\end{tikzpicture}
\caption{Left: sketch of the root configuration (red) and the vacancies (black), on the real axis (top) and on the unit circle in the form $e^{2i\pi\lambda_k}$ (bottom). Right: $F_{\chi_m^{(3)}}(\varphi)$ as a function of $e^{-2\varphi}$, with $m=0.25$ (bottom) and $m=0.5$ (top), using twenty terms of \eqref{repfree} (blue) and solving numerically the Bethe equations in size $L=240$ and $L=144$ (red). }
\label{freeen3}
\end{figure}
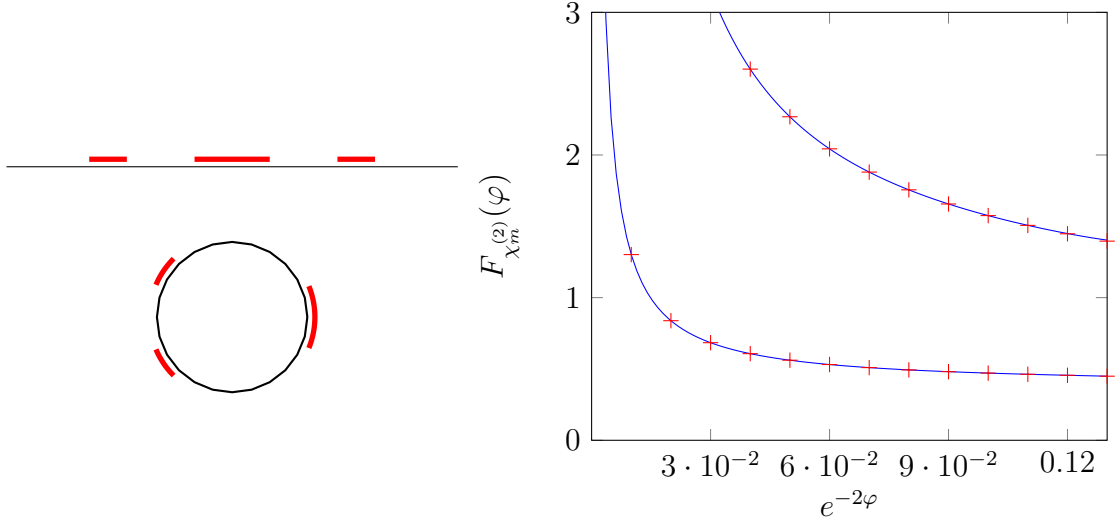

Evaluating the moments at $m=-1$, we obtain in Figure \ref{freeen78} the continuation of the energies of these states as a series in $e^{-2\varphi}$ for $\varphi\to\infty$. However, these series are not convergent at $\varphi=0$. We recall that one can solve the Bethe equations numerically only at $m>0$, whence the absence of numerical red points in Figure \ref{freeen78}. 

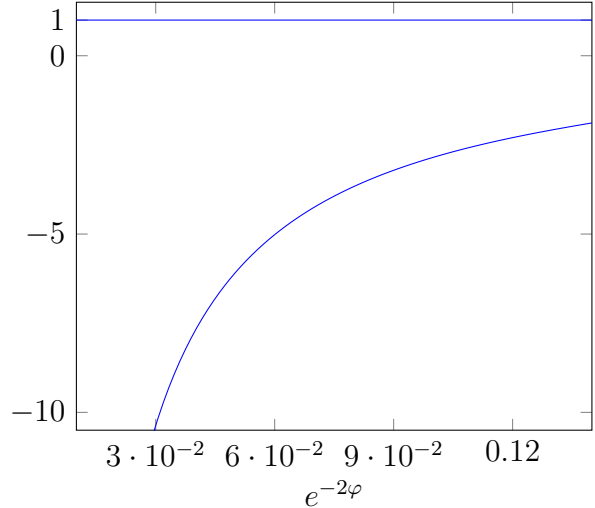
\begin{figure}[H]
\begin{center}
\begin{tikzpicture}[scale=1]
\begin{axis}[
ymax=1.5,
ymin=-10.5,
    enlargelimits=false,
    xlabel = $e^{-2\varphi}$,
     y tick label style={
        /pgf/number format/.cd,
            fixed,
            fixed zerofill,
            precision=0,
        /tikz/.cd
    },
    xtick={0.03,0.06,0.09,0.12},
    ytick={-10,-5,0,1}
]
\addplot [
    domain=0.001:0.14, 
    samples=100, 
    color=blue,
    ]
    {-0.318309886/x+0.18943053086129+1.986978760413*x-5.305387429*x^2-11.8502888270*x^3+159.6796741766*x^4-345.25260761*x^5-4137.8582552*x^6+34971.03935025*x^7+7972.8317983*x^8-1.778539900*10^6*x^9+9.4040852846*10^6*x^10+4.5539762230*10^7*x^11};
 \addplot [
    domain=0.01:0.14, 
    samples=100, 
    color=blue,
    ]
    {1+0*x};
     \end{axis}
\end{tikzpicture}
\caption{The series for $F_{\chi_m^{(i)}}(\varphi)$ as a function of $e^{-2\varphi}$, evaluated at $m=-1$, for $i=1,2$ (top, the two curves are superimposed) and $i=3$ (bottom).}
\label{freeen78}
\end{center}
\end{figure}

A few remarks are in order to summarize these three test cases.
First, we notice that the agreement between the series expansion (within its radius of convergence) and the numerical resolution of the Bethe equations (for sizes $L \gg 1$ close to the thermodynamic limit) is excellent, with the deviation between the two methods being invisible on the scale of the figures over the whole range of (convergent) $e^{-2\varphi}$ values. Second, we observe that $F_{\chi_m^{(1)}}(\varphi) > F_{\chi_m^{(3)}}(\varphi) > F_{\chi_m^{(2)}}(\varphi)$ for all $\varphi$; however, this ordering is not verified anymore at $m=-1$, which indicates that one cannot infer the root configuration of the ground state at $m=-1$ from the ordering of the states at $m>0$. Third, the series considered in the previous examples are in general not convergent down to $\varphi=0$, so that in this form they are not well suited for determining analytic continuation of energies at $\varphi=0$. And lastly, the root configurations \eqref{root1eq} and \eqref{chi-m-2} are evidently special since their energy seems to be independent of $\varphi$---we will come back to this fact in Section \ref{specialroot}.

\subsection{Comparison with a dual series expansion}
We can also give the following additional check for the energy given by the filling function $\chi_m^{(1)}$ \eqref{root1eq}. In \cite{granetjacobsensaleurxxz} we gave a way to compute recursively the coefficients $g_b(\varphi)$ of the energy $F_{\chi_m^{(1)}}(\varphi)$,
\begin{equation}
F_{\chi_m^{(1)}}(\varphi)=\sum_{b\geq 0}g_b(\varphi)m^b \,,
\end{equation}
with $g_b(\varphi)$ having an explicit dependence on $\varphi$. This kind of expansion is in a sense dual to the one in  \eqref{repfree}: it is an expansion in $m$ around $0$ with `resummed' $\varphi$-dependent coefficients, whereas \eqref{repfree} is an expansion in $e^{-2\varphi}$ around $\varphi\to\infty$ with `resummed' $m$-dependent coefficients. In the case of the filling function $\chi_m^{(1)}$ of \eqref{root1eq} we have from \eqref{expflargephi} and \eqref{moment1} on the one hand, and from \cite{granetjacobsensaleurxxz} on the other hand, the respective expansions
\begin{subequations}
\begin{eqnarray}
F_{\chi_m^{(1)}}(\varphi)&=& e^{2\varphi}\frac{\sin \pi m}{2\pi} \nonumber \\
& & \quad+m+2m^2-\frac{2\sin^2 \! \pi m}{\pi^2} \nonumber \\
& & \quad+e^{-2\varphi}\frac{\sin \pi m}{2\pi^3}(-4+\pi^2-4m\pi^2-8m^2\pi^2+4\cos(2\pi m)+2(1+4m)\sin(2\pi m)) \nonumber \\
& & \quad+O(e^{-4\varphi})\\
&=& m\times 2\cosh^2 \! \varphi \nonumber \\
& & \quad-m^3\frac{1}{6}\cosh 2\varphi \nonumber \\
& & \quad+m^4 \frac{\pi^2}{3}(1+\tanh^2 \! \varphi) \nonumber \\
& & \quad+O(m^5) \,,
\end{eqnarray}
\end{subequations}
and we can check that expanding $f_b(m)$ around $m=0$ and $g_b(\varphi)$ around $\varphi\to\infty$, we obtain two double series in $m,e^{-2\varphi}$ whose coefficients exactly match. We checked this correspondance until order $8$ in $m,e^{-2\varphi}$. This obviously provides a stringent check of both \cite{granetjacobsensaleurxxz} and Theorem~\ref{thm1} of the present paper.

\section{A special root configuration \label{specialroot}}
The coefficients of the series \eqref{repfree} can all be recursively computed, and in practice the first $\approx 20$ terms are relatively fast to calculate. One obtains  the energy levels of a state at large $\varphi$, within the ($m$ and $\chi(x)$-dependent) radius of convergence of the series \eqref{repfree}. The magnetization $m$ that is necessarily positive when solving numerically the Bethe equations in finite size, enters these series as a mere parameter that can be set to $m<0$. This permits to analytically continue the energy level of a state down to $m=-1$, at least for large $\varphi$. In this logic, a state at $m<0$ is still characterized by its moments, but those \textit{do not} derive anymore from a filling function with the constraints \eqref{fill}, but are obtained as analytic continuations of moments at $m>0$. 

In the rest of the paper, we will graphically depict a state at $m=-1$ (i.e., its moments) with the following conventions. Although its moments do not directly derive from a filling function, it may be that at $m=-1$ they can be written as $\int_{-1/2}^{1/2}f(x)e^{2i\pi a x} \, {\rm d}x$, with $f(x)$ a function. If it takes the values $-n$ with $n$ a positive or zero integer at $x$, then we depict it with a black circle with $n$ red layers at $e^{2i\pi x}$ where $f(x)=-n$. For example, Figure \ref{special} depicts the states \eqref{root1eq} and  \eqref{eqgs} at $m=-1$. We indeed have for \eqref{root1eq} at $m=-1$
\begin{equation}
\label{secondpseudo}
\begin{aligned}
X_a&=-\delta_{a,0}\\
&=-\int_{-1/2}^{1/2}e^{2i\pi a x} \, {\rm d}x\,,
\end{aligned}
\end{equation}
and for \eqref{eqgs} at $m=-1$
\begin{equation}
\label{groundstate}
\begin{aligned}
X_a&=\begin{cases}-1\quad \text{if }a=0\\
-\frac{2(-1)^b}{\pi (2b+1)}\quad \text{if }a=2b+1\, \text{odd}\\
0\quad \text{if }a=2b\neq 0\, \text{even}\\
\end{cases}\\
&=-2\int_{-1/4}^{1/4}e^{2i\pi a x} \, {\rm d}x\,.
\end{aligned}
\end{equation}
Of course, not all states can be written with a function $f(x)$ taking only integer values, but those that are relevant to us in this paper can.
\begin{figure}[H]
\begin{center}
\begin{tikzpicture}[scale=1]
\draw [black,thick,domain=0:360] plot ({cos(\x)}, {sin(\x)});
\draw [red,thick, line width=2pt,domain=0:360] plot ({1.1*cos(\x)}, {1.1*sin(\x)});
\end{tikzpicture}
\hspace{2cm}
\begin{tikzpicture}[scale=1]
\draw [black,thick,domain=0:360] plot ({cos(\x)}, {sin(\x)});
\draw [red,thick, line width=2pt,domain=-90:90] plot ({1.1*cos(\x)}, {1.1*sin(\x)});
\draw [red,thick, line width=2pt,domain=-90:90] plot ({1.2*cos(\x)}, {1.2*sin(\x)});
\end{tikzpicture}
\end{center}
\caption{Sketch of root configurations \eqref{secondpseudo} and \eqref{groundstate}, corresponding to the continuation of \eqref{root1eq} and  \eqref{eqgs} at $m=-1$.}
\label{special}
\end{figure}
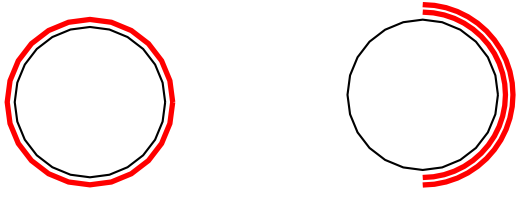

One now faces the following difficulty. Although one can obtain the energies at $m=-1$ as series in $e^{-2\varphi}$, the only value of $\varphi$ relevant to us is $\varphi=0$ (or its vicinity to obtain  derivatives), and the series \eqref{repfree} are observed to be not convergent down to $\varphi=0$, see Section \ref{examples}. Thus one would have to resum the series \eqref{repfree} in order to be able to set $\varphi=0$, which requires finding the generic explicit expression for the terms in the series. 

\subsection{Solution for a special root configuration \label{spesp}}

For an arbitrary root configuration given by an arbitrary filling function, it is evidently difficult to find a generic explicit expression for all the terms of the series \eqref{repfree}. However, in the case of the root structure \eqref{root1eq}, one can exceptionally find such a generic expression at $m=-1$. Indeed, all the moments $X_a(\chi_m)$ vanish at $m=-1$ but one, that is, $X_{0}(\chi_{-1})=-1$. As we will see, this allows us to compute all the terms in \eqref{repfree} as well as their $m$-derivatives, evaluated at $m=-1$, and this will be crucial in order to be able to continue the energy levels down to $\varphi=0$.

\subsubsection{The series \eqref{repfree} at $m=-1$}
For the root configuration given in \eqref{secondpseudo} the recurrence relation \eqref{recur} becomes
\begin{equation}
\begin{aligned}
\frac{\tilde{c}_{ab}}{2i\pi}&=\sum_{n\geq 0}\sum_{q=0}^n \sum_{\substack{b_1,b_2\geq 1\\b_1+b_2=b}}\frac{\arctan^{(n)}(0)}{n!\pi}{n\choose q}(-1)^{n-q}c_{ab_1}^{[q]}c_{0b_2}^{[n-q]}\\
&\qquad\qquad\qquad\qquad+\sum_{n\geq 1}\frac{(-1)^{n+1}}{n(2i)^{n+1}\pi}c_{ab}^{[n]}+\frac{1}{2i\pi}\sum_{n\geq 2}\frac{(-1)^n}{n}\tilde{c}_{ab}^{[n]}\\
&=\sum_{n\geq 0} \frac{\arctan^{(n)}(0)}{n!\pi}c_{ab}^{[n]}+\sum_{n\geq 1}\frac{(-1)^{n+1}}{n(2i)^{n+1}\pi}c_{ab}^{[n]}+\frac{1}{2i\pi}\sum_{n\geq 2}\frac{(-1)^n}{n}\tilde{c}_{ab}^{[n]} \,,
\end{aligned}
\end{equation}
where we used that $c_{0b_2}^{[n-q]}=0$ unless $b_2=0,n-q=0$, in which case $c_{00}^{[0]}=1$, because $c_{0b}=0$ for $b\geq 0$. Substituting back the expansion of $\arctan$ \eqref{arctan}, this equation is exactly
\begin{equation}
\arctan \left(i+\sum_{a,b\geq 1}e^{-2b\varphi} e^{\tfrac{2i\pi I_k}{L}a}c_{ab}\right)=\pi\frac{I_k}{L}+i\varphi+\arctan \left(\sum_{a,b\geq 1}e^{-2b\varphi} e^{\tfrac{2i\pi I_k}{L}a}c_{ab}\right) \,.
\end{equation}
Introducing the generating function
\begin{equation}
\gamma_0(t,x)=\sum_{a,b\geq 1}t^a x^b c_{ab}|_{m=-1} \,,
\end{equation}
where the $c_{ab}$'s are evaluated at $m=-1$, and in which one can interpret $t=e^{2i\pi I/L}$ and $x=e^{-2\varphi}$, this equation reads
\begin{equation}
\label{eqgamma0}
\arctan(i+\gamma_0(t,x))=\frac{1}{2i}\log(tx)+\arctan\gamma_0(t,x) \,,
\end{equation}
which can be solved by
\begin{equation}
\label{gamma}
\gamma_0(t,x)=-\frac{i}{2}+\frac{i}{2}\sqrt{1-\frac{8tx}{1-tx}}\equiv \Delta(tx) \,.
\end{equation}
As for equation \eqref{fbchi} for the values of the coefficients in the series \eqref{repfree}, it becomes
\begin{equation}
f_b(\chi_{-1})=-\sum_{n\geq 0}\left(\frac{(-1)^n}{ic_{11}}\tilde{c}^{[n]}_{1,b+1}+\frac{c^{[n]}_{0b}}{2(-2i)^n}\right) \,.
\end{equation}
From the solution \eqref{gamma} it follows that $c_{ab},\tilde{c}_{ab}=0$ whenever $a\neq b$. Hence the only non-vanishing term is $f_0(\chi)$ given by, with $n=1$ for the first term and $n=0$ for the second one,
\begin{equation}
\begin{aligned}
f_0(\chi_{-1})&=\frac{\tilde{c}_{11}}{ic_{11}}-\frac{1}{2}\\
&=\frac{\Delta''(0)}{2i\Delta'(0)^2}-\frac{1}{2}\\
&=1 \,,
\end{aligned}
\end{equation}
where we recall the definition \eqref{ctildedef} for $\tilde{c}_{ab}$. Hence we obtain that for this root configuration
\begin{equation}
\label{f1}
F_{\chi_{-1}}(\varphi)=1+O(e^{-2n\varphi})
\end{equation}
for any $n>0$, in the limit $\varphi\to\infty$.
Since all the expansions in $e^{-2\varphi}$ are observed to be convergent series, this equation is expected to hold at least within a finite radius of convergence near $\varphi\to\infty$. As explained hereafter in section \ref{criterion}, one needs to know the behaviour of the $m$-derivatives of $F$ at $m=-1$ in order to know the range of validity of this expression.

\medskip

Let us make the following side comment. If we keep working backwards from \eqref{eqgamma0}, we obtain that the Bethe equations for $\lambda_k=i+\gamma_0(e^{2i\pi I_k/L},e^{-2\varphi})$ are
\begin{equation}
\label{befinm1twnew}
\left(\frac{\lambda_k-i}{\lambda_k+i}\frac{\lambda_k}{\lambda_k-2i}\right)^L=e^{-\varphi L}
\end{equation}
but this identification works \textit{only} in the thermodynamic limit $L\to\infty$.

\subsubsection{The $m$-derivatives of the series \eqref{repfree} at $m=-1$ \label{derivsec}}
For the particular root structure \eqref{root1eq}, one can also compute all the $m$-derivatives of the coefficients $f_{b}(\chi_m)$ evaluated at $m=-1$. In order to show this, let us introduce the following generating functions
\begin{equation}
\label{gamap}
\gamma_p(t,x)=\sum_{a,b\geq 1}t^a x^b \frac{1}{p!} \left. \left(\frac{d}{dm}\right)^p c_{ab} \right|_{m=-1} \,.
\end{equation}
Let us consider a function $F(t)$ with a Laurent series
\begin{equation}
F(t)=\sum_{a\geq -n}F_a t^a
\end{equation}
with a certain $n$. Then in the limit $L\to\infty$, by definition \eqref{momentsdef} of the moments $X_a(\chi_m)$,
\begin{equation}
\frac{1}{L}\sum_{k}F(e^{2i\pi I_k/L})=\sum_{a\geq -n} X_a(\chi_m)F_a+O(L^{-1}) \,.
\end{equation}
For the particular root structure under consideration, one has the moments given in \eqref{moment1}. We denote $\mu=m+1$ and expand these moments around $\mu=0$. This yields for $a\neq 0$
\begin{equation}
X_a(\chi_m)= (-1)^a\sum_{p\geq 0}\frac{(\pi a)^{2p}}{(2p+1)!}(-1)^p \mu^{2p+1} \,.
\end{equation}
Hence
\begin{equation}
\label{sumfff}
\frac{1}{L}\sum_{k}F(e^{2i\pi I_k/L})=-F_0+\sum_{p\geq 0}\frac{\pi^{2p}\mu^{2p+1}}{(2p+1)!}(-1)^p \slashed{\partial}^{2p}F(-1)+O(L^{-1}) \,,
\end{equation}
where we have defined  $\slashed{\partial}_tF(t)=t\partial_t F(t)=\sum_{a\geq -n}at^a F_a$. The Bethe equations \eqref{befinm1tw} yield in the thermodynamic limit $L\to\infty$
\begin{equation}
\label{genderiv}
\begin{aligned}
&\arctan\left(i+\sum_{p\geq 0}\gamma_p(t,x)\mu^p\right) = \\
& \qquad \frac{\log(xt)}{2i}+\arctan\left(\sum_{p\geq 0}\gamma_p(t,x)\mu^p\right)\\
& \qquad -\sum_{p\geq 0}\frac{\pi^{2p}\mu^{2p+1}}{(2p+1)!}(-1)^p \slashed{\partial}_u^{2p}\arctan\left.\left(\sum_{p\geq 0}\gamma_p(t,x)\mu^p-\sum_{p\geq 0}\gamma_p(u,x)\mu^p\right)\right|_{u=-1} \,,
\end{aligned}
\end{equation}
which is the generalisation of \eqref{eqgamma0} to $\mu=m+1\neq 0$. This equation allows us to solve for $\gamma_p(t,x)$ recursively in $p$, by expressing them in terms of $\gamma_0(t,x)=\Delta(tx)$. Let us take the example of $\gamma_1(t,x)$. At order $\mu$, \eqref{genderiv} is
\begin{equation}
\begin{aligned}
&\arctan(i+\Delta(tx)+\mu \gamma_1(t,x))-\arctan(\Delta(tx)+\mu\gamma_1(t,x))\\
&=\frac{\log(xt)}{2i}-\mu \arctan(\Delta(tx)-\Delta(-x))+O(\mu^2) \,.
\end{aligned}
\end{equation}
Expanding at order $\mu$, the $\mu^0$ term vanishes due to \eqref{eqgamma0}, while the $\mu$ term gives
\begin{equation}
\label{gamma1}
\gamma_1(t,x)=\frac{\arctan(\Delta(tx)-\Delta(-x))}{\arctan'(\Delta(tx))-\arctan'(i+\Delta(tx))} \,.
\end{equation}
And in this way, one can determine all the $\gamma_p(t,x)$ recursively in terms of $\Delta$.

\medskip

As for the  energy, it reads
\begin{equation}
\label{ffderiv}
F_{\chi_{m}}(\varphi)=-e_0(\mu)+\sum_{p\geq 0}\frac{\pi^{2p}\mu^{2p+1}}{(2p+1)!}(-1)^p \slashed{\partial}_t^{2p}\left.\left[\frac{2}{1+\left(i+\sum_{p\geq 0}\mu^p\gamma_p(t,x)\right)^2}\right]\right|_{t=-1} \,,
\end{equation}
with $x=e^{-2\varphi}$, and where $e_0(\mu)$ is the term in $t^0$ in the Laurent series of $\frac{2}{1+\left(i+\sum_{p\geq 0}\mu^p\gamma_p(t,x)\right)^2}$. Hence all the $m$-derivatives of $F_{\chi_{m}}(\varphi)$ can be expressed in terms of the $\gamma_p(t,x)$ and computed explicitly.

Let us for instance compute the first derivative. We have 
\begin{equation}
\frac{2}{1+\left(i+\sum_{p\geq 0}\mu^p\gamma_p(t,x)\right)^2}=\frac{2}{\Delta(xt)(2i+\Delta(xt))}-\frac{4(i+\Delta(xt))\gamma_1(t,x)}{\Delta(xt)^2(2i+\Delta(xt))^2}\mu+O(\mu^2) \,.
\end{equation}
Using the expression \eqref{gamma1} one finds the $t^0$ term
\begin{equation}
e_0(\mu)=-1+\frac{2\mu}{1+\Delta(-x)^2}+O(\mu^2) \,.
\end{equation}
This gives 
\begin{equation}
F_{\chi_{m}}(\varphi)=-e_0(\mu)+\frac{2\mu}{\Delta(-x)(2i+\Delta(-x))}+O(\mu^2) \,.
\end{equation}
One deduces, with $x=e^{-2\varphi}$,
\begin{equation}
\label{valdev1}
\partial_m F_{\chi_m}(\varphi)|_{m=-1}=-2\cosh^2\varphi \sqrt{5-4\tanh\varphi} \,.
\end{equation}

The next terms can be computed efficiently by noting that only the knowledge of the expansion of $\gamma_p(t,x)$ for $t$ close to $0$ and $t$ close to $-1$ are actually needed to compute the successive terms. A recurrence relation is given in Appendix \ref{seriesmath}.

The important aspect of this calculation is that the computation of $\gamma_p(t,x)$ only involves $t$-derivatives of $\gamma_0(t,x)$ evaluated at $t=-1$, i.e., derivatives of $\Delta$ evaluated at $-x=-e^{-2\varphi}$, a negative real. The function $\Delta$ has no singularity for negative real (it only has a pole at $1$ and a branch point at $\tfrac{1}{9}$), and the only division is by $\arctan'(\Delta(tx))-\arctan'(i+\Delta(tx))$ which has no zeros for $t=-1$ and $0\leq x\leq 1$, so that no singularity can arise. Hence, all the $m$-derivatives of $F_{\chi_m}(\varphi)$ evaluated at $m=-1$ are regular for $0\leq \varphi <\infty$. As explained in section \ref{criterion} below, this ensures that the range of validity of the analytic continuations \eqref{f1} and \eqref{valdev1} are at least $0\leq \varphi <\infty$, which includes $\varphi=0$. Hence we obtain the analytic continuations
\begin{equation}
\label{resfirst}
F_{\chi_{m}}(\varphi=0)|_{m=-1}=1\,,\qquad \partial_m F_{\chi_{m}}(\varphi=0)|_{m=-1}=-2\sqrt{5}\,,
\end{equation}
and all the other derivatives can be analytically computed. We were able to evaluate more than $20$ terms.

\subsection{Conditions for analytically continuing series at $\varphi=0$ \label{criterion}}

\subsubsection{A counter-example and a criterion}
In section \ref{spesp} we saw that for the second pseudo-vacuum root configuration, one can compute all the terms in the series \eqref{repfree}, which yields $F_{\chi_{m}}(\varphi)=1$ at $m=-1$ within a certain radius of convergence $\varphi_c\leq \varphi<\infty$. Although the obtained (trivial) function of $\varphi$ can be obviously analytically continued to all real $\varphi$, this does not guarantee that the function $F_{\chi_{m}}(\varphi)$ will actually take these values, because analytic continuation of $F_{\chi_{m}}(\varphi)$ should be considered with respect to both variables $m$ and $\varphi$.

We can first exhibit a counterexample. The function
\begin{equation}
\label{counterx}
f(m,\varphi)=\frac{2}{\pi}\arctan \frac{\varphi-\varphi_c}{(m-m_c)^2}
\end{equation}
is analytic everywhere except at $(m,\varphi)=(m_c,\varphi_c)$. At $m=m_c$ one has
\begin{equation}
f(m_c,\varphi)=\sign(\varphi-\varphi_c) \,,
\end{equation}
whose expansion in $1/\varphi$ around $\varphi\to\infty$ is
\begin{equation}
f(m_c,\varphi)=1+O(\varphi^{-n})\qquad \forall n\geq 0 \,,
\end{equation}
that can be trivially resummed into the function $1$, which can itself be analytically continued for all $\varphi$. However, it does not correspond to the actual value of $f(m_c,\varphi)$ for $\varphi<\varphi_c$, which is $-1$. Using this function as a building block, one can obtain functions $f(m,\varphi)$ whose expansions at $m_c$ around $\varphi\to \infty$ will be perfectly regular and that can be analytically continued to all $\varphi$ without anything special happening at $\varphi=\varphi_c$, but that will actually \textit{not} be the true value of $f(m,\varphi)$ after $\varphi<\varphi_c$, which can take essentially any value.

After this sobering example we see that to have more information on the validity of $f(m_c,\varphi)=1$, one needs to know the behaviour of the same series for $f(m,\varphi)$ for $m$ close to $m_c$. In the case of \eqref{counterx}, the radius of convergence of $f(m,\varphi)$ as a series in $1/\varphi$ for $\varphi\to\infty$ is larger than $\frac{1}{\varphi_c+(m-m_c)^2}$ for $m\neq m_c$. For $m\to m_c$ we only know it is larger than $\frac{1}{\varphi_c}$, and we have indeed $f(m_c,\varphi)=1$ for $\frac{1}{\varphi}<\frac{1}{\varphi_c}$. In any case, the resummed value of $f(m_c,\varphi)$ has to be correct within the radius $\underset{m\to m_c}{\lim}\rho(m)$ where $\rho(m)$ is the radius of convergence as a function of $m$. But the radius of convergence for $m$ close to $m_c$ gives  too strong a constraint for the validity of the analytic continuation is general. For example, the function
\begin{equation}
\tilde{f}(m,\varphi)=\frac{1}{1+a\varphi}+\frac{2}{\pi}\arctan \frac{\varphi-\varphi_c}{(m-m_c)^2}
\end{equation}
for $a>0$ can have a radius of convergence when $m\to m_c$ for the series in $1/\varphi$ around $\varphi\to\infty$ arbitrarily small provided $a$ is sufficiently large, whereas the analytic continuation of the series will work down to $\varphi>\varphi_c$ for any $a>0$. 

To find a sensible constraint on the range of validity of the analytic continuation with respect to $\varphi$, one can make the following reasoning. To analytically continue a function $f(m_c,\varphi)$ on $\varphi\in]\varphi_c,+\infty[$, one needs that $f(m,\varphi)$,  considered as a function of two variables $(m,\varphi)$, is analytic in a domain of $(m,\varphi)\in \mathbb{C}\times \mathbb{C}$ strictly containing $\{m_c\}\times ]\varphi_c,\infty[$. This implies in particular that none of the derivatives with respect to $m$ at $m_c$ is singular for any $\varphi\in]\varphi_c,+\infty[$, but also that the radius of convergence of the series in $m$ is non-zero for all $\varphi_c<\varphi<\infty$. In the case of the example \eqref{counterx}, we have
\begin{equation}
\partial^2_m f(m_c,\varphi)=-\frac{4}{\pi(\varphi-\varphi_c)} \,,
\end{equation}
which is regular for $\varphi_c<\varphi<+\infty$ but singular at $\varphi_c$, and indeed its analytic continuation $f(m_c,\varphi)=1$ is valid only for $\varphi_c<\varphi<+\infty$.

A counterexample where all the $m$-derivatives are regular for $\varphi_c<\varphi<\infty$, but whose series in $m$ has a zero radius of convergence beyond some value of $\varphi$ is
\begin{equation}
f(m,\varphi)=\sum_{p\geq 1}\left(\frac{2\varphi_c}{\varphi}\right)^{p^2}(m-m_c)^p\,.
\end{equation}
Indeed, for any $m\neq m_c$ the series in $\varphi$ cannot be analytically continued for $\varphi<2\varphi_c$, since it is well known that $\sum_{n\geq 1}x^{n^2}$ has a natural boundary on the unit circle.\\

Hence to be able to perform these analytic continuations we need the following
\begin{prop}\label{convcriterion}
Let $f(x,y)$ be a function of two real variables defined and analytic in a neighbourhood of $(0,0)$, hence with an expansion
\begin{equation}
f(x,y)=\sum_{n\geq 0}x^n f_n(y) \,,
\end{equation}
with $f_n(y)$ analytic functions of $y$ in a neighbourhood of $0$.
\begin{enumerate}
\item If for all $p\geq 0$, there exists a function $g_p(x)$, analytic on $[0,1]$, whose expansion around $0$ is $\sum_{n\geq 0}x^n f_n^{(p)}(0)$,
\item and if the series $\sum_{p\geq 0}\frac{y^p}{p!}g_p(x)$ has a non-zero radius of convergence for all $0\leq x\leq 1$,
\end{enumerate}
then $f(x,y)$ can be analytically continued to a function that takes the values $g_0(x)$ on $[0,1]\times \{0\}$ (and whose $p$-th $y$-derivative takes the values $g_p(x)$ on $[0,1]\times \{0\}$).
\end{prop}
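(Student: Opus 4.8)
The plan is to exhibit the continuation explicitly as the resummed series $G(x,y):=\sum_{p\ge 0}\frac{y^p}{p!}g_p(x)$, and to prove two things: that $G$ coincides with $f$ near the origin, and that $G$ is holomorphic on a neighbourhood of the segment $[0,1]\times\{0\}$. Granting these, $G$ is by uniqueness of analytic continuation the sought-for extension of $f$, and its very definition gives $G(x,0)=g_0(x)$ and $\partial_y^pG(x,0)=g_p(x)$ at once. Writing the germ of $f$ at $(0,0)$ as an absolutely convergent double series $f(x,y)=\sum_{n,p}a_{np}x^ny^p$ in a bidisc, one has $f_n^{(p)}(0)=p!\,a_{np}$, so the power series defining $g_p$ near $0$ is $\sum_n p!\,a_{np}x^n$. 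Inside a sufficiently small bidisc all sums converge absolutely, so I may rearrange $G(x,y)=\sum_p\frac{y^p}{p!}\big(\sum_n p!\,a_{np}x^n\big)=\sum_{n,p}a_{np}x^ny^p=f(x,y)$. This is the easy half and reduces everything to the analyticity of $G$ near the whole segment.

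First I fix, using compactness of $[0,1]$ together with hypothesis (1), an open complex neighbourhood $N\subset\mathbb C$ of $[0,1]$ to which every $h_p:=g_p/p!$ extends holomorphically (this is the natural reading of hypothesis (1), and in the application all $g_p$ are built from $\Delta$ and genuinely share such a domain). The real content is now this: hypothesis (2) only asserts that the radius of convergence $R(x)=1/\limsup_p|h_p(x)|^{1/p}$ is positive at each individual real $x$, whereas to obtain a two-dimensional neighbourhood of $[0,1]\times\{0\}$ I need $R$ bounded below on a complex neighbourhood of the segment, i.e. local uniform convergence of $\sum_p h_p(x)y^p$ in $x$. This is precisely the phenomenon that the natural-boundary counterexamples preceding the statement are designed to illustrate, and securing it is a Hartogs-type argument, which I expect to be the main obstacle.

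Concretely I would reproduce the argument underlying Hartogs' theorem. The functions $u_p(x):=\tfrac1p\log|h_p(x)|$ are subharmonic on $N$, and hypothesis (2) gives the pointwise bound $\limsup_p u_p(x)=-\log R(x)<\infty$, whence also $\sup_p u_p(x)<\infty$ for each $x$ (only finitely many early terms survive, and the tail is controlled by the $\limsup$). Hence $N=\bigcup_{M\ge 1}F_M$ with $F_M:=\{x\in N:\ |h_p(x)|\le e^{pM}\ \forall p\}$ closed, since each $|h_p|$ is continuous; Baire's category theorem (Osgood's lemma) then furnishes an open $W\subset N$ on which $\sup_p u_p\le M_0$. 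On $W$ the $\{u_p\}$ form a locally uniformly bounded-above family of subharmonic functions with $\limsup_p u_p\le -\log R$, so Hartogs' lemma upgrades the pointwise control to a locally uniform one; this bound is then propagated from $W$ to a full complex neighbourhood of $[0,1]$ by the maximum-principle/analytic-propagation step of Hartogs' proof, using that the partial sums $\sum_{p\le P}h_p(x)y^p$ are jointly holomorphic. The outcome is $\delta,\eta>0$ with $\sum_p h_p(x)y^p$ converging uniformly on $\{\,\mathrm{dist}(x,[0,1])<\delta\,\}\times\{|y|<\eta\}$, so $G$ is holomorphic there. Combined with the first paragraph, $G$ is the analytic continuation of $f$ with boundary values $\partial_y^pG(\cdot,0)=g_p$, as claimed. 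I expect the propagation step to be the crux, since it is exactly there that one must exploit the genuine holomorphy of the $h_p$ (not merely the subharmonic data) to exclude the natural-boundary behaviour exhibited by the counterexamples.
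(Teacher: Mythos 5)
Your first half coincides with the paper's proof: the paper also resums $\tilde f(x,y)=\sum_{p\ge 0}\tfrac{y^p}{p!}g_p(x)$, notes it agrees with $f$ on a non-empty open set near the origin, and invokes uniqueness of analytic continuation; your double-series rearrangement just makes the agreement step explicit. But you then declare the ``real content'' to be upgrading the pointwise hypothesis (2) to holomorphy of $\tilde f$ on a two-dimensional complex neighbourhood of $[0,1]\times\{0\}$, and propose a Baire--Hartogs subharmonicity argument for it. This is where the gap lies, and it is not repairable from the stated hypotheses. First, hypothesis (1) only makes each $g_p$ analytic on $[0,1]$, i.e.\ holomorphic on \emph{its own} neighbourhood of the segment; compactness gives a neighbourhood for each fixed $p$, not a single $N$ valid for all $p$ (the domains may shrink with $p$, e.g.\ if $g_p$ has singularities at distance $1/p$ from the segment), so the family $u_p=\tfrac1p\log|h_p|$ need not even be defined on a common open set. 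Second, hypothesis (2) bounds $\limsup_p u_p$ only at the \emph{real} points of $[0,1]$, a set with empty interior in $\mathbb{C}$; your closed sets $F_M$ are therefore only known to cover the segment, Baire then yields a uniform bound on a real subinterval at best, and Hartogs' lemma cannot be started from subharmonic control on a one-dimensional set.

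Worse, the goal itself is unattainable: pointwise positivity of the radius on the segment genuinely does not imply planar holomorphy of $\tilde f$, even granting a common domain $N=\mathbb{C}$. Take moving Gaussian peaks
\begin{equation}
g_p(x)=p!\,\bigl(p\,e^{-p^{6}(x-x_p)^{2}}\bigr)^{p}\,,\qquad x_p=\tfrac12+\tfrac1p\,,
\end{equation}
which are entire; at every real $x$ one checks $\limsup_p|g_p(x)/p!|^{1/p}=0$, so the radius in hypothesis (2) is infinite at each point of $[0,1]$, and the germ $f(x,y)=\sum_p g_p(x)y^p/p!$ is analytic near $(0,0)$ since $\Re(x-x_p)^2$ is bounded below there. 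Yet at $(x_p,y)$ with $y\neq0$ fixed, the $p$-th term equals $(py)^p$ and dominates the sum, so $\tilde f$ is unbounded in every two-dimensional neighbourhood of $(\tfrac12,y)$; no Hartogs-type propagation can succeed, and some uniformity in $x$ must be \emph{assumed}. The paper's proof does exactly that and nothing more: it reads hypothesis (2) as supplying $r_0\equiv\min_{0\le x\le 1}r(x)>0$, defines $\tilde f$ on the flat domain $[0,1]\times(-r_0,r_0)$, and stops --- a deliberately elementary unwinding whose stated purpose is cautionary (to block the naive continuations of the preceding counterexamples), with the uniformity checked numerically in the applications, where all $g_p$ are built from the single kernel $\Delta$ and do share a domain and a uniform radius. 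So your instinct that the pointwise-to-uniform passage is the delicate spot is sound, but the Hartogs route you sketch both assumes data the hypotheses do not provide and aims at a conclusion that those hypotheses do not imply; the paper's proof never attempts it.
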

The proof is elementary---the point of this proposition is to avoid reaching naive conclusions, as illustrated by the examples shown above.
\begin{proof}
Defining $r(x)>0$, the radius of convergence of $\sum_{p\geq 0} \tfrac{y^p}{p!}g_p(x)$, we have $r_0\equiv \underset{0\leq x \leq 1}{\min}r(x)>0$. Then the function $\tilde{f}(x,y)=\sum_{p\geq 0} \tfrac{y^p}{p!}g_p(x)$ defined on $[0,1]\times ]-r_0,r_0[$ is analytic and coincides with $f(x,y)$ on an open non-empty set, so that it is the analytic continuation of $f(x,y)$ on $[0,1]\times ]-r_0,r_0[$, and we have $\partial_y^p \tilde{f}(x,0)=g_p(x)$.
\end{proof}
In the following, the first hypothesis of this proposition will be verified analytically. The second hypothesis will however be verified only numerically (leaving in many cases almost no doubt about its validity, for example when we have $\approx 15$ terms in the series).

\subsubsection{Direct numerical check at $m=1$ \label{secm1}}
Let us give a numerical check of this criterion in a situation very close to the one in section \ref{analy}. Considering the same root configuration as in section \ref{spesp} but for $m=1$, we have the moments
\begin{equation}
X_a(\chi_{1})=\begin{cases}
1 & \text{if }a=0 \,, \\
0 & \text{otherwise} \,,
\end{cases}
\end{equation}
which also simplifies greatly the recurrence relations as in the case $m=-1$. Similarly, one can show that the generating function $\gamma_0(t,x)$ satisfies then
\begin{equation}
\arctan(i+\gamma_0(t,x))=\frac{1}{2i}\log(tx)-\arctan\gamma_0(t,x) \,,
\end{equation}
which can be solved by
\begin{equation}
\label{gamma2}
\gamma_0(t,x)=\frac{i}{2}\frac{1+3tx}{1-tx}\left[1-\sqrt{1+\frac{8tx(1-tx)}{(1+3tx)^2}} \right]\equiv \Delta(t x) \,.
\end{equation}
Then in the series \eqref{repfree} one has
\begin{equation}
\label{3}
F_{\chi_1}(\varphi)=3+O(e^{-2n\varphi})
\end{equation}
for any $n>0$. As in the case $m=-1$, one can compute the $m$-derivatives of $F_{\chi_m}(\varphi)$ at $m=1$. For example
\begin{equation}
\label{eqm1dm}
\partial_m F_{\chi_m}(\varphi)|_{m=1}=2\cosh^2 \! \varphi \: \frac{\cosh2\varphi-7-2\sqrt{2}\sinh\varphi\sqrt{\cosh2\varphi-7}}{5\cosh2\varphi-11-4\sqrt{2}\sinh\varphi\sqrt{\cosh2\varphi-7}} \,.
\end{equation}
Generically, it will involve $\gamma_0(-1,e^{-2\varphi})$ as in the case $m=-1$. But in this case, $\gamma_0(-1,e^{-2\varphi})$ has a singularity at
\begin{equation}
\varphi_c=\frac{1}{2}\log(7+4\sqrt{3})>0 \,.
\end{equation}
Hence \eqref{3} and all the resummed values for the $m$-derivatives of $F_{\chi_m}(\varphi)$ at $m=1$ \eqref{eqm1dm} will work only for $\varphi_c<\varphi<+\infty$, in particular not at $\varphi=0$. The advantage of $m=1$ is that one can solve the Bethe equations in finite size and directly check this affirmation numerically. One indeed obtains Figure \ref{m1}, in agreement with Proposition \ref{convcriterion}.

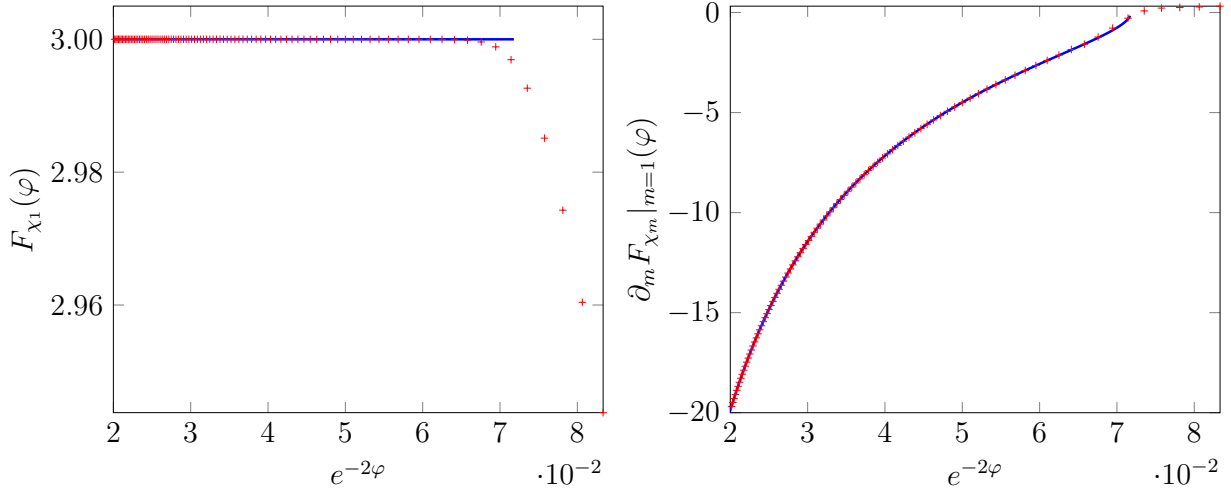
\begin{figure}[H]
\begin{tikzpicture}[scale=0.95]
\begin{axis}[
    enlargelimits=false,
    xlabel = $e^{-2\varphi}$,
    ylabel = $F_{\chi_1}(\varphi)$,
    ymax=3.005,
    xmin=0.02,
     y tick label style={
        /pgf/number format/.cd,
            fixed,
            fixed zerofill,
            precision=2,
        /tikz/.cd
    }
]
\addplot [
    domain=0:0.071796, 
    samples=100, 
        line width=1pt,
    color=blue,
    ]
    {3};
        \addplot+[
    only marks,
    mark=+,
    mark size=1.5pt,
    color=red]
table{sl2m1.dat};

\end{axis}
\end{tikzpicture}
\begin{tikzpicture}[scale=0.95]
\begin{axis}[
    enlargelimits=false,
    xlabel = $e^{-2\varphi}$,
    ylabel = $\partial_mF_{\chi_m}|_{m=1}(\varphi)$,
    xmin=0.02,
    ymin=-20,
     y tick label style={
        /pgf/number format/.cd,
            fixed,
            fixed zerofill,
            precision=0,
        /tikz/.cd
    }
]
        \addplot+[
        no marks,
        line width=1pt,
    color=blue]
table{sl2m1dmth.dat};
        \addplot+[
    only marks,
    mark=+,
    mark size=1.5pt,
    color=red]
table{sl2m1dm.dat};

\end{axis}
\end{tikzpicture}

\caption{$F_{\chi_1}(\varphi)$ (left panel) and $\partial_m F_{\chi_m}(\varphi)|_{m=1}$ (right panel) as functions of $e^{-2\varphi}$ for $\varphi_c\leq \varphi<\infty$. We show numerical values obtained from solving the Bethe equations (red crosses) and analytic values \eqref{3}--\eqref{eqm1dm} (blue curves).}
\label{m1}
\end{figure}

\subsubsection{Numerical check at $m=-1$: analytic continuation of the energy at $\varphi=0$}
As explained in \cite{granetjacobsensaleurxxz}, around $m=0$ one can efficiently expand $F_{\chi_m^{(1)}}(\varphi)$ in powers of $m$:
\begin{equation}
\label{exparoundm0}
\begin{aligned}
F_{\chi_m^{(1)}}(\varphi=0)&=2m-\frac{\pi^2}{6}m^3+\frac{\pi^2}{3}m^4+\frac{-60\pi^2+\pi^4}{120}m^5+\left(\frac{2\pi^2}{3}-\frac{11\pi^4}{180}\right)m^6\\
&+\left(-\frac{5\pi^2}{6}+\frac{2\pi^4}{9}-\frac{\pi^6}{5040}\right)m^7+\left(\pi^2-\frac{7\pi^4}{12}+\frac{31\pi^6}{2520}\right)m^8\\
&+O(m^9) \,,
\end{aligned}
\end{equation}
as well as with the twist $\varphi$:
\begin{equation}
\label{aroundm0}
\begin{aligned}
F_{\chi_m^{(1)}}(\varphi)&=2m  \cosh^2\varphi-\frac{\pi^2}{6}\cosh(2\varphi)m^3+\frac{\pi^2}{3}(1+\tanh^2\varphi)m^4+O(m^5) \,.
\end{aligned}
\end{equation}
This matches the numerical solution of the $s=-1$ Bethe equations \eqref{befin} in large size $L$, obviously for a positive number of roots $N$, hence $m\geq 0$. In the derivation of these series in \cite{granetjacobsensaleurxxz}, the $m$-dependence comes from sums over the Bethe numbers of these root configurations
\begin{equation}
\frac{1}{L}\sum_{i=1}^N \left(\frac{I_i}{L}\right)^a=\begin{cases}
\frac{m^{a+1}}{a+1} & \text{if }a\text{ is even}\\
0 & \text{if }a\text{ is odd}\\
\end{cases}+O(L^{-1}) \,.
\end{equation}
Changing $m$ into $-m$ corresponds to placing a minus sign in front of every sum over Bethe numbers, hence to inverting the right-hand side of the Bethe equations \eqref{befin}. This is exactly equivalent to changing $s$ into $-s$. Since the expansions \eqref{aroundm0} hold only in the thermodynamic limit, this correspondence also holds only in the thermodynamic limit. Hence the free energy \eqref{aroundm0} for $m<0$ corresponds to the free energy of the $s=1$ Bethe equations for $|m|>0$ in the thermodynamic limit, with the same root configuration. See Figure \ref{checkexpansionneg} for the numerical verification of this fact. However, such root configuration for $s=1$ is valid only for $0\leq |m| \leq 1/2$, hence one cannot reach $m=-1$ with this technique. Moreover, at $\varphi=0$ the expansion is observed to have a radius of convergence $\approx 0.3$, which is not even enough to reach the limit point $m=-1/2$.

\begin{figure}[H]
\begin{center}
\begin{tikzpicture}[scale=1]
\begin{axis}[
    enlargelimits=false,
    ymin=-2.2,
    ymax=0,
    xlabel = $m$,
    ylabel = $F_{\chi_m^{(1)}}(\varphi)$,
     y tick label style={
        /pgf/number format/.cd,
            fixed,
            fixed zerofill,
            precision=1,
        /tikz/.cd
    }
]
\addplot+[
    only marks,
    mark=+,
    mark size=2.9pt,
    color=red]
table{freeenergysl2c_phi0.dat};
        \addplot+[
    only marks,
    mark=+,
    mark size=2.9pt,
    color=red]
table{freeenergysl2c_phi0v5.dat};
        \addplot+[
    only marks,
    mark=+,
    mark size=2.9pt,
    color=red]
table{freeenergysl2c_phi0v75.dat};
        \addplot+[
    only marks,
    mark=+,
    mark size=2.9pt,
    color=red]
table{freeenergysl2c_phi1.dat};
        \addplot+[
    only marks,
    mark=+,
    mark size=2.9pt,
    color=red]
table{freeenergysl2c_phi1v25.dat};
\addplot [
    domain=-0.3:-0, 
    samples=100, 
    line width=1pt,
    color=blue
    ]
    {2.* x - 1.64493 *x^3 + 3.28987 *x^4 - 4.12306 *x^5 + 0.626958 *x^6 + 
 13.231* x^7 - 35.1258* x^8 + 25.0563* x^9 + 110.675 *x^10 - 
 381.759* x^11 + 244.465* x^12 + 1734.06* x^13 - 5617.02 *x^14 + 
 2031.11 *x^15 + 32573.9 *x^16 - 90575.* x^17 - 14995.7* x^18};
\addplot [
    domain=-0.4:-0, 
    samples=100, 
    line width=1pt,
    color=blue
    ]
    {2.543080635 *x - 2.53826590 *x^3 + 3.99242693* x^4 - 3.4571686* x^5 - 
 0.618722 *x^6 + 10.059257 *x^7 - 19.89481 *x^8 + 8.6068 *x^9 + 
 55.267* x^10 - 154.347 *x^11 + 105.66* x^12 + 445.6 *x^13 - 1467. *x^14 + 
 1151. *x^15 + 4.59*10^3 *x^16 - 1.61*10^4* x^17 + 1.3*10^4 *x^18};
\addplot [
    domain=-0.45:-0, 
    samples=100, 
    line width=1pt,
    color=blue
    ]
    {3.352409615 *x - 3.8695587 *x^3 + 4.6170476* x^4 - 2.222147* x^5 - 
 2.16528 *x^6 + 8.39427 *x^7 - 12.0010* x^8 + 2.634* x^9 + 27.18 *x^10 - 
 60.20* x^11 + 32.5* x^12 + 137.* x^13 - 3.9*10^2 *x^14 + 3.*10^2 *x^15 + 
 8.*10^2 *x^16 - 3.*10^3 *x^17 + 0.*10^3 *x^18};
\addplot [
    domain=-0.5:-0, 
    samples=100, 
    line width=1pt,
    color=blue
    ]
    {4.762195691 *x - 6.1885639 *x^3 + 5.1980761* x^4 - 0.220654 *x^5 - 
 3.89044 *x^6 + 7.0360* x^7 - 6.4549* x^8 - 1.391 *x^9 + 15.31 *x^10 - 
 24.0 *x^11 + 8.2* x^12 + 42. *x^13 - 9.*10^1* x^14};
\addplot [
    domain=-0.5:-0, 
    samples=100, 
    line width=1pt,
    color=blue
    ]
    {7.132289480 *x - 10.08721187* x^3 + 5.657208330 *x^4 + 2.598291560 *x^5 - 
 5.398772790 *x^6 + 5.298818010 *x^7 - 2.098971390* x^8 - 
 4.078504650 *x^9 + 9.958734840* x^10 - 9.444429250* x^11 - 
 1.664262020 *x^12 + 19.38846740* x^13 - 28.23076612 *x^14 + 
 9.226442070* x^15 + 39.11473999 *x^16 - 80.05985420 *x^17 + 
 48.07928020 *x^18};
\end{axis}
\end{tikzpicture}
\end{center}
\caption{$F_{\chi_m^{(1)}}(\varphi)$ as a function of $m$, obtained by solving numerically the Bethe equations \eqref{befin} at $s=1$ with $N=L |m|$ roots for $m>-1/2$ with $N=80$ roots (red crosses), and by expanding around $m=0$ within the radius of convergence with equation \eqref{aroundm0} (blue curves), for different values of $\varphi \equal 0,\, 0.5,\, 0.75,\, 1,\, 1.25$ from top to bottom inside the panel.}
\label{checkexpansionneg}
\end{figure}
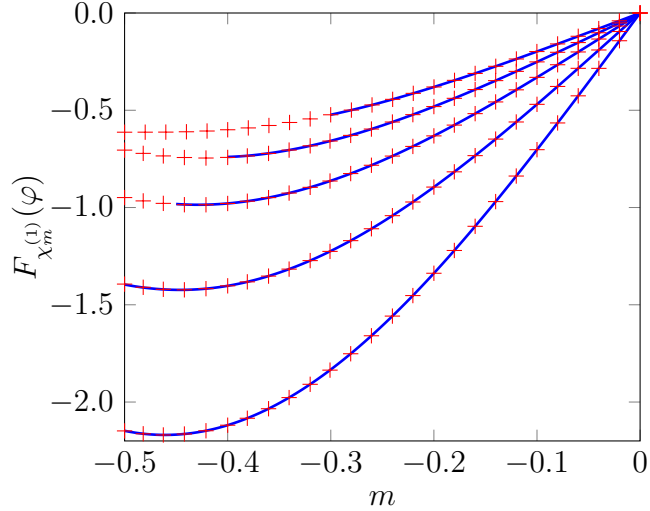

The results of section \ref{spesp} show that, remarkably, one can expand $F_{\chi_m^{(1)}}(\varphi=0)$ around $m=-1$ and compute recursively all the coefficients of the expansion, as around $m=0$ in \eqref{exparoundm0}. The coefficients of the expansion read
\begin{equation}
\label{aroundm-1}
\begin{aligned}
F_{\chi_m^{(1)}}(\varphi=0)&=1-2\sqrt{5}(m+1)+\frac{23\pi^2}{30\sqrt{5}}(m+1)^3+\frac{23\pi^2}{75}(m+1)^4\\
&+\left(\frac{23\pi^2}{50\sqrt{5}}-\frac{109\pi^4}{3000\sqrt{5}}\right)(m+1)^5+\left(\frac{46\pi^2}{375}-\frac{59\pi^4}{2500}\right)(m+1)^6\\
&+\left(\frac{23\pi^2}{150\sqrt{5}}-\frac{533\pi^4}{11250\sqrt{5}}+\frac{359\pi^6}{393750\sqrt{5}}\right)(m+1)^7\\
&+\left(\frac{23\pi^2}{625}-\frac{189\pi^4}{12500}+\frac{427\pi^6}{562500}\right)(m+1)^8+O((m+1)^9) \,.
\end{aligned}
\end{equation}
This result can be compared to the numerics by checking that this expansion around $m=-1$ matches the values for $m>-1/2$ in \eqref{exparoundm0}, as shown in Figure \ref{checkexpansion}. But one can go further and obtain the $\varphi$-dependence of $F_{\chi_m}(\varphi)$ when expanded around $m=-1$. The first terms read
\begin{equation}
\label{aroundm-1twist}
\begin{aligned}
F_{\chi_m^{(1)}}(\varphi)&=1-2\cosh^2\varphi \sqrt{5+4\tanh\varphi}(m+1)+O((m+1)^3) \,.
\end{aligned}
\end{equation}
This can again be compared with the numerics at $m=-1/2$, where one can simply solve numerically \eqref{befin} for $s=1$ and $N=L|m|=L/2$ for large $L$, with a twist $\varphi$ as in equation \eqref{befinm1tw}, see Figure \ref{checkexpansion}. 

All these expansions and their agreement with the numerics give a very strong check of our method.

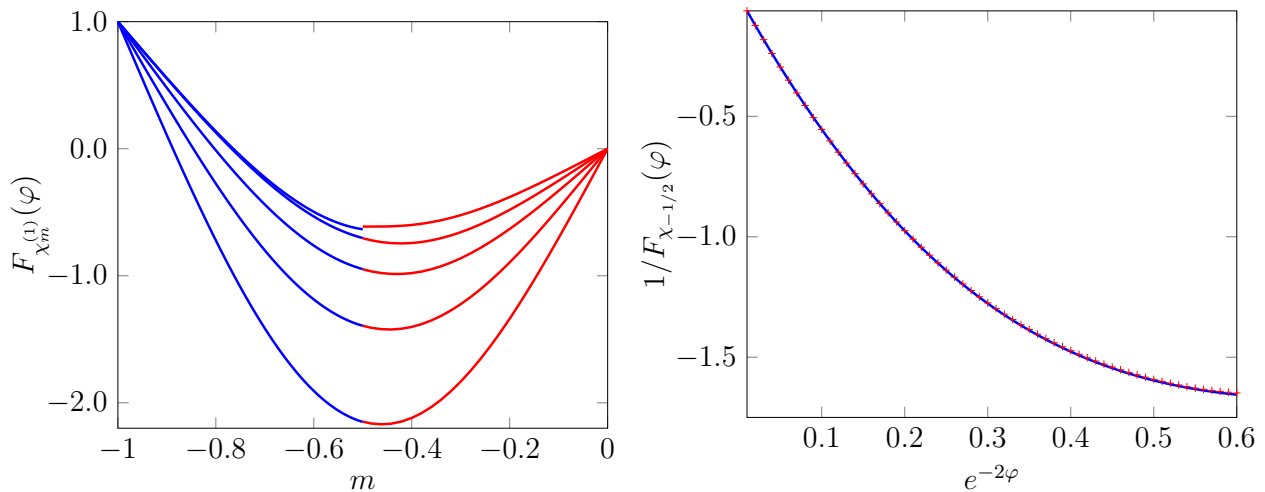
\begin{figure}[H]
\begin{tikzpicture}[scale=0.95]
\begin{axis}[
    enlargelimits=false,
    ymin=-2.2,
    xlabel = $m$,
    ylabel = $F_{\chi_m^{(1)}}(\varphi)$,
     y tick label style={
        /pgf/number format/.cd,
            fixed,
            fixed zerofill,
            precision=1,
        /tikz/.cd
    }
]
        \addplot+[
    no marks,
    line width=1pt,
    color=red]
table{freeenergysl2c_phi0.dat};
        \addplot+[
    no marks,
    line width=1pt,
    color=red]
table{freeenergysl2c_phi0v5.dat};
        \addplot+[
    no marks,
    line width=1pt,
    color=red]
table{freeenergysl2c_phi0v75.dat};
        \addplot+[
    no marks,
    line width=1pt,
    color=red]
table{freeenergysl2c_phi1.dat};
        \addplot+[
    no marks,
    line width=1pt,
    color=red]
table{freeenergysl2c_phi1v25.dat};
\addplot [
    domain=-1:-0.5, 
    samples=100, 
    line width=1pt,
    color=blue
    ]
    {1-4.472135954999*(x+1)+3.383929640619*(x+1)^3+3.0266786830007*(x+1)^4+0.447581*(x+1)^5-1.08986*(x+1)^6-0.995115*(x+1)^7-0.379823*(x+1)^8+0.141345*(x+1)^9+0.488351*(x+1)^10+0.748754*(x+1)^11+1.069798*(x+1)^12+1.55923 *(x+1)^13+2.32722 *(x+1)^14+3.53458*(x+1)^15+5.40816*(x+1)^16+8.30085 *(x+1)^17+12.7474*(x+1)^18+19.5391 *(x+1)^19+29.8324*(x+1)^20+45.2627*(x+1)^21+68.0469 *(x+1)^22};
\addplot [
    domain=-1:-0.5, 
    samples=100, 
    line width=1pt,
    color=blue,
    ]
    {1 - 4.51461737 *(1 + x) +  3.1872324 *(1 + x)^3 + 
 2.8239225 *(1 + x)^4 + 0.358950 *(1 + x)^5 - 1.213036 *(1 + x)^6 - 
 1.25476 *(1 + x)^7 - 0.77450* (1 + x)^8};
\addplot [
    domain=-1:-0.5, 
    samples=100, 
    line width=1pt,
    color=blue,
    ]
    {1 - 6.65621987* (1 + x) + 6.3513037* (1 + x)^3 + 
 3.8167631 *(1 + x)^4 - 1.486199 *(1 + x)^5 - 2.460975 *(1 + x)^6 - 
 1.4275 *(1 + x)^7 - 0.5829 *(1 + x)^8};
\addplot [
    domain=-1:-0.5, 
    samples=100, 
    line width=1pt,
    color=blue,
    ]
    {1 - 5.25741224* (1 + x) + 4.2271472 *(1 + x)^3 + 
 3.2161429* (1 + x)^4 - 0.261304 *(1 + x)^5 - 1.704217 *(1 + x)^6 - 
 1.35686 *(1 + x)^7 - 0.77509* (1 + x)^8};
\addplot [
    domain=-1:-0.5, 
    samples=100, 
    line width=1pt,
    color=blue,
    ]
    {1 - 9.04104679 *(1 + x) + 10.1323066 *(1 + x)^3 + 
 4.4828024 *(1 + x)^4 - 3.609027 *(1 + x)^5 - 3.443460 *(1 + x)^6 - 
 1.2598* (1 + x)^7 - 0.0406* (1 + x)^8};

\end{axis}
\end{tikzpicture}
\begin{tikzpicture}[scale=0.95]
\begin{axis}[
    enlargelimits=false,
    ymin=-1.75,
    xmax=0.6,
    xlabel = $e^{-2\varphi}$,
    ylabel = $1/F_{\chi_{-1/2}}(\varphi)$,
     y tick label style={
        /pgf/number format/.cd,
            fixed,
            fixed zerofill,
            precision=1,
        /tikz/.cd
    }
]
        \addplot+[
  only marks,
    mark=+,
    mark size=1.5 pt,
    color=red]
    table[
           x expr=\thisrowno{0}, 
           y expr=1/\thisrowno{1} 
         ]
{freeenergysl2c_phi_mm0v5_L200.dat};
        \addplot+[
  no marks,
  line width=1 pt,
    color=blue]
    table[
           x expr=\thisrowno{0}, 
           y expr=1/\thisrowno{1} 
         ]
{sl2rvaluestwistth.dat};
%
\end{axis}
\end{tikzpicture}
\caption{Left: $F_{\chi_m^{(1)}}(\varphi)$ as a function of $m$, obtained by solving numerically the Bethe equations \eqref{befin} at $s=1$ with $N=L |m|$ roots for $m>-1/2$ with $N=80$ roots (red), and by expanding around $m=-1$ with equation \eqref{aroundm-1} (blue), for different values of $\varphi \equal 0,\, 0.5,\, 0.75,\, 1,\, 1.25$ from top to bottom inside the panel. Right: $1/F_{\chi_{-1/2}}(\varphi)$ as a function of $e^{-2\varphi}$, obtained by solving numerically the Bethe equations \eqref{befin} at $s=1$ with the twist with $N=L |m|$ roots for $m=-1/2$ (red), and by evaluating the expansion \eqref{aroundm-1twist} at $m=-1/2$ (blue).}
\label{checkexpansion}
\end{figure}

\section{Exploring the spectrum \label{explore}}
The values \eqref{resfirst} and \eqref{aroundm-1} are non-trivial results since they constitute the analytic continuation of a function of $m$ around $m=-1$, whereas its definition is for $m\geq 0$, and its natural expansion is around $m=0$. Their calculation relied on the fact that for a very particular root configuration $\chi_m$ all the moments $X_a(\chi_m)$ except one vanish at $m=-1$, which allows one to compute all the coefficients of the series involved in the Bethe root $\lambda_k$, or the generating function $\gamma_p(t,x)$. However, because of that reason, the state considered is very particular and in the limit $\varphi\to \infty$ it is \textit{not} the ground state, and nothing guarantees that the root structure of the ground state allows for the same mechanism. 

To explore the rest of the spectrum, i.e. to compute the analytic continuation of energies at $m=-1$ whose moments are \textit{not} given by \eqref{secondpseudo}, we proceed as follows. We consider a {\sl trajectory} $\xi\to \{X_a^\xi\}_a$ in the space of moments at $m=-1$, such that at $\xi=0$, $\{X_a^{\xi=0}\}_a$ are the moments of the special root configuration \eqref{secondpseudo}. By this, we mean the analytic continuation at $m=-1$ of the moments of a family of filling functions $\xi\to \chi^\xi_m(x)$ for $m>0$. The idea is that, in the same way that all the $m$-derivatives of $F_{\chi_m}(\varphi)$ can be evaluated at $m=-1$ for this special root configuration, all the $\xi$-derivatives of the energy of state with moments $\{X_a^\xi\}_a$ can be evaluated at $\xi=0$, whenever $\{X_a^{\xi=0}\}_a$ are the moments of the special root configuration \eqref{secondpseudo}. 

\subsection{Expanding along a trajectory \label{expandtraj}}
To that end, we expand the moments in terms of $\xi$ along the trajectory
\begin{equation}
X_a^\xi=\sum_{p\geq 0}\xi^p X_{a,p} \,.
\end{equation}
By construction, we have $X_{0,0}=-1$ and $X_{a,0}=0$ for $a\neq 0$. Following section \ref{derivsec} for the $m$-derivatives, we have for a function $F(t)$ with a Laurent series at $t=0$
\begin{equation}
\frac{1}{L}\sum_{k}F(e^{2i\pi I_k/L})=\sum_{p\geq 0}\xi^p \Xi^p_t[F(t)]+O(L^{-1}) \,,
\end{equation}
where we introduced the operator $\Xi^p_t[F(t)]$ that takes a function of $t$ and  returns the following complex number
\begin{equation}
\label{xip}
\Xi^p_t[F(t)]=\sum_{a\geq -n} X_{a,p}F_a \,.
\end{equation}
The index $t$ merely indicates the dummy variable on which $\Xi^p_t$ acts. By construction, $\Xi^0_t[F(t)]=-F_0$. We introduce the generating functions
\begin{equation}
\gamma_p(t,x)=\sum_{a,b\geq 1}t^a x^b \frac{1}{p!}\left.\left(\frac{d}{d\xi}\right)^p c_{ab}\right|_{\xi=0} \,.
\end{equation}
The coefficients $c_{ab}$ indeed now depend on $\xi$ on the trajectory.  Again, by construction $\gamma_0(t,x)$ is given by \eqref{gamma}. The other $\gamma_p(t,x)$ satisfy an equation analogous to \eqref{genderiv}
\begin{equation}
\label{rectraj}
\begin{aligned}
\arctan\left(i+\sum_{p\geq 0}\gamma_p(t,x)\xi^p\right)&=\frac{\log(xt)}{2i}+\arctan \left(\sum_{p\geq 0}\gamma_p(t,x)\xi^p\right)\\
&-\sum_{p\geq 1}\xi^p \Xi^p_u\left[\arctan \left(\sum_{p\geq 0}(\gamma_p(t,x)-\gamma_p(u,x))\xi^p\right) \right] \,.
\end{aligned}
\end{equation}
This equation again allows us to solve for all the $\gamma_p(t,x)$ recursively. For example
\begin{equation}
\gamma_1(t,x)=\frac{\Xi^1_u[\arctan(\Delta(tx)-\Delta(ux))]}{\arctan'(\Delta(tx))-\arctan'(i+\Delta(tx))} \,.
\end{equation}
As for the  energy $F_\xi(\varphi)$, it reads
\begin{equation}
\label{fetraj}
F_\xi(\varphi)=\sum_{p\geq 0}\xi^p \Xi_t^p\left[\frac{2}{1+\left(i+\sum_{q\geq 0}\gamma_q(t,x)\xi^q \right)^2} \right] \,,
\end{equation}
with $x=e^{-2\varphi}$. Hence all the $\xi$-derivatives of $F_\xi(\varphi)$ can be expressed in terms of the $\gamma_p(t,x)$ and computed explicitly.

\medskip

We can now justify the use of the term `pseudo-vacuum' for the state \eqref{secondpseudo} at $m=-1$. It indeed shares remarkable properties with the usual pseudo-vacuum defined by having no Bethe roots (hence that is at $m=0$). First, the energies of these two states are both independent of the twist $\varphi$, which is never true for a generic root configuration. Second, and most importantly, one can compute the energy of any state whose root configuration is close to them: indeed, at $m$ close to zero the Bethe equations decouple and one can always solve for the Bethe roots, while for the second pseudo-vacuum we saw that one can calculate the perturbation of its energy along a trajectory. This means that the energy levels of the spin chains can be explored from the usual pseudo-vacuum as well as from this other pseudo-vacuum.

We note that the crucial ingredient for this other pseudo-vacuum to exist (i.e., for trajectories to be expandable around it) is the absence of singularities of the `kernel' $\gamma_0(-1,x)=\Delta(-x)$ in \eqref{gamma} for $0\leq x=e^{-2\varphi} \leq 1$, which allows us to analytically continue up to $\varphi=0$. At $m=1$, for example, we saw in Section \ref{secm1} that there is also a special root configuration for which the energy can be computed, but then \eqref{gamma2} \textit{has} a singularity for $0\leq x \leq 1$, so that these energies \textit{cannot} be continued to $\varphi=0$. Hence this state at $m=1$ cannot be considered as another pseudo-vacuum.

One should also note that this construction of another pseudo-vacuum is not an exceptional feature of the $s=-1$ chain. The same reasoning can indeed be performed for the usual $s=1/2$ Heisenberg chain, whose usual pseudo-vacuum is one of the two ferromagnetic ground states where all the spins are either up or down. Conventionally the $m=0$ state $| \! \Uparrow \rangle$ is taken as the pseudo-vacuum, so the other $m=1$ state $| \! \Downarrow \rangle$ is what we would call the second pseudo-vacuum. In this case, one finds at $m=1$ such a special root configuration with a kernel that has no singularity for $0\leq x \leq 1$. 
Expanding for example the free energy with all the roots symmetrically packed around the origin (which is the root configuration of the ground state in the antiferromagnetic regime), around $m=1$, one finds exactly the same coefficients as around $m=0$, up to a minus sign for odd coefficients. This implies that the function is symmetric around $m=1/2$ where half of the spins are down and half up, which implies, non surprisingly, that the energies are unchanged if all the spins are flipped. This means that in the case of the $s=1/2$ spin chain, this new pseudo-vacuum is \textit{exactly} the second ferromagnetic ground state, around which one could have performed the ABA. This gives another justification for the use of the term `pseudo-vacuum' for these special states.

\subsection{A trajectory to the state \eqref{groundstate} \label{trajgr}}
Let us now choose a trajectory that goes to the state \eqref{groundstate} at $m=-1$. We can take for example
\begin{equation}
\label{eqtraj}
X^\xi_a=\int_{-1}^{1}g(x)e^{2i\pi a x}dx\quad \text{with }g(x)=\begin{cases}
-1 & \text{for } -1/2+\xi/2<x<1/2-\xi/2 \,, \\
-1 & \text{for }3/4<x<3/4+\xi/2 \,, \\
-1 & \text{for }-3/4-\xi/2<x<-3/4 \,, \\
\phantom{-}0 & \text{otherwise} \,.
\end{cases}
\end{equation}
This trajectory at $m=-1$ is depicted in Figure \ref{trajecto} with th conventions given at the beginning of Section \ref{specialroot}. It has the property that at $\xi=0$ it is the second pseudo-vacuum, at $\xi=1/2$ it is the  state \eqref{groundstate}, and at $\xi=-1/2$ it is the second pseudo-vacuum again. This last property gives a strong check of the expansion: its evaluation at $\xi=-1/2$ should give back $1$, the second pseudo-vacuum energy.

\begin{figure}[H]
\begin{center}
\begin{tikzpicture}[scale=1]
\draw [black,thick,domain=0:360] plot ({cos(\x)}, {sin(\x)});
\draw [red,thick, line width=2pt,domain=-90:90] plot ({1.1*cos(\x)}, {1.1*sin(\x)});
\draw [red,thick, line width=2pt,domain=90:270] plot ({1.2*cos(\x)}, {1.2*sin(\x)});
\draw (0,0) node {\tiny $\xi=-1/2$};
\end{tikzpicture}
\hspace{1cm}
\begin{tikzpicture}[scale=1]
\draw [black,thick,domain=0:360] plot ({cos(\x)}, {sin(\x)});
\draw [red,thick, line width=2pt,domain=-90:90] plot ({1.1*cos(\x)}, {1.1*sin(\x)});
\draw [red,thick, line width=2pt,domain=150:210] plot ({1.2*cos(\x)}, {1.2*sin(\x)});
\draw [red,thick, line width=2pt,domain=120:240] plot ({1.1*cos(\x)}, {1.1*sin(\x)});
\draw (0,0) node {\tiny $-1/2<\xi<0$};
\end{tikzpicture}
\hspace{1cm}
\begin{tikzpicture}[scale=1]
\draw [black,thick,domain=0:360] plot ({cos(\x)}, {sin(\x)});
\draw [red,thick, line width=2pt,domain=0:360] plot ({1.1*cos(\x)}, {1.1*sin(\x)});
\draw (0,0) node {\tiny $\xi=0$};
\end{tikzpicture}
\hspace{1cm}
\begin{tikzpicture}[scale=1]
\draw [black,thick,domain=0:360] plot ({cos(\x)}, {sin(\x)});
\draw [red,thick, line width=2pt,domain=-150:150] plot ({1.1*cos(\x)}, {1.1*sin(\x)});
\draw [red,thick, line width=2pt,domain=60:90] plot ({1.2*cos(\x)}, {1.2*sin(\x)});
\draw [red,thick, line width=2pt,domain=-90:-60] plot ({1.2*cos(\x)}, {1.2*sin(\x)});
\draw (0,0) node {\tiny $0<\xi<1/2$};
\end{tikzpicture}
\hspace{1cm}
\begin{tikzpicture}[scale=1]
\draw [black,thick,domain=0:360] plot ({cos(\x)}, {sin(\x)});
\draw [red,thick, line width=2pt,domain=-90:90] plot ({1.1*cos(\x)}, {1.1*sin(\x)});
\draw [red,thick, line width=2pt,domain=-90:90] plot ({1.2*cos(\x)}, {1.2*sin(\x)});
\draw (0,0) node {\tiny $\xi=1/2$};
\end{tikzpicture}
\end{center}
\caption{Sketch of the root structure of the trajectory for different values of $\xi$.}
\label{trajecto}
\end{figure}
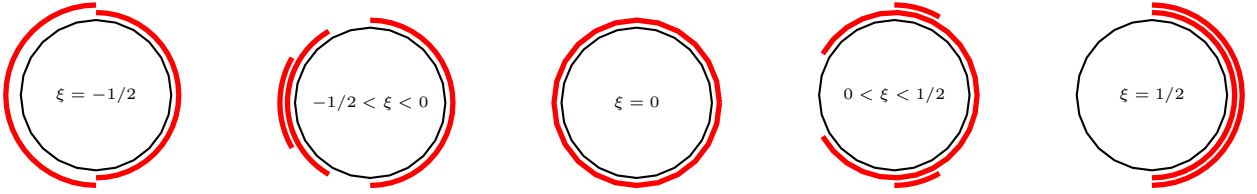

Calculating the moments $X_a^\xi$, we obtain the values for $\Xi_t^p[F(t)]$ for a function $F(t)$
\begin{equation}
\label{xitptraj}
\Xi_t^p[F(t)]=\frac{(i\pi)^{p-1}}{2p!}\left[ (1-(-1)^p)\slashed{\partial}^{p-1}F(-1)+(-1)^p\slashed{\partial}^{p-1}F(i)-\slashed{\partial}^{p-1}F(-i)\right] \,.
\end{equation}
Applying then the recurrence \eqref{rectraj} and formula \eqref{fetraj}, one obtains analytic expressions for all the coefficients in $\xi$ of the energy along the trajectory. For example the first two terms read
\begin{equation}
\begin{aligned}
F(\xi)&=1+\frac{\xi}{2}(-4\sqrt{5}+2\Re \sqrt{5-4i})\\
&-\frac{\xi^2}{2\sqrt{205}}\Big(\pi \Re (5+2i)\sqrt{25-20i}-4\Re ((2+5i)\sqrt{25+20i}+2\sqrt{41})\argth \tfrac{\sqrt{5}-\sqrt{5-4i}}{2}\\
&+2\sqrt{5}\Im (2+5i)\sqrt{5+4i}\arctan \Im \sqrt{5-4i}\Big)\\
&+O(\xi^3) \,.
\end{aligned}
\end{equation}
We computed the coefficients up to $\xi^{14}$ using the recurrence relations written in Appendix \ref{recurtraj}. The energy of the trajectory is reported in Figure \ref{resultgroundstate}. 

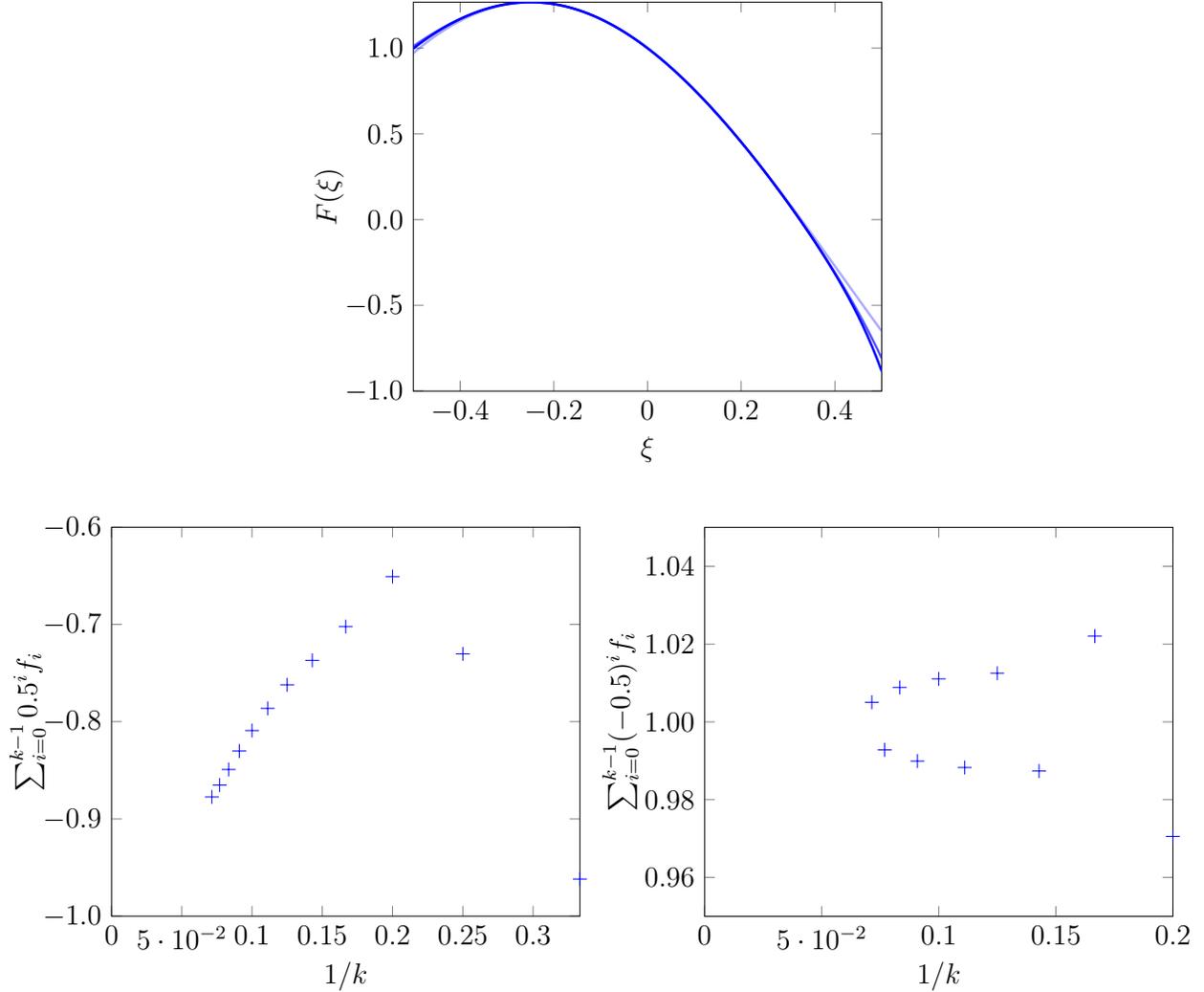
\begin{figure}[H]
\begin{center}
\begin{tikzpicture}[scale=0.95]
\begin{axis}[
    enlargelimits=false,
    ymin=-1,
    xlabel = $\xi$,
    ylabel = $F(\xi)$,
     y tick label style={
        /pgf/number format/.cd,
            fixed,
            fixed zerofill,
            precision=1,
        /tikz/.cd
    }
]
\addplot [
    domain=-0.5:0.5, 
    samples=100, 
    line width=1pt,
    color=blue,
    opacity=0.33,
    ]
{1.-2.0843415503833818*x-3.6786511453555093*x^2+1.8521456475091602*x^3+1.2729662567394795*x^4};
\addplot [
    domain=-0.5:0.5, 
    samples=100, 
    line width=1pt,
    color=blue,
    opacity=0.66,
    ]
{1.-2.0843415503833818*x-3.6786511453555093*x^2+1.8521456475091602*x^3+1.2729662567394795*x^4-1.647808543764578*x^5-2.219744951184276*x^6-3.2194722411616423*x^7-6.207614745862771*x^8-11.666870692572967*x^9};
\addplot [
    domain=-0.5:0.5, 
    samples=100, 
    line width=1pt,
    color=blue,
    opacity=1,
    ]
{1.-2.0843415503833818*x-3.6786511453555093*x^2+1.8521456475091602*x^3+1.2729662567394795*x^4-1.647808543764578*x^5-2.219744951184276*x^6-3.2194722411616423*x^7-6.207614745862771*x^8-11.666870692572967*x^9-21.633346365220138*x^10-38.76719813230275*x^11-65.69322063039509*x^12-100.21286387073893*x^13-122.14702607037522*x^14};

\end{axis}
\end{tikzpicture}
\end{center}

\begin{tikzpicture}[scale=0.95]
\begin{axis}[
    enlargelimits=false,
    ymin=-1,
    xmin=0,
    ymax=-0.6,
    xlabel = $1/k$,
    ylabel = $\sum_{i=0}^{k-1} 0.5^i f_i$,
     y tick label style={
        /pgf/number format/.cd,
            fixed,
            fixed zerofill,
            precision=1,
        /tikz/.cd
    }
]
\addplot[
    only marks,
    mark=+,
    mark size=2.9pt,
    color=blue]
table[
           x expr=1/\thisrowno{0}, 
           y expr=\thisrowno{1} 
         ]{SL2C_res_groundstate_traj1_tab.dat};

\end{axis}
\end{tikzpicture}
\begin{tikzpicture}[scale=0.95]
\begin{axis}[
    enlargelimits=false,
    xmin=0,
    ymax=1.05,
    ymin=0.95,
    xlabel = $1/k$,
    ylabel = $\sum_{i=0}^{k-1} (-0.5)^i f_i$,
     y tick label style={
        /pgf/number format/.cd,
            fixed,
            fixed zerofill,
            precision=2,
        /tikz/.cd
    }
]
\addplot[
    only marks,
    mark=+,
    mark size=2.9pt,
    color=blue]
table[
           x expr=1/\thisrowno{0}, 
           y expr=\thisrowno{1} 
         ]{SL2C_res_groundstate_traj1_check_tab.dat};
\addplot [
    domain=-0.5:0.5, 
    samples=100, 
    line width=1pt,
    color=black,
    opacity=1
    ]
{0+0*x};

\end{axis}
\end{tikzpicture}

\caption{Top: energy of the trajectory \eqref{eqtraj} as a function of $\xi$, up to $\xi^4$, $\xi^9$, $\xi^{14}$ (from light to dark blue). Bottom: energy at $\xi=1/2$ (left) and $\xi=-1/2$ (right), taking into account the first $k$ terms in the expansion in $\xi$, as a function of $1/k$.}
\label{resultgroundstate}
\end{figure}
Because of the small oscillations observed around a seemingly straight line, a sensible extrapolation to $k=\infty$ requires to take several points to average them out. Performing a simple linear fit $a+\tfrac{b}{k}$ on the almost aligned points for $k\geq 5$ we obtain $F(\xi=1/2)\approx -0.992$, and for $k\geq 6$ we obtain $F(\xi=1/2)\approx -1.002$. Hence this strongly suggests
\begin{equation} \label{gs-candidate}
F(\xi=1/2)=-1\,,
\end{equation}
and hence the energy level of the $SL(2,\mathbb{C})$ spin chain
\begin{equation}
\mathcal{E}(m=-1)=0 \,.
\end{equation}
This value corresponds to the value of the ground state obtained in \cite{derkachovkorchemsky2} by calculating numerically the ground state energy for small sizes up to $L=8$ and extrapolating to the thermodynamic limit\footnote{In \cite{derkachovkorchemsky2} is also obtained that the $1/L$ corrections to this result vanish.}. To ensure that we are dealing with the same state indeed, we need to check that the energy of this state is minimal with respect to all excitations.

\subsection{First-level particle and hole excitations above the state \eqref{groundstate}}
In this section, we verify that the state \eqref{groundstate} at $m=-1$ is minimal with respect to  microscopic excitations described within Theorem \ref{thm1}.

\medskip

At $m>0$, since all the roots are real, there are only particle excitations (i.e., adding a Bethe root with a Bethe number that is not already taken by another root, and thus increasing the value of $m$ by $1/L$) or hole excitations (i.e., removing one of the Bethe roots, and thus decreasing the value of $m$ by $1/L$). Because of the structure of \eqref{eqgs}, the only possible values of $z=\tfrac{I_k}{L}$ for the Bethe number involved are such that $\chi_m(z)=0$ for particle excitations and $\chi_m(z)=1$ for hole excitations. We will call \textit{first-level} particle or hole excitations, those such that $-1/2<z<1/2$, for which Theorem \ref{thm1} applies. As $m$ varies, the authorized values of $z$ for first-level particle or hole excitations vary correspondingly. At $m=-1$, they become $-1/4<z<1/4$ for hole excitations, and $1/4<z<1/2$ or $-1/2<z<-1/4$ for particle excitations.

\medskip

If we consider a macroscopic but tiny number $\eta L$ of such excitations around $z$, then denoting by $\chi^z_m$ the resulting filling function of the new Bethe root distribution, its moments $X_a(\chi_m^z)$ are, at first order in $\eta$
\begin{equation}
\label{momexc}
X_a(\chi_m^z)=X_a(\chi_m)+\eta e^{2i\pi a z}+O(\eta^2)\,.
\end{equation}
This writing encompasses the two types of excitations according to the sign of $\eta$: for particle excitations we have $\eta>0$, and for hole excitations we have $\eta<0$. From the moments, one can deduce at first order in $\eta$ the change in the energy at large $\varphi$, with the expansion presented in Theorem \ref{thm1}. One has the first terms
\begin{equation}
\partial_\eta F_{\chi^z_m}(\varphi)|_{\eta=0}=\frac{e^{-2i\pi z}}{2}e^{2\varphi}+1+4m-2e^{2i\pi z}X_{-1}-2e^{-2i\pi z}X_1+O(e^{-2\varphi})\,.
\end{equation}
We should now recall that this is the change to the energy of an eigenstate of only one of the two copies of the $SL(2,\mathbb{R})$ spin chain composing the whole $SL(2,\mathbb{C})$ spin chain. Since the eigenstate of the other $SL(2,\mathbb{R})$ spin chain copy must have a magnetisation $m'=-2-m$, we conclude that it must undergo an excitation of the opposite type, i.e., with $\eta$ changed into $-\eta$. Moreover, contrarily to the $SL(2,\mathbb{R})$ spin chain, the $SL(2,\mathbb{C})$ spin chain is Hermitian, hence with a real spectrum. The value of $z$ for the particle excitation (denoted $z_p$) and the value of $z$ for the hole excitation (denoted $z_h$) composing an elementary excitation of the whole $SL(2,\mathbb{C})$ spin chain are thus constrained to be such that the total excitation energy is real. Hence, the change of energy of the state \eqref{groundstate} of the $SL(2,\mathbb{C})$ spin chain after a particle-hole excitation $(z_p,z_h)$ is
\begin{equation}
\delta_{z_p,z_h}=\left. \partial_\eta F_{\chi^{z_p}_m}(\varphi) \right|_{\eta=0}- \left. \partial_\eta F_{\chi^{z_h}_m}(\varphi) \right|_{\eta=0}\,,
\end{equation}
where the analytic continuation is taken to $m=-1$, and with the constraints on $(z_p,z_h)$
\begin{equation}
\label{excconstraints}
\begin{aligned}
&-1/4<z_p<1/4\\
&1/4<|z_h|<1/2\\
&\delta_{z_p,z_h}\, \text{real}\,.
\end{aligned}
\end{equation}
In practice, the constraint $\Im \delta_{z_p,z_h}=0$ leaves only one of the two parameters $z_p$ or $z_h$, with the other becoming a (possibly multi-valued) function of the first. For example, in the limit $\varphi\to\infty$ we have still for the state \eqref{groundstate}
\begin{equation}
\left. \partial_\eta F_{\chi^z_m}(\varphi) \right|_{\eta=0}=\frac{e^{-2i\pi z}}{2}e^{2\varphi}+O(1)\,,
\end{equation}
from which one deduces that the couples $(z_p,z_h)$ satisfying \eqref{excconstraints} are
\begin{equation}
(z_p,(\tfrac{1}{2}-|z_p|)\sign(z_p))\,,\qquad -1/4<z_p<1/4\,,
\end{equation}
which gives
\begin{equation}
\delta_{z_p,z_h(z_p)}=\cos(2\pi z_p)e^{2\varphi}+O(1)\,,
\end{equation}
which is indeed always positive. So the state considered \eqref{groundstate} is indeed minimal with respect to first-level particle-hole excitations in the limit $\varphi\to\infty$.

To investigate the case $\varphi<\infty$, we start by plotting in Figure \ref{resultexcspv} the complex values of $\partial_\eta F_{\chi^{z}_m}(\varphi)|_{\eta=0}$ and the real values of $\delta_{z_p,z_h(z_p)}$ calculated with the expansion in $e^{-2\varphi}$, evaluated at $\varphi=1.5$ which is within its radius of convergence. We see that we have indeed $\delta_{z_p,z_h(z_p)}\geq 0$ for all $-1/4\leq z_p\leq 1/4$, which shows that the state \eqref{groundstate} at $\varphi=1.5$ is still a local minimum with respect to particle-hole excitations. Moreover, we see that the excitations are even gapped (with a gap extensive in $L$) at $\varphi=1.5$.
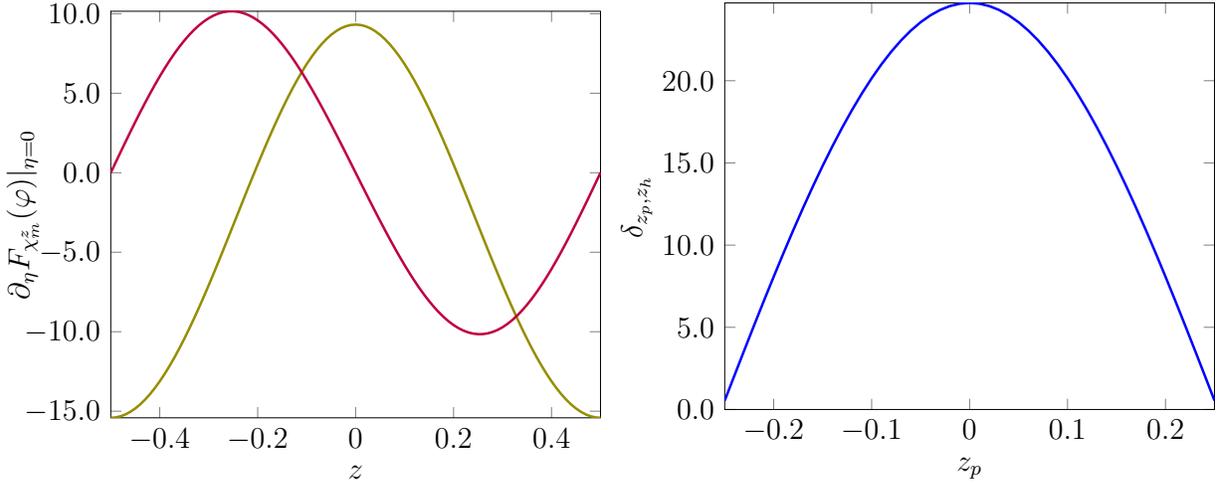
\begin{figure}[H]
\begin{center}
\begin{tikzpicture}[scale=0.95]
\begin{axis}[
    enlargelimits=false,
    ylabel = $\partial_\eta F_{\chi^{z}_m}(\varphi)|_{\eta=0}$,
    xlabel = $z$,
     y tick label style={
        /pgf/number format/.cd,
            fixed,
            fixed zerofill,
            precision=1,
        /tikz/.cd
    }
]
\addplot[
    color=olive,
    line width=1pt]
table[
           x expr=\thisrowno{0}, 
           y expr=\thisrowno{1} 
         ]{sl2c_exc_largephi.dat};
\addplot[
    color=purple,
    line width=1pt]
table[
           x expr=\thisrowno{0}, 
           y expr=\thisrowno{2} 
         ]{sl2c_exc_largephi.dat};
\end{axis}
\end{tikzpicture}
\begin{tikzpicture}[scale=0.95]
\begin{axis}[
    enlargelimits=false,
    ylabel = $\delta_{z_p,z_h}$,
    ymin=0,
    xlabel = $z_p$,
     y tick label style={
        /pgf/number format/.cd,
            fixed,
            fixed zerofill,
            precision=1,
        /tikz/.cd
    }
]
\addplot[
    color=blue,
    line width=1pt]
table[
           x expr=\thisrowno{0}, 
           y expr=\thisrowno{1} 
         ]{sl2c_exc_largephi_ph.dat};
\end{axis}
\end{tikzpicture}
\end{center}
\caption{Left: $ \partial_\eta F_{\chi^{z}_m}(\varphi)|_{\eta=0}$ for \eqref{groundstate} as a function of $z$, real part (green) and imaginary part (purple), at $\varphi=1.5$ with $13$ terms in the expansion in $e^{-2\varphi}$. Right: the corresponding $\delta_{z_p,z_h}$ for admissible values of $(z_p,z_h)$, as a function of $z_p$.}
\label{resultexcspv}
\end{figure}

Once again, the series in $e^{-2\varphi}$ are not convergent at $\varphi=0$. In order to investigate the values of $\delta_{z_p,z_h}$ at $\varphi=0$, we apply the reasoning presented in sections \ref{expandtraj} and \ref{trajgr}, with now the values \eqref{momexc} for the moments at order $1$ in $\eta$.

The values of the functional $\Xi_t^p[F(t)]$ for $p>0$ are not modified and are given by \eqref{xitptraj}, whereas for $p=0$ we have
\begin{equation}
\begin{aligned}
\label{xitptrajexc}
\Xi_t^0[F(t)]&=-F_0+\eta F(e^{2i\pi z})\,.
\end{aligned}
\end{equation}
Thus we obtain
\begin{equation}
\label{rectrajexc}
\begin{aligned}
\arctan\left(i+\sum_{p\geq 0}\gamma_p(t,x)\xi^p\right)&=\frac{\log(xt)}{2i}+\arctan \left(\sum_{p\geq 0}\gamma_p(t,x)\xi^p\right)\\
&-\eta\arctan\left(\sum_{p\geq 0}\gamma_p(t,x)\xi^p-\sum_{p\geq 0}\gamma_p(e^{2i\pi z},x)\xi^p\right)\\
&-\sum_{p\geq 1}\xi^p \Xi^p_u\left[\arctan \left(\sum_{p\geq 0}(\gamma_p(t,x)-\gamma_p(u,x))\xi^p\right) \right] \,.
\end{aligned}
\end{equation}
For example, the order $\xi^0$ gives the energy of the particle or hole excitations above the other pseudo-vacuum \eqref{secondpseudo} at $\varphi=0$
\begin{equation}
\label{excsecondpseudo}
\partial_\eta F(\chi_{-1}^{z;\xi=0})|_{\eta=0}=\frac{e^{-2i\pi z}}{2}(-1+e^{2i\pi z})^2\sqrt{\frac{-1+9e^{2i\pi z}}{-1+e^{2i\pi z}}}\,.
\end{equation}
In Figure \ref{resultexcspv2} is plotted the result of this expansion in $\xi$. We observe first that the results at $\varphi=0$ are qualitatively different from those at large $\varphi$ shown previously; with this expansion in $\xi$ (that can be performed at any value of $\varphi$) we observe indeed a change of regime as $\varphi$ decreases to $0$. Besides, we see that we indeed have $\delta_{z_p,z_h}\geq 0$ for all $(z_p,z_h)$ satisfying the constraints, which means that the state \eqref{groundstate} is indeed of minimal energy with respect to first-level particle-hole excitations. Moreover, we observe as in the case $\varphi=1.5$ that these excitations are gapped excitations.

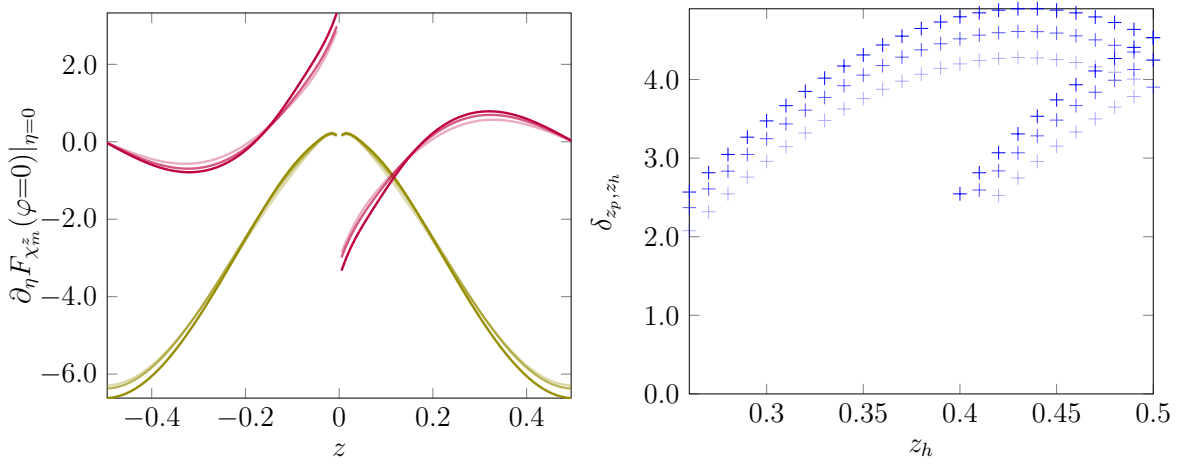
\begin{figure}[H]
\begin{center}
\begin{tikzpicture}[scale=0.9]
\begin{axis}[
    enlargelimits=false,
    ylabel = $\partial_\eta F_{\chi^{z}_m}(\varphi{=}0)|_{\eta=0}$,
    xlabel = $z$,
     y tick label style={
        /pgf/number format/.cd,
            fixed,
            fixed zerofill,
            precision=1,
        /tikz/.cd
    }
]
\addplot[
opacity=0.33,
    color=olive,
    line width=1pt]
table[
           x expr=\thisrowno{0}, 
           y expr=\thisrowno{1} 
         ]{excspectrum_3terms_x1.dat};
\addplot[
opacity=0.33,
    color=olive,
    line width=1pt]
table[
           x expr=\thisrowno{0}, 
           y expr=\thisrowno{1} 
         ]{excspectrum_3terms_x1_pos.dat};
\addplot[
opacity=0.66,
    color=olive,
    line width=1pt]
table[
           x expr=\thisrowno{0}, 
           y expr=\thisrowno{1} 
         ]{excspectrum_4terms_x1.dat};
\addplot[
opacity=0.66,
    color=olive,
    line width=1pt]
table[
           x expr=\thisrowno{0}, 
           y expr=\thisrowno{1} 
         ]{excspectrum_4terms_x1_pos.dat};
\addplot[
opacity=1,
    color=olive,
    line width=1pt]
table[
           x expr=\thisrowno{0}, 
           y expr=\thisrowno{1} 
         ]{excspectrum_5terms_x1.dat};
\addplot[
opacity=1,
    color=olive,
    line width=1pt]
table[
           x expr=\thisrowno{0}, 
           y expr=\thisrowno{1} 
         ]{excspectrum_5terms_x1_pos.dat};
         \addplot[opacity=0.33,
    color=purple,
    line width=1pt]
table[
           x expr=\thisrowno{0}, 
           y expr=\thisrowno{2} 
         ]{excspectrum_3terms_x1.dat};
         \addplot[opacity=0.33,
    color=purple,
    line width=1pt]
table[
           x expr=\thisrowno{0}, 
           y expr=\thisrowno{2} 
         ]{excspectrum_3terms_x1_pos.dat};
\addplot[opacity=0.66,
    color=purple,
    line width=1pt]
table[
           x expr=\thisrowno{0}, 
           y expr=\thisrowno{2} 
         ]{excspectrum_4terms_x1.dat};
\addplot[opacity=0.66,
    color=purple,
    line width=1pt]
table[
           x expr=\thisrowno{0}, 
           y expr=\thisrowno{2} 
         ]{excspectrum_4terms_x1_pos.dat};
\addplot[opacity=1,
    color=purple,
    line width=1pt]
table[
           x expr=\thisrowno{0}, 
           y expr=\thisrowno{2} 
         ]{excspectrum_5terms_x1.dat};
\addplot[opacity=1,
    color=purple,
    line width=1pt]
table[
           x expr=\thisrowno{0}, 
           y expr=\thisrowno{2} 
         ]{excspectrum_5terms_x1_pos.dat};
\end{axis}
\end{tikzpicture}
\begin{tikzpicture}[scale=0.9]
\begin{axis}[
    enlargelimits=false,
    ylabel = $\delta_{z_p,z_h}$,
    xlabel = $z_h$,
    ymin=0,
     y tick label style={
        /pgf/number format/.cd,
            fixed,
            fixed zerofill,
            precision=1,
        /tikz/.cd
    }
]
\addplot[
    color=blue,
    opacity=0.33,
    only marks,
    mark=+,
    mark size=2.9pt]
table[
           x expr=\thisrowno{0}, 
           y expr=\thisrowno{1} 
         ]{exc_x1_3terms_phi0.dat};
\addplot[
    color=blue,
    opacity=0.66,
    only marks,
    mark=+,
    mark size=2.9pt]
table[
           x expr=\thisrowno{0}, 
           y expr=\thisrowno{1} 
         ]{exc_x1_4terms_phi0.dat};
\addplot[
    color=blue,
    opacity=1,
    only marks,
    mark=+,
    mark size=2.9pt]
table[
           x expr=\thisrowno{0}, 
           y expr=\thisrowno{1} 
         ]{exc_x1_5terms_phi0.dat};
\end{axis}
\end{tikzpicture}
\end{center}
\caption{Left: $ \partial_\eta F_{\chi^{z}_m}(\varphi)|_{\eta=0}$ as a function of $z$, real part (green) and imaginary part (red), at $\varphi=0$ with $3,4,5$ terms in the expansion in $\xi$ (from light to dark colors). Right: the corresponding $\delta_{z_p,z_h}$ for admissible values of $(z_p,z_h)$, as a function of $z_h$, with $3,4,5$ terms in the expansion in $\xi$ (from light to dark blue).}
\label{resultexcspv2}
\end{figure}
To conclude this section, we presented evidence for the minimality of the state \eqref{groundstate} at $m=-1$ with respect to first-level particle-hole excitations, i.e. particle-hole excitations with Bethe numbers $-\tfrac{L}{2}<I_k<\tfrac{L}{2}$, which constitute all the possible excitations to which Theorem \ref{thm1} applies. Together with the fact that its energy in the continuum limit is the same as the one found in \cite{derkachovkorchemsky2}, this is strong evidence that it is the ground state indeed.

Our analysis also shows that these first-level excitations are even gapped excitations. However, there are also other possible excitations with Bethe numbers $|I_k|>L/2$, and also the possibility of giving to $m$ a small imaginary part, due to the fact that the spins $u,\bar{u}$ of the $SL(2,\mathbb{C})$ spin chain representations can be complex. This will be studied in further work.

\subsection{Another trajectory to the state \eqref{groundstate}}
The previous trajectory that goes to the ground state at $m=-1$ is clearly not unique. Another example of such a trajectory is
\begin{equation}
\label{eqtraj2}
X^\xi_a=\int_{-1}^{1}g(x)e^{2i\pi a x}dx\quad \text{with }g(x)=\begin{cases}
-1 & \text{for } -1/2+\xi/2<x<1/2-\xi/2 \,, \\
-1 & \text{for }7/8-\xi/4<x<7/8+\xi/4 \,, \\
-1 & \text{for }-7/8-\xi/4<x<-7/8+\xi/4 \,, \\
0 & \text{otherwise} \,.
\end{cases}
\end{equation}
This trajectory at $m=-1$ is depicted in Figure \ref{trajecto2}.

\begin{figure}[H]
\begin{center}
\begin{tikzpicture}[scale=1]
\draw [black,thick,domain=0:360] plot ({cos(\x)}, {sin(\x)});
\draw [red,thick, line width=2pt,domain=90:270] plot ({1.1*cos(\x)}, {1.1*sin(\x)});
\draw [red,thick, line width=2pt,domain=90:270] plot ({1.2*cos(\x)}, {1.2*sin(\x)});
\draw (0,0) node {\tiny $\xi=-1/2$};
\end{tikzpicture}
\hspace{1cm}
\begin{tikzpicture}[scale=1]
\draw [black,thick,domain=0:360] plot ({cos(\x)}, {sin(\x)});
\draw [red,thick, line width=2pt,domain=-30:30] plot ({1.1*cos(\x)}, {1.1*sin(\x)});
\draw [red,thick, line width=2pt,domain=60:300] plot ({1.1*cos(\x)}, {1.1*sin(\x)});
\draw [red,thick, line width=2pt,domain=150:210] plot ({1.2*cos(\x)}, {1.2*sin(\x)});
\draw [red,thick, line width=2pt,domain=120:240] plot ({1.1*cos(\x)}, {1.1*sin(\x)});
\draw (0,0) node {\tiny $-1/2<\xi<0$};
\end{tikzpicture}
\hspace{1cm}
\begin{tikzpicture}[scale=1]
\draw [black,thick,domain=0:360] plot ({cos(\x)}, {sin(\x)});
\draw [red,thick, line width=2pt,domain=0:360] plot ({1.1*cos(\x)}, {1.1*sin(\x)});
\draw (0,0) node {\tiny $\xi=0$};
\end{tikzpicture}
\hspace{1cm}
\begin{tikzpicture}[scale=1]
\draw [black,thick,domain=0:360] plot ({cos(\x)}, {sin(\x)});
\draw [red,thick, line width=2pt,domain=-150:150] plot ({1.1*cos(\x)}, {1.1*sin(\x)});
\draw [red,thick, line width=2pt,domain=30:60] plot ({1.2*cos(\x)}, {1.2*sin(\x)});
\draw [red,thick, line width=2pt,domain=-60:-30] plot ({1.2*cos(\x)}, {1.2*sin(\x)});
\draw (0,0) node {\tiny $0<\xi<1/2$};
\end{tikzpicture}
\hspace{1cm}
\begin{tikzpicture}[scale=1]
\draw [black,thick,domain=0:360] plot ({cos(\x)}, {sin(\x)});
\draw [red,thick, line width=2pt,domain=-90:90] plot ({1.1*cos(\x)}, {1.1*sin(\x)});
\draw [red,thick, line width=2pt,domain=-90:90] plot ({1.2*cos(\x)}, {1.2*sin(\x)});
\draw (0,0) node {\tiny $\xi=1/2$};
\end{tikzpicture}
\end{center}
\caption{Sketch of the root structure of the trajectory for different values of $\xi$.}
\label{trajecto2}
\end{figure}
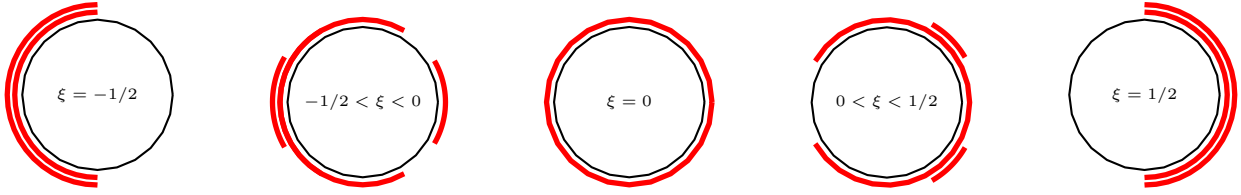

Calculating the moments $X_a(\chi^\xi_{-1})$, we obtain the values for $\Xi_t^p[F(t)]$ for a function $F(t)$
\begin{equation}
\label{xitptrajnew}
\Xi_t^p[F(t)]=(1-(-1)^p)\frac{(i\pi)^{p-1}}{2p!}\left[\slashed{\partial}^{p-1}F(-1)-\frac{\slashed{\partial}^{p-1}F(e^{i\pi/4})+\slashed{\partial}^{p-1}F(e^{-i\pi/4})}{2^p}\right] \,.
\end{equation}
We report in Figure \ref{resultgroundstate2} the result for the energy of this trajectory, by plotting the partial series taking into account $k$ terms, as a function of $1/k$. We see that the result is compatible with the value obtained with the other trajectory, with a curve moving towards around $-1$ in the limit $k\to\infty$. Having several different trajectories going to the same state offers the possibility of more consistency checks when studying its properties.

\begin{figure}[H]
\begin{tikzpicture}[scale=0.95]
\begin{axis}[
    enlargelimits=false,
    ymin=-1,
    xlabel = $\xi$,
    ylabel = $F(\xi)$,
     y tick label style={
        /pgf/number format/.cd,
            fixed,
            fixed zerofill,
            precision=1,
        /tikz/.cd
    }
]
\addplot [
    domain=-0.5:0.5, 
    samples=100, 
    line width=1pt,
    color=blue,
    opacity=0.33,
    ]
{1. - 3.64696 *x - 1.23503* x^2 + 2.61719* x^3 };
\addplot [
    domain=-0.5:0.5, 
    samples=100, 
    line width=1pt,
    color=blue,
    opacity=0.66,
    ]
{1. - 3.64696 *x - 1.23503* x^2 + 2.61719* x^3 + 0.486268* x^4 - 
 2.02445* x^5};
\addplot [
    domain=-0.5:0.5, 
    samples=100, 
    line width=1pt,
    color=blue,
    opacity=1,
    ]
{1. - 3.64696 *x - 1.23503* x^2 + 2.61719* x^3 + 0.486268* x^4 - 
 2.02445* x^5 - 2.67343* x^6 - 2.55268* x^7};
\end{axis}
\end{tikzpicture}
\begin{tikzpicture}[scale=0.95]
\begin{axis}[
    enlargelimits=false,
    ymin=-1,
    xmin=0,
    ymax=-0.7,
    xlabel = $1/k$,
    ylabel = $\sum_{i=0}^{k-1} 0.5^i f_i$,
     y tick label style={
        /pgf/number format/.cd,
            fixed,
            fixed zerofill,
            precision=1,
        /tikz/.cd
    }
]
\addplot[
    only marks,
    mark=+,
    mark size=2.9pt,
    color=blue]
table[
           x expr=1/\thisrowno{0}, 
           y expr=\thisrowno{1} 
         ]{groundstate2.dat};
\end{axis}
\end{tikzpicture}
\caption{Left: energy of the trajectory \eqref{eqtraj2} as a function of $\xi$, up to $\xi^3$, $\xi^5$, $\xi^7$ (from light to dark blue). Right: energy at $\xi=1/2$, taking into account the first $k$ terms in the expansion in $\xi$, as a function of $1/k$.}
\label{resultgroundstate2}
\end{figure}
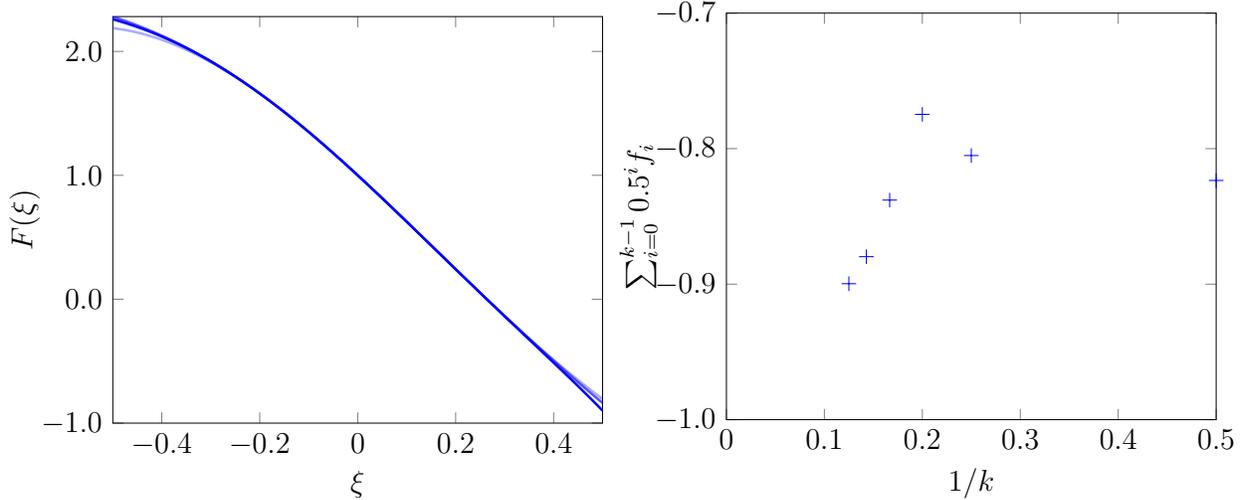

\section{Conclusion}
In this paper, we presented a method to analytically continue energies computed with the Bethe ansatz in the thermodynamic limit to a negative number of Bethe roots, and showed that it permits one to compute the (extensive part of the) energy levels of the $SL(2,\mathbb{C})$ non-compact spin chain in the thermodynamic limit. As a proof of principle, we recovered the value of the ground state previously obtained \cite{fadeevkorchemsky,derkachovkorchemsky2,braunderkachovmanashov,devegalipatov,janik2,janik} by extrapolating small sizes.

\medskip

The starting point was to observe that an energy of the $SL(2,\mathbb{C})$ spin chain for a state with magnetization $u$ has to be a sum of two energies of the $SL(2,\mathbb{R})$ spin chain at magnetizations $u$ and $\bar{u}=-1-u^*$, and that each of these can be obtained with the ABA provided $u\leq Ls$ and $\bar{u}\leq L \bar{s}$. Since these two conditions cannot be satisfied simultaneously, one needs to analytically continue the energies in terms of $m=s-\tfrac{u}{L}$ to $m<0$, in particular to $m$ close to $-1$. 

In order to perform this analytic continuation, we found it useful to introduce an imaginary extensive twist $\varphi$ and to study the behaviour of the thermodynamic limit of the energies at large $\varphi \to\infty$. Indeed, these thermodynamic energies are found to be expandable in a series in $e^{-2\varphi}$ with coefficients depending smoothly on $m$, which allows their analytic continuation to $m<0$. Although these series are convergent, their radius of convergence unfortunately does not include the sought value $\varphi=0$. 

To solve this problem, we identified a very special state for which all the coefficients of the series as well as their $m$-derivatives can be explicitly computed and resummed at $m=-1$. Remarkably, the absence of singularities of these expressions for $0\leq \varphi<\infty$  allowed us to analytically continue them down to $\varphi=0$, which provides the value of the energy of one specific state in the $SL(2,\mathbb{C})$ spin chain. It is not the ground state, but a state in the bulk of the spectrum.

In order to obtain the other energy levels, we used this special state as another `pseudo-vacuum' by expanding the energy levels on any trajectory that departs from this special state, and  explores the energy landscape of the chain. The coefficients of the corresponding  series can be efficiently computed one by one, and yield convergent series that allowed us to reach another state in the spectrum, not necessarily close to this pseudo-vacuum. These series also permit to study a certain (but large) class of excitations above a state, so that we are able to identify one whose energy is minimal with respect to any of these particle-hole excitations. The energy of this ground state that we compute is indeed the value previously obtained by extrapolation from  small-size studies \cite{derkachovkorchemsky2}.

\medskip

All throughout  the paper,  series and analytic continuations were compared  with stringent numerical tests. In particular, this led to an  expansion around $m=-1$ of the free energy obtained with all the Bethe roots symmetrically packed around the origin, whereas its natural point of expansion is around the usual pseudo-vacuum at $m=0$. These two expansions are in excellent agreement with the numerics.

The existence of this other pseudo-vacuum is not a specificity of this spin chain, and is also present in the $s=1/2$ Heisenberg spin chain, in which case it is simply the second ferromagnetic ground state. In the present case, however, this other pseudo-vacuum reveals new insights on the analytic structure of the Bethe equations and their solutions, and suggests exciting further studies.

\medskip

The present method is not restricted to the thermodynamic limit $L\to\infty$ only, and is expected to work as well to study $L^{-1}$ corrections. These contain crucial information on the field theory that describes the $SL(2,\mathbb{C})$ chain in the thermodynamic limit: further work along these lines will be the object of a subsequent paper.

\paragraph{Acknowledgements}
We are grateful to G.~Korchemsky for helpful discussions. We also thank F. H. L. Essler and Ch. Kopper for helpful comments. This work was supported by the ERC Advanced Grant NuQFT and by the EPSRC under grant EP/N01930X.

\appendix

\section{Series expansion in $e^{-2\varphi}$ \label{mathematica1}}
We give here a {\sc Mathematica} code to compute the series \eqref{fbchi}:
\begin{multicols}{2}
\tiny
\begin{verbatim}
ClearAll["Global`*"];
M = 7;
R = Join[{0}, Series[1/Pi ArcTan[x], {x, 0, M}][[3]]];
S = ConstantArray[0, M];
S[[1]] = -I/2/Pi Log[-2 I];
For[i = 2, i < M + 1, i++,
  S[[i]] = I/(2 Pi)*(I/2)^(i - 1)/(i - 1);
  ];
X[a_] := If[a == 0, m, Sin[Pi a m]/a/Pi];
Clear[X];
CC = ConstantArray[0, {M, M, M}];
CCtilde = ConstantArray[0, {M, M, M}];
SetAttributes[ComputeNext, HoldAll];
SetAttributes[FillMultiple, HoldAll];
SetAttributes[CompleteTable, HoldAll];
ComputeNext[a_, b_, c_, ctilde_] := 
  Module[{res, a1, b2, q, n, res2, Res},
   res = 0;
   For[a1 = 0, a1 < M, a1++,
    For[b2 = 0, b2 < M, b2++,
     If[(b - b2 >= 0),
      For[n = 0, n < M, n++,
       For[q = 0, q < n + 1, q++,
        res += -2 I Pi (-1)^q Binomial[n, q]*R[[n + 1]] X[a1]*
           c[[a1 + 1, b - b2 + 1, q + 1]]*
           c[[a + 1, b2 + 1, n - q + 1]];
        ]
       ]
      ]
     ]
    ];
   For[n = 2, n < M, n++,
    res += (-1)^n ctilde[[a + 1, b + 1, n + 1]]/n;
    ];
   For[n = 0, n < M, n++,
    res += 2 I Pi c[[a + 1, b + 1, n + 1]]*S[[n + 1]];
    ];
   Print[a, " ",b];
   c[[a + 1 + 1, b + 1 + 1, 2]] = c[[2, 2, 2]]*res;
   ctilde[[a + 1, b + 1, 2]] = res;
   ];


FillMultiple[c_, a0_, b0_, max_] := 
  Module[{a, b, a1, b1, x, y, n, k, A, cCop},
   A = c[[a0 + 1, b0 + 1, 2]];
   c[[a0 + 1, b0 + 1, 2]] = 0;
   cCop = c;
   For[n = 2, n < M, n++,
    For[k = 0, k < n, k++,
     For[a = a0*(n - k), a < M, a++,
      For[b = b0*(n - k), b < M, b++,
        cCop[[a + 1, b + 1, n + 1]] += 
          Binomial[n, k]*A^(n - k)*
           c[[a - a0*(n - k) + 1, b - b0*(n - k) + 1, k + 1]];
        ];
      ]
     ]
    ];
   cCop[[a0 + 1, b0 + 1, 2]] = A;
   c = cCop;
   ];

CompleteTable[c_, ctilde_] := Module[{d, b},
   For[d = 0, d < M - 1, d++,
    For[b = 0, b < d + 1, b++,
     If[d == 0 && b == 0,
      c[[1, 1, 1]] = 1;
      c[[1, 1, 2]] = 0;
      ctilde[[1, 1, 1]] = 1;
      ctilde[[1, 1, 2]] = 0;
      FillMultiple[c, 0, 0, 0 + 2];
      FillMultiple[ctilde, 0, 0, 0 + 2];
      c[[2, 2, 2]] = -2 I;
      FillMultiple[c, 1, 1, d + 5];
      ,
      ComputeNext[b, d, c, ctilde];
      FillMultiple[c, b + 1, d + 1, d + 3];
      FillMultiple[ctilde, b, d, d + 3];
      ]
     ]
    ]
   ];

EnergyTable[c_, cstar_] := Module[{F, a, b, n},
   F = ConstantArray[0, M-1];
   F[[1]] = X[-1]/c[[2, 2, 2]]/I;
   For[a = 0, a < M - 1, a++,
    For[n = 0, n < M - 1, n++,
     For[b = 0, b < M - 2, b++, 
       F[[b + 1 + 1]] += 
         1 /c[[2, 2, 2]]/I (-1)^n*
           cstar[[a + 1, b + 1 + 1, n + 1]] X[a - 1] - 
          2 Pi (n + 1) S[[n + 1 + 1]] c[[a + 1, b + 1, n + 1]] X[a];
       ];
     ]
    ];
   Return[F];
   ];
CompleteTable[CC, CCtilde]
F= EnergyTable[CC, CCtilde]
\end{verbatim}
\normalsize
\end{multicols}

\newpage

\section{Series expansion for the  energy at $m=-1$} \label{seriesmath}
Equation \eqref{genderiv} permits one to compute iteratively all the $\gamma_p(t,x)$ defined in \eqref{gamap}, and then deduce all the derivatives of the  energy $F(m,\varphi)$ at $m=-1$ with \eqref{ffderiv}. However, directly solving iteratively for $\gamma_p(t,x)$ with a computer is costly since it requires symbolic manipulation. One actually sees that only the evaluation of the $t$-derivatives of $\gamma_p(t,x)$ at $t=0$ and $t=-1$ are needed, and $x$ is only a `spectator' variable. Hence if we define $c_{ab}(x)$ and $d_{ab}(x)$ by
\begin{equation}
\begin{aligned}
\gamma_p(t,x)&=\sum_{a\geq 0} c_{ap}(x)t^a\\
&=\sum_{a\geq 0} d_{ap}(x)(t+1)^a\,,
\end{aligned}
\end{equation}
we can turn \eqref{genderiv} into nested recurrence relations for $c_{ab}(x)$ and $d_{ab}(x)$, that require only manipulating numbers. We will consider $x$ fixed and drop the explicit dependence to lighten the notations. Following the steps of section \ref{exptwist}, we obtain
\begin{equation}
\begin{aligned}
\tilde{c}_{ab}&=2i\Bigg[\sum_{n\geq 0}\frac{\arctan^{(n)}(0)}{n!}c_{ab}^{[n]}-\frac{1}{2i}\sum_{n\geq 0} \frac{(-1)^n c_{ab}^{[n]}}{n(2i)^n}+\frac{1}{2i}\sum_{n\geq 2}(-1)^n\frac{\tilde{c}_{ab}^{[n]}}{n}\\
& -\sum_{\substack{p\geq 0\\ b_1+b_2+2p+1=b\\ a_1\geq 0\\ 0\leq q\leq n}}\frac{\pi^{2p}}{(2p+1)!}(-1)^p\kappa_{2p}^{a_1}\frac{\arctan^{(n)}(-\Delta(-x))}{q!(n-q)!}(-1)^q d_{a_1b_1}^{[q]}c_{ab_2}^{[n-q]}\Bigg]\,,
\end{aligned}
\end{equation}
and
\begin{equation}
\begin{aligned}
&d_{ab}=\\
&\frac{1}{\arctan'(i+\Delta(-x))-\arctan'(\Delta(-x))}\Bigg[\sum_{n\geq 2}\frac{\arctan^{(n)}(\Delta(-x))-\arctan^{(n)}(i+\Delta(-x))}{n!}c_{ab}^{[n]}\\
&-\frac{1}{2i}\frac{\delta_{b,0}}{a} -\sum_{\substack{p\geq 0\\ b_1+b_2+2p+1=b\\ a_1\geq 0\\ 0\leq q\leq n}}\frac{\pi^{2p}}{(2p+1)!}(-1)^p\kappa_{2p}^{a_1}\frac{\arctan^{(n)}(0)}{q!(n-q)!}(-1)^q d_{a_1b_1}^{[q]}d_{ab_2}^{[n-q]}\Bigg] \,,
\end{aligned}
\end{equation}
where we defined
\begin{equation}
\tilde{c}_{ab}=\begin{cases}
\frac{c_{a+1,b}}{c_{10}}=\frac{c_{a+1,b}}{x \Delta'(0)}\quad\text{if }(a,b)\neq(0,0)\\
0\quad\text{if }a=b=0\,,
\end{cases}
\end{equation}
and
\begin{equation}
\kappa_i^j=\slashed{\partial}^i (t+1)^j|_{t=-1}=\sum_{k=0}^j {j\choose k}k^i (-1)^k \,.
\end{equation}
Then we obtain
\begin{equation}
F(\xi)=\sum_{p\geq 0}f_p \xi^p\,,
\end{equation}
with $f_0=1$, and for $p\geq 1$
\begin{equation}
\begin{aligned}
f_p=\frac{i}{c_{10}}\sum_{b,n\geq 0}\tilde{c}^{[n]}_{1b}(-1)^n+ 2\!\!\! \sum_{\substack{a,m,n,b\geq 0\\b+2m+1=p}}\frac{\pi^{2m}}{(2m+1)!}(-1)^m \frac{\arctan^{(n)}(i+\Delta(-x))}{n!}\kappa_{2m}^a d_{ab}^{[n]} \,.
\end{aligned}
\end{equation}

This expansion can be computed with the following {\sc Mathematica} code:
\begin{multicols}{2}
\tiny\begin{verbatim}
ClearAll["Global`*"];
M = 7;
S = ConstantArray[0, M];
SL = ConstantArray[0, M];
R = ConstantArray[0, M];
Ggamma[x_] := -I/2 + I/2 Sqrt[1 - 8 x/(1 - x)];
S = Join[{0}, 
   Series[-1/Pi ArcTan[x], {x, 0, M}][[3]]*Factorial[Range[M]]];
Clear[xx];
Sm1 = Join[{}, 
   Series[-1/Pi ArcTan[(x + I + Ggamma[-xx])], {x, 0, M}, 
      Assumptions -> {xx <= 1, xx > 0}][[3]]*Factorial[Range[0, M]]];
      R = Join[{0}, 
   Series[1/Pi ArcTan[x], {x, 0, M}][[3]]*Factorial[Range[M]]];
Rm1 = Join[{}, 
   Series[1/Pi ArcTan[(x + Ggamma[-xx])], {x, 0, M}, 
      Assumptions -> {xx <= 1, xx > 0}][[3]]*Factorial[Range[0, M]]];
sigma = 1/(2 Pi);
SL[[1]] = -I/2/Pi Log[-2 I];
For[i = 2, i < M + 1, i++,
 SL[[i]] = I/(2 Pi)*(-I)^(i - 1)/(i - 1);
 SL[[i]] = I/(2 Pi)*(I/2)^(i - 1)/(i - 1);
 ];
CC = ConstantArray[0, {M, M, M}];
DD = ConstantArray[0, {M, M, M}];
DDt = ConstantArray[0, {M, M, M}];
kappa = ConstantArray[0, {M, M}];
For[i = 1, i < M + 1, i++,
 For[j = 1, j < M + 1, j++,
  kappa[[i, j]] = 
    If[i > 1, Sum[Binomial[j - 1, k] k^(i - 1) (-1)^k, {k, 0, j - 1}],
      Sum[Binomial[j - 1, k] (-1)^k, {k, 0, j - 1}]];
  ]
 ]

xx; (*Value of x*)

SetAttributes[ComputeNext, HoldAll]
SetAttributes[ComputeNextDD, HoldAll]
SetAttributes[FillMultiple, HoldAll]
SetAttributes[CompleteTable, HoldAll]
SetAttributes[CompleteTableDD, HoldAll]

ComputeNext[a_, b_, c_] := Module[{res, a1, b2, q, n, res2, Res, mm},
   res = 0;
   For[a1 = 0, a1 < b, a1++,
    For[b2 = 0, b2 < b, b2++,
     For[mm = 0, mm < b/2 + 2, mm++,
      If[(b - b2 - 1 - 2 mm >= 0) && (2 mm + 1 < M + 1),
       For[n = 0, n < M, n++,
        For[q = 0, q < n + 1, q++,
         res += (-1)^(q + mm)/Factorial[q]/Factorial[n - q]*
            R[[n + 1]] Pi^(2 mm)/Factorial[2 mm + 1] kappa[[2 mm + 1, 
             a1 + 1]]*c[[a1 + 1, b - b2 - 1 - 2 mm + 1, q + 1]]*
            c[[a + 1, b2 + 1, n - q + 1]];
         ]
        ]
       ]
      ]
     ]
    ];
   For[n = 2, n < M, n++,
    res += -c[[a + 1, b + 1, n + 1]]*(Sm1[[n + 1]] + Rm1[[n + 1]]) /
        Factorial[n];
    ];
   If[b == 0, res += -I/2/Pi/a ];
   res /= (Sm1[[2]] + Rm1[[2]] );
   Print[a," ",b];
   c[[a + 1, b + 1, 2]] = res;
   ];

ComputeNextDD[a_, b_, c_, d_, dt_] := 
  Module[{res, a1, b2, q, n, res2, Res, mm},
   res = 0;
   For[a1 = 0, a1 < b, a1++,
    For[b2 = 0, b2 < b, b2++,
     For[mm = 0, mm < b/2 + 2, mm++,
      If[(b - b2 - 1 - 2 mm >= 0) && (2 mm + 1 < M + 1),
       For[n = 0, n < M, n++,
        For[q = Max[0, n - a - b2], 
         q < Min[a1 + b - b2 - 2 mm - 1 + 1, n + 1], q++,
         res += -(-1)^(n - q + mm)/Factorial[q]/Factorial[n - q]*
            Rm1[[n + 1]] Pi^(2 mm)/Factorial[2 mm + 1] kappa[[
             2 mm + 1, a1 + 1]]*
            c[[a1 + 1, b - b2 - 1 - 2 mm + 1, q + 1]]*
            d[[a + 1, b2 + 1, n - q + 1]];
         ]
        ]
       ]
      ]
     ]
    ];
   For[n = 1, n < M, n++,
    res += -d[[a + 1, b + 1, n + 1]]*(R[[n + 1]] /Factorial[n] + 
         SL[[n + 1]]) ;
    ];
   For[n = 2, n < M, n++,
    res += (-1)^n dt[[a + 1, b + 1, n + 1]]*sigma*I/n;
    ];
   res *= xx Ggamma'[0]/I/sigma;
   Print[a," ",b];
   d[[a + 1 + 1, b + 1, 2]] = res;
   dt[[a + 1, b + 1, 2]] = res/Ggamma'[0]/xx;
   ];



FillMultiple[c_, a0_, b0_, max_] := 
  Module[{a, b, a1, b1, x, y, n, k, A, cCop},
   A = c[[a0 + 1, b0 + 1, 2]];
   c[[a0 + 1, b0 + 1, 2]] = 0;
   cCop = c;
   For[n = 2, n < M, n++,
    For[k = 0, k < n, k++,
     For[a = a0*(n - k), a < M, a++,
      For[b = b0*(n - k), b < M, b++,
        cCop[[a + 1, b + 1, n + 1]] += 
          Binomial[n, k]*A^(n - k)*
           c[[a - a0*(n - k) + 1, b - b0*(n - k) + 1, k + 1]];
        ];
      ]
     ]
    ];
   cCop[[a0 + 1, b0 + 1, 2]] = A;
   c = cCop;
   ];


CompleteTable[c_] := Module[{d, b},
   For[d = 0, d < M, d++,
    For[b = 0, b < d + 1, b++,
     If[d == 0 && b == 0,
      c[[1, 1, 1]] = 1;
      c[[1, 1, 2]] = 0;
      FillMultiple[c, 0, 0, 0 + 2],
      If[d == 1 && b == 0,
       c[[2, 1, 2]] = Ggamma'[-xx]*xx;
       FillMultiple[c, 1, 0, d],
       ComputeNext[d - b, b, c];
       FillMultiple[c, d - b, b, d];
       ]
      ]
     ]
    ]
   ];


CompleteTableDD[d_, dt_] := Module[{dy, b},
   For[dy = 0, dy < M - 1, dy++,
    For[b = 0, b < dy + 1, b++,
     If[dy == 0 && b == 0,
      d[[1, 1, 1]] = 1;
      d[[1, 1, 2]] = 0;
      FillMultiple[d, 0, 0, 0 + 2];
      dt[[1, 1, 1]] = 1;
      dt[[1, 1, 2]] = 0;
      FillMultiple[dt, 0, 0, 0 + 2];
      d[[2, 1, 2]] = Ggamma'[0] xx;
      FillMultiple[d, 1, 0, dy],
      ComputeNextDD[dy - b, b, CC, d, dt];
      FillMultiple[d, dy - b + 1, b, dy + 2];
      FillMultiple[dt, dy - b, b, dy + 2];
      ]
     ]
    ]
   ];

FreeEnergyTable[c_, dt_] := Module[{F, a, b, n, mm},
   F = ConstantArray[0, M-2];
   For[a = 0, a < M-2, a++,
    For[n = 0, n < M - 1, n++,
     For[b = 0, b < a, b++,
      For[mm = 0, mm < M/2 + 1, mm++,
       If[a - 2 mm - 1 >= 0,
        F[[a + 1]] += -2 Pi Sm1[[n + 1 + 1]]/Factorial[n]*
           
           c[[b + 1, a - 2 mm - 1 + 1, n + 1]] kappa[[2 mm + 1, 
            b + 1]] Pi^(2 mm)/Factorial[2 mm + 1] (-1)^mm;
        ]
       ]
      ]
     ];
    ];
   For[a = 0, a < M-2, a++,
    For[n = 0, n < M, n++,
      F[[a + 1]] += 
        2 I Pi sigma/Ggamma'[0]/xx dt[[1 + 1, a + 1, n + 1]] (-1)^n;
      ];
    ];
   F[[1]] += 2 Pi SL[[2]];
   Return[F];
   ];
CompleteTable[CC];
CompleteTableDD[DD, DDt];
F = FreeEnergyTable[CC, DDt]
\end{verbatim}
\normalsize
\end{multicols}

\section{Series expansion for the trajectory to the ground state\label{recurtraj}}
In section \ref{trajgr} we presented a trajectory to the ground state, with values of $\Xi^p$ given in \eqref{xitptraj}. Hence only the $t$-derivatives of $\gamma_p(t,x)$ at $t=0,-1,i,-i$ are needed to compute the $\xi$-derivatives of the free energy. Hence if we define $c_{ab}(x)$, $d_{ab}(x)$ and $e^\pm_{ab}(x)$ by
\begin{equation}
\begin{aligned}
\gamma_p(t,x)&=\sum_{a\geq 0} c_{ap}t^a(x)\\
&=\sum_{a\geq 0} d_{ap}(t+1)^a(x)\\
&=\sum_{a\geq 0} e^+_{ap}(t-i)^a(x)\\
&=\sum_{a\geq 0} e^-_{ap}(t+i)^a(x)\,,
\end{aligned}
\end{equation}
we can obtain nested recurrence relations on $c_{ab},d_{ab},e^\pm_{ab}$, similarly as in Appendix \ref{seriesmath}. We obtain then the following {\sc Mathematica} code:
\begin{multicols}{2}
\tiny
\begin{verbatim}
ClearAll["Global`*"];
M = 7;
S = ConstantArray[0, M];
R = ConstantArray[0, M];
Ggamma[x_] := -I/2 + I/2 Sqrt[1 - 8 x/(1 - x)];
sfunction[x_] := -1/Pi ArcTan[x];
S = Join[{0}, Series[sfunction[x], {x, 0, M}][[3]]];
Sm1 = Series[sfunction[(x + I + Ggamma[-xx])], {x, 0, M}, 
      Assumptions -> {xx <= 1, xx > 0}][[3]];
Spi = Series[sfunction[(x + I + Ggamma[I xx])], {x, 0, M}, 
      Assumptions -> {xx <= 1, xx > 0}][[3]];
Smi = Series[sfunction[(x + I + Ggamma[-I xx])], {x, 0, M}, 
      Assumptions -> {xx <= 1, xx > 0}][[3]];

rfunction[x_] := 1/Pi ArcTan[x ];
R = Join[{0}, Series[rfunction[x], {x, 0, M}][[3]]];
Rm1 = Series[rfunction[(x + Ggamma[-xx])], {x, 0, M}, 
      Assumptions -> {xx <= 1, xx > 0}][[3]];
Rpi = Series[rfunction[(x + Ggamma[I xx])], {x, 0, M}, 
      Assumptions -> {xx <= 1, xx > 0}][[3]];
Rmi = Series[rfunction[(x + Ggamma[-I xx])], {x, 0, M}, 
      Assumptions -> {xx <= 1, xx > 0}][[3]];
Rm1pi =Series[rfunction[(x + Ggamma[-xx] - Ggamma[I xx])], {x, 0, M}, 
      Assumptions -> {xx <= 1, xx > 0}][[3]];
Rm1mi =Series[rfunction[(x + Ggamma[-xx] - Ggamma[-I xx])], {x, 0, M}, 
      Assumptions -> {xx <= 1, xx > 0}][[3]];
Rpimi = Series[rfunction[(x + Ggamma[I xx] - Ggamma[-I xx])], {x, 0, M}, 
      Assumptions -> {xx <= 1, xx > 0}][[3]];
sigma = 1/(2 Pi);
Sl = Range[M];
Sl[[1]] = -I/2/Pi Log[-2 I];
For[i = 2, i < M + 1, i++,
  Sl[[i]] = I/(2 Pi)*(-I)^(i - 1)/(i - 1);
  Sl[[i]] = I/(2 Pi)*(I/2)^(i - 1)/(i - 1);
  ];
CC = ConstantArray[0, {M, M, M}];
EE = ConstantArray[0, {M, M, M}];
FF = ConstantArray[0, {M, M, M}];
DD = ConstantArray[0, {M, M, M}];
DDt = ConstantArray[0, {M, M, M}];
Kappa = ConstantArray[0, {M, M}];
kappa[i_, j_] := 
  If[i > 0, Sum[Binomial[j, k] k^(i) (-1)^k, {k, 0, j}], 
   Sum[Binomial[j, k] (-1)^k, {k, 0, j}]];
For[i = 0, i < M, i++,
 For[j = 0, j < M, j++,
  Kappa[[i + 1, j + 1]] = kappa[i, j];
  ]
 ]

SetAttributes[ComputeNext, HoldAll];
SetAttributes[ComputeNextDD, HoldAll];
SetAttributes[FillMultiple, HoldAll];
SetAttributes[CompleteTable, HoldAll];
SetAttributes[CompleteTableDD, HoldAll];
ComputeNext[a_, b_, c_, d_, e_] := 
  Module[{res, a1, b2, q, n, res1, res2, mm},
   res = 0;
   res1 = 0;
   res2 = 0;
   For[a1 = 0, a1 < b, a1++,
    For[b2 = 0, b2 < b, b2++,
     For[mm = a1 + 1, mm < b + 1, mm++,
      If[(b - b2 - mm >= 0),
       For[n = 0, n < M, n++,
        For[q = Max[0, n - a - b2], 
         q < Min[a1 + b - b2 - mm + 1, n + 1], q++,
         res += (-1)^(q)*Binomial[n,q]*(I Pi)^(mm - 1)/2/Factorial[mm] 
         Kappa[[mm - 1 + 1,a1 + 1]]*(R[[n + 1]] c[[a1 + 1, b - b2 - mm + 1, 
               q + 1]] (1 - (-1)^mm) + (-I)^a1 Rm1pi[[n + 1]] d[[
               a1 + 1, b - b2 - mm + 1, q + 1]] ((-1)^mm) - (I)^
               a1 Rm1mi[[n + 1]] e[[a1 + 1, b - b2 - mm + 1, 
               q + 1]] (1))*c[[a + 1, b2 + 1, n - q + 1]];
         res1 += (-1)^(q)*Binomial[n,q]*(I Pi)^(mm - 1)/2/Factorial[mm] 
         Kappa[[mm - 1 + 1,a1 + 1]]*((-1)^(n + 1) Rm1pi[[n + 1]] c[[a1 + 1, 
               b - b2 - mm + 1, q + 1]] (1 - (-1)^mm) + (-I)^a1 R[[
               n + 1]] d[[a1 + 1, b - b2 - mm + 1, 
               q + 1]] ((-1)^mm) - (I)^a1 Rpimi[[n + 1]] e[[a1 + 1, 
               b - b2 - mm + 1, q + 1]] (1))*
           d[[a + 1, b2 + 1, n - q + 1]];
         res2 += (-1)^(q)*Binomial[n,q]*(I Pi)^(mm - 1)/2/Factorial[mm] 
         Kappa[[mm - 1 + 1, a1 + 1]]*((-1)^(n + 1) Rm1mi[[n + 1]] c[[a1 + 1, 
               b - b2 - mm + 1, q + 1]] (1 - (-1)^mm) + (-I)^
               a1 (-1)^(n + 1) Rpimi[[n + 1]] d[[a1 + 1, 
               b - b2 - mm + 1, q + 1]] ((-1)^mm) - (I)^a1 R[[
               n + 1]] e[[a1 + 1, b - b2 - mm + 1, q + 1]] (1))*
           e[[a + 1, b2 + 1, n - q + 1]];
         ]
        ]
       ]
      ]
     ]
    ];
   For[n = 2, n < M, n++,
    res += -c[[a + 1, b + 1, n + 1]]*(Sm1[[n + 1]] + Rm1[[n + 1]]);
    res1 += -d[[a + 1, b + 1, n + 1]]*(Spi[[n + 1]] + Rpi[[n + 1]]);
    res2 += -e[[a + 1, b + 1, n + 1]]*(Smi[[n + 1]] + Rmi[[n + 1]]);
    ];
   If[b == 0,
    res += - I/2/Pi/a ;
    res1 += - I/2/Pi/a (-1/I)^a;
    res2 += - I/2/Pi/a (1/I)^a;
    ];
   res /= (Sm1[[2]] + Rm1[[2]] );
   res1 /= (Spi[[2]] + Rpi[[2]] );
   res2 /= (Smi[[2]] + Rmi[[2]] );
   Print[a," ",b];
   
   c[[a + 1, b + 1, 2]] = res;
   d[[a + 1, b + 1, 2]] = res1;
   e[[a + 1, b + 1, 2]] = res2;
   ];

ComputeNextDD[a_, b_, c_, d_, dt_, e_, f_] := 
  Module[{res, a1, b2, q, n, res2, Res, mm},
   res = 0;
   For[a1 = 0, a1 < b, a1++,
    For[b2 = 0, b2 < b, b2++,
     For[mm = 1, mm < b + 1, mm++,
      If[(b - b2 - mm >= 0),
       For[n = 0, n < M, n++,
        For[q = Max[0, n - a - b2], 
         q < Min[a1 + b - b2 - mm + 1, n + 1], q++,
         res += -(-1)^(n - q)*Binomial[n,q]*(I Pi)^(mm - 1)/2/Factorial[mm] 
         Kappa[[mm - 1 + 1, a1 + 1]]*(Rm1[[n + 1]] c[[a1 + 1, b - b2 - mm + 1, 
                q + 1]] (1 - (-1)^mm) + (-I)^a1 Rpi[[n + 1]] e[[
                a1 + 1, b - b2 - mm + 1, q + 1]] ((-1)^mm) - (I)^
                a1 Rmi[[n + 1]] f[[a1 + 1, b - b2 - mm + 1, 
                q + 1]] (1))*d[[a + 1, b2 + 1, n - q + 1]];
         ]
        ]
       ]
      ]
     ]
    ];
   For[n = 1, n < M, n++,
    res += -d[[a + 1, b + 1, n + 1]]*(R[[n + 1]]  + 
         Sl[[n + 1]]) ;
    ];
   For[n = 2, n < M, n++,
    res += (-1)^n dt[[a + 1, b + 1, n + 1]]*sigma*I/n;
    ];
   res *= xx Ggamma'[0]/I/sigma;
   Print[a," ",b];
   d[[a + 1 + 1, b + 1, 2]] = res;
   dt[[a + 1, b + 1, 2]] = res/Ggamma'[0]/xx;
   ];



FillMultiple[c_, a0_, b0_, max_] := 
  Module[{a, b, a1, b1, x, y, n, k, A, cCop},
   A = c[[a0 + 1, b0 + 1, 2]];
   c[[a0 + 1, b0 + 1, 2]] = 0;
   cCop = c;
   For[n = 2, n < M, n++,
    For[k = 0, k < n, k++,
     For[a = a0*(n - k), a < M, a++,
      For[b = b0*(n - k), b < M, b++,
        cCop[[a + 1, b + 1, n + 1]] += 
          Binomial[n, k]*A^(n - k)*
           c[[a - a0*(n - k) + 1, b - b0*(n - k) + 1, k + 1]];
        ];
      ]
     ]
    ];
   cCop[[a0 + 1, b0 + 1, 2]] = A;
   c = cCop;
   ];


CompleteTable[c_, e_, f_] := Module[{d, b},
   For[d = 0, d < M, d++,
    For[b = 0, b < d + 1, b++,
     If[d == 0 && b == 0,
      c[[1, 1, 1]] = 1;
      c[[1, 1, 2]] = 0;
      e[[1, 1, 1]] = 1;
      e[[1, 1, 2]] = 0;
      f[[1, 1, 1]] = 1;
      f[[1, 1, 2]] = 0;
      FillMultiple[c, 0, 0, 0 + 2];
      FillMultiple[e, 0, 0, 0 + 2];
      FillMultiple[f, 0, 0, 0 + 2];
      ,
      If[d == 1 && b == 0,
       c[[2, 1, 2]] = Ggamma'[-xx]*xx*1;
       e[[2, 1, 2]] = Ggamma'[I xx]*xx*1;
       f[[2, 1, 2]] = Ggamma'[-I xx]*xx*1;
       FillMultiple[c, 1, 0, M];(*d*)
       
       FillMultiple[e, 1, 0, M];
       FillMultiple[f, 1, 0, M];
       ,
       ComputeNext[d - b, b, c, e, f];
       FillMultiple[c, d - b, b, d];
       FillMultiple[e, d - b, b, d];
       FillMultiple[f, d - b, b, d];
       ]
      ]
     ]
    ]
   ];


CompleteTableDD[d_, dt_, cc_, e_, f_] := Module[{dy, b},
   For[dy = 0, dy < M - 1, dy++,
    For[b = 0, b < dy + 1, b++,
     If[dy == 0 && b == 0,
      d[[1, 1, 1]] = 1;
      d[[1, 1, 2]] = 0;
      FillMultiple[d, 0, 0, 0 + 2];
      dt[[1, 1, 1]] = 1;
      dt[[1, 1, 2]] = 0;
      FillMultiple[dt, 0, 0, 0 + 2];
      d[[2, 1, 2]] = Ggamma'[0] xx;
      FillMultiple[d, 1, 0, dy],
      ComputeNextDD[dy - b, b, cc, d, dt, e, f];
      FillMultiple[d, dy - b + 1, b, dy + 2];
      
      FillMultiple[dt, dy - b, b, dy + 2];
      ]
     ]
    ]
   ];

FreeEnergyTable[c_, dt_, e_, f_] := Module[{F, a, b, n, mm},
   F = ConstantArray[0, M-2];
   Print[F];
   For[a = 0, a < M-2, a++,
    For[n = 0, n < M - 1, n++,
     For[b = 0, b < a, b++,
      For[mm = 1, mm < M, mm++,
       If[a - mm >= 0 && (mm >= b + 1),
        F[[
           a + 1]] += -2 Pi* (Sm1[[n + 1 + 1]] c[[b + 1, a - mm + 1, 
               n + 1]] (1 - (-1)^mm) + (-I)^b Spi[[n + 1 + 1]] e[[
               b + 1, a - mm + 1, n + 1]] ((-1)^mm) - (I)^b Smi[[
               n + 1 + 1]] f[[b + 1, a - mm + 1, n + 1]] (1)) Kappa[[
            mm - 1 + 1, b + 1]] (I Pi)^(mm - 1)/2/Factorial[mm];
        ]
       ]
      ]
     ];
    Print[a];
    ];
   For[a = 0, a < M-2, a++,
    For[n = 0, n < M, n++,
      F[[a + 1]] += 
        2 I Pi sigma/Ggamma'[0]/xx dt[[1 + 1, a + 1, n + 1]] (-1)^n;
      ];
    ];
   F[[1]] += 2 Pi Sl[[2]];
   Return[F];
   ];
   CompleteTable[CC, EE, FF];
CompleteTableDD[DD, DDt, CC, EE, FF];
F = FreeEnergyTable[CC, DDt, EE, FF]
\end{verbatim}
\normalsize
\end{multicols}

\bibliography{../bibliographie}

\begin{thebibliography}{10}

\bibitem{derkachovkorchemsky2}
S.~E. Derkachov, G.~P. Korchemsky, J.~Kotanski, and A.~N. Manashov,
  ``Noncompact {H}eisenberg spin magnets from high-energy {QCD}: {II}.
  {Q}uantization conditions and energy spectrum,'' {\em Nucl. Phys. B},
  vol.~645, p.~237, 2002.

\bibitem{lipatov}
L.~N. Lipatov, ``High energy asymptotics of multi-colour {QCD} and exactly
  solvable lattice models,'' {\em JETP Lett.}, vol.~59, p.~596, 1994.

\bibitem{fadeevkorchemsky}
L.~D. Faddeev and G.~P. Korchemsky, ``High energy {QCD} as a completely
  integrable model,'' {\em Phys. Lett. B}, vol.~342, p.~311, 1995.

\bibitem{bethe}
H.~Bethe, ``Zur {T}heorie der {M}etalle,'' {\em Z. Physik}, vol.~71, p.~205,
  1931.

\bibitem{faddeev1}
L.~D. Faddeev, E.~K. Sklyanin, and L.~A. Takhtajan, ``Quantum inverse problem.
  {I},'' {\em Theor. Math. Phys.}, vol.~40, p.~688, 1979.

\bibitem{faddeev2}
L.~D. Faddeev and L.~A. Takhtajan, ``The quantum method of the inverse problem
  and the {H}eisenberg {XYZ} model,'' {\em Russian Math. Surveys}, vol.~34:5,
  p.~11, 1979.

\bibitem{sklyanin1}
E.~K. Sklyanin, ``Quantum version of the method of inverse scattering
  problem,'' {\em J. Sov. Math.}, vol.~19:5, p.~1546, 1982.

\bibitem{baxter}
R.~J. Baxter, {\em Exactly solved models in statistical mechanics}.
\newblock Academic Press, 1982.

\bibitem{derkachovkorchemsky1}
S.~E. Derkachov, G.~P. Korchemsky, J.~Kotanski, and A.~N. Manashov,
  ``Noncompact {H}eisenberg spin magnets from high-energy {QCD}: {I}. {B}axter
  {Q}-operator and separation of variables,'' {\em Nucl. Phys. B}, vol.~645,
  p.~237, 2002.

\bibitem{korchmeskybepom}
G.~P. Korchemsky, ``Bethe ansatz for {QCD} {P}omeron,'' {\em Nucl. Phys. B},
  vol.~443, p.~255, 1995.

\bibitem{derkachovkorchemsky3}
S.~E. Derkachov, G.~P. Korchemsky, J.~Kotanski, and A.~N. Manashov,
  ``Noncompact {H}eisenberg spin magnets from high-energy {QCD}: {III}.
  {Q}uasiclassical approach,'' {\em Nucl. Phys. B}, vol.~661, p.~533, 2003.

\bibitem{affleckwzw}
I.~Affleck, ``Critical behavior of two-dimensional systems with continuous
  symmetries,'' {\em Phys. Rev. Lett.}, vol.~55, p.~1355, 1985.

\bibitem{granetjacobsensaleurxxz}
E.~Granet, J.~L. Jacobsen, and H.~Saleur, ``Analytical results on the
  {H}eisenberg spin chain in a magnetic field,'' {\em J. Phys. A}, vol.~52,
  p.~255302, 2019.

\bibitem{reviewadscft}
N.~Beisert, C.~Ahn, L.~F. Alday, Z.~Bajnok, J.~M. Drummond, L.~Freyhult,
  N.~Gromov, R.~A. Janik, V.~Kazakov, T.~Klose, G.~P. Korchemsky,
  C.~Kristjansen, M.~Magro, T.~McLoughlin, J.~A. Minahan, R.~I. Nepomechie,
  A.~Rej, R.~Roiban, S.~Schafer-Nameki, C.~Sieg, M.~Staudacher, A.~Torrielli,
  A.~A. Tseytlin, P.~Vieira, D.~Volin, and K.~Zoubos, ``Review of {AdS/CFT}
  integrability: an overview,'' {\em Lett. Math. Phys.}, vol.~99, p.~2337,
  2012.

\bibitem{kotikov}
A.~V. Kotikov, ``{DGLAP} and {BFKL} evolution equations in the {N=4}
  supersymmetric gauge theory,'' {\em Nucl. Phys. B}, p.~19, 2003.

\bibitem{integrabilityqcd}
A.~V. Belitsky, V.~M. Braun, A.~S. Gorsky, and G.~P. Korchemsky,
  ``Integrability in {QCD} and beyond,'' {\em Int. J. Mod. Phys. A}, p.~4715,
  2004.

\bibitem{alfimov}
M.~Alfimov, N.~Gromov, and V.~Kazakov, ``{QCD} pomeron from {AdS/CFT} {Q}uantum
  {S}pectral {C}urve,'' {\em JHEP}, p.~164, 2015.

\bibitem{gromovfishnet}
N.~Gromov, V.~Kazakov, G.~Korchemsky, S.~Negro, and G.~Sizov, ``Integrability
  of conformal fishnet theory,'' {\em JHEP}, p.~95, 2018.

\bibitem{esslerfrahmsaleur}
F.~H.~L. Essler, H.~Frahm, and H.~Saleur, ``Continuum limit of the integrable
  {sl(2|1)} {3- bar\{3\}} superspin chain,'' {\em Nucl. Phys. B}, vol.~712,
  p.~513, 2005.

\bibitem{esslerfrahmsaleur2}
H.~Frahm, F.~H.~L. Essler, and H.~Saleur, ``The integrable {sl(2|1)} superspin
  chain and the spin quantum {H}all effect,'' {\em Advances in solid state
  physics}, vol.~45, 2005.

\bibitem{ikhlefnoncompact}
Y.~Ikhlef, J.~L. Jacobsen, and H.~Saleur, ``A staggered six-vertex model with
  non-compact continuum limit,'' {\em Nucl. Phys. B}, vol.~789, 2008.

\bibitem{ikhlef2}
Y.~Ikhlef, J.~L. Jacobsen, and H.~Saleur, ``An integrable spin chain for the
  ${SL}(2,{R})/{U}(1)$ black hole sigma model,'' {\em Phys. Rev. Lett.},
  vol.~108, p.~081601, 2012.

\bibitem{alternating}
V.~V. Bazhanov, G.~A. Kotousov, S.~M. Koval, and S.~L. Lukyanov, ``On the
  scaling behaviour of the alternating spin chain,'' {\em JHEP}, p.~87, 2019.

\bibitem{vernier1}
E.~Vernier, J.~L. Jacobsen, and H.~Saleur, ``Non compact conformal field theory
  and the $a_2^{(2)}$ ({I}zergin-{K}orepin) model,'' {\em J. Phys. A: Math.
  Theor.}, vol.~47, p.~285202, 2014.

\bibitem{vernier2}
E.~Vernier, J.~L. Jacobsen, and H.~Saleur, ``Non compact continuum limit of two
  coupled {P}otts models,'' {\em J. Stat. Mech.}, p.~P10003, 2014.

\bibitem{vernier3}
E.~Vernier, J.~L. Jacobsen, and H.~Saleur, ``A new look at the collapse of
  two-dimensional polymers,'' {\em J. Stat. Mech.: Theor. Exp.}, p.~P09001,
  2015.

\bibitem{robertson}
N.~F. Robertson, J.~L. Jacobsen, and H.~Saleur {\em in preparation}, 2020.

\bibitem{toda}
W.~Toda, ``Wave propagation in anharmonic lattices,'' {\em J. Phys. Soc. Jap.},
  vol.~23, p.~501, 1967.

\bibitem{gaudin}
M.~Gaudin, {\em La fonction d'onde de {B}ethe}.
\newblock Masson, 1983.

\bibitem{chalker}
J.~T. Chalker and P.~D. Coddington, ``Percolation, quantum tunnelling and the
  integer {H}all effect,'' {\em J. Phys. C}, vol.~21, p.~2665, 1988.

\bibitem{frassek}
R.~Frassek, C.~Giardina, and J.~Kurchan, ``Non-compact quantum spin chains as
  integrable stochastic particle processes,'' {\em J. Stat. Phys.}, 2019.

\bibitem{ikhleffendleycardy}
Y.~Ikhlef, P.~Fendley, and J.~Cardy, ``Integrable modification of the critical
  {C}halker-{C}oddington network model,'' {\em Phys. Rev. B}, vol.~84,
  p.~144201, 2011.

\bibitem{couvreurvernierjacobsensaleur}
R.~Couvreur, E.~Vernier, J.~L. Jacobsen, and H.~Saleur, ``On truncations of the
  {C}halker-{C}oddington model,'' {\em Nucl. Phys. B}, vol.~941, pp.~507--559,
  2019.

\bibitem{slavnov}
N.~A. Slavnov, ``Algebraic {B}ethe ansatz,'' {\em arXiv:1804.07350}, 2018.

\bibitem{kulishyu}
P.~P. Kulish, N.~Y. Reshetikhin, and E.~K. Sklyanin, ``Yang-{B}axter equation
  and representation theory: {I},'' {\em Lett. Math. Phys}, vol.~5, p.~393,
  1981.

\bibitem{tarasovtakhtajanfaddeev}
V.~O. Tarasov, L.~A. Takhtajan, and L.~D. Faddeev, ``Local hamiltonians for
  integrable quantum models on a lattice,'' {\em Theor. Math. Phys.}, vol.~57,
  p.~163, 1983.

\bibitem{heisenberg}
W.~Heisenberg, ``Zur {T}heorie der {F}erromagnetismus,'' {\em Z. Phys.},
  vol.~49, p.~619, 1928.

\bibitem{granetjacobsen}
E.~Granet and J.~L. Jacobsen, ``On zero-remainder conditions in the {B}ethe
  ansatz,'' {\em J. High Energ. Phys.}, p.~178, 2020.

\bibitem{gelfand}
I.~Gelfand and M.~Neumark, ``Unitary representations of the {L}orentz group,''
  {\em Acad. Sci. USSR. J. Phys.}, vol.~10, p.~93, 1946.

\bibitem{bargmann}
V.~Bargmann, ``Irreducible unitary representations of the {L}orentz group,''
  {\em Annals of Mathematics}, vol.~48, p.~568, 1947.

\bibitem{gelfandbook}
I.~M. Gelfand, M.~I. Graev, and N.~Y. Vilenkin, {\em Generalized functions.
  {V}ol 5. {I}ntegral geometry and representation theory.}
\newblock New York, NY Academic Press, 1966.

\bibitem{derkachovbaxter}
S.~E. Derkachov, ``Baxter's {Q}-operator for the homogeneous {XXX} spin
  chain,'' {\em J. Phys. A: Math. Gen.}, vol.~32, p.~5299, 1999.

\bibitem{riemann}
B.~Riemann, {\em Grundlagen f\"ur eine allgemeine {T}heorie der {F}unctionen
  einer ver\"anderlichen complexen {G}r\"osse}.
\newblock PhD thesis, G\"ottingen, 1851.

\bibitem{derkachovseparation}
S.~E. Derkachov, G.~P. Korchemsky, and A.~N. Manashov, ``Separation of
  variables for the quantum {SL(2,R)} spin chain,'' {\em JHEP}, vol.~07,
  p.~047, 2003.

\bibitem{braunderkachovmanashov}
V.~M. Braun, S.~E. Derkachov, and A.~N. Manashov, ``Integrability of
  three-particle evolution equations in {QCD},'' {\em Phys. Rev. Lett.},
  vol.~81, p.~2020, 1998.

\bibitem{braunderkachovmanashov2}
V.~M. Braun, S.~E. Derkachov, G.~P. Korchemsky, and A.~N. Manashov, ``Baryon
  distribution amplitudes in {QCD},'' {\em Nucl. Phys. B}, vol.~553, p.~355,
  1999.

\bibitem{gorsky}
A.~S. Gorsky, I.~I. Kogan, and G.~P. Korchemsky, ``High energy {QCD}: stringy
  picture from hidden integrability,'' {\em JHEP}, vol.~0205, p.~053, 2002.

\bibitem{belitsky1}
A.~V. Belitsky, ``Fine structure of spectrum of twist-three operators in
  {QCD},'' {\em Phys. Lett. B}, vol.~453, p.~59, 1999.

\bibitem{belitsky2}
A.~V. Belitsky, ``Integrability and {WKB} solution of twist-three evolution
  equations,'' {\em Nucl. Phys. B}, vol.~558, p.~259, 1999.

\bibitem{belitsky3}
A.~V. Belitsky, ``Renormalization of twist-three operators and integrable
  lattice models,'' {\em Nucl. Phys. B}, vol.~574, p.~407, 2000.

\bibitem{korchmeskyqqcd}
G.~P. Korchemsky, ``Quasiclassical {QCD} pomeron,'' {\em Nucl. Phys. B},
  vol.~462, p.~333, 1996.

\bibitem{korchemskyduality}
G.~P. Korchemsky, ``Integrable structures and duality in high-energy {QCD},''
  {\em Nucl. Phys. B}, vol.~498, p.~68, 1997.

\bibitem{beisert}
N.~Beisert, S.~Frolov, M.~Staudacher, and A.~A. Tseytlin, ``Precision
  spectroscopy in {AdS/CFT},'' {\em JHEP}, vol.~0310, p.~037, 2003.

\bibitem{korchmeskylog}
A.~V. Belitsky, A.~S. Gorsky, and G.~P. Korchemsky, ``Logarithmic scaling in
  gauge/string correspondence,'' {\em Nucl. Phys. B}, vol.~748, p.~24, 2006.

\bibitem{haokharzaeev}
K.~Hao, D.~Kharzeev, and V.~Korepin, ``Bethe ansatz for {XXX} chain with
  negative spin,'' {\em Int. J. Mod. Phys. A}, vol.~34, p.~1950197, 2019.

\bibitem{manashovkirch}
M.~Kirch and A.~N. Manashov, ``Noncompact {SL(2,R)} spin chain,'' {\em JHEP},
  vol.~0406, p.~035, 2004.

\bibitem{dirac}
P.~A.~M. Dirac, ``Unitary representations of the {L}orentz group,'' {\em Proc.
  Roy. Soc.}, vol.~183, p.~284, 1945.

\bibitem{devegalipatov}
H.~J. de~Vega and L.~N. Lipatov, ``Interaction of reggeized gluons in the
  {B}axter-{S}klyanin representation,'' {\em Phys. Rev. D}, vol.~64, p.~114019,
  2001.

\bibitem{janik2}
J.~Wosiek and R.~A. Janik, ``Solution of the odderon problem for arbitrary
  conformal wieghts,'' {\em Phys. Rev. Lett.}, vol.~79, p.~0031, 1997.

\bibitem{janik}
R.~A. Janik and J.~Wosiek, ``A solution of the odderon problem,'' {\em Phys.
  Rev. Lett.}, vol.~82, p.~1092, 1999.

\bibitem{korepin}
V.~E. Korepin, N.~M. Bogoliubov, and A.~G. Izergin, {\em Quantum inverse
  scattering method and correlation functions}.
\newblock Cambridge University Press, 1993.

\bibitem{granetdubailjacobsen}
E.~Granet, L.~Budzynski, J.~Dubail, and J.~L. Jacobsen, ``Inhomogeneous
  {G}aussian {F}ree {F}ield inside the interacting arctic curve,'' {\em J.
  Stat. Mech.}, 2019.

\bibitem{fukuikawakami}
T.~Fukui and N.~Kawakami, ``Spectral flow of non-hermitian {H}eisenberg spin
  chain with complex twist,'' {\em Nucl. Phys. B}, vol.~519, p.~715, 1998.

\bibitem{nohkim}
J.~D. Noh and D.~Kim, ``Finite-size scaling and the toroidal partition function
  of the critical asymmetric six-vertex model,'' {\em Phys. Rev. E}, vol.~52,
  p.~3225, 1996.

\bibitem{yangyang66a}
C.~N. Yang and C.~P. Yang, ``One-dimensional chain of anisotropic spin-spin
  interactions. {I}. {P}roof of {B}ethe's hypothesis for ground state in a
  finite system,'' {\em Phys. Rev.}, vol.~150, p.~321, 1966.

\end{thebibliography}

\bibliographystyle{ieeetr}

\end{document}